\theoremstyle{plain}
\newtheorem{theorem}{Theorem}[section]
\newtheorem{lemma}[theorem]{Lemma}
\newtheorem{corollary}[theorem]{Corollary}
\theoremstyle{definition}
\theoremstyle{remark}
\DeclareMathOperator*{\argmin}{arg\,min}
\title{PALQO: Physics-informed Model for Accelerating Large-scale Quantum Optimization}
\author{%
  Yiming Huang$^{1}$\thanks{Equal contribution.} , Yajie Hao$^{2*}$, Jing Zhou$^5$, Xiao Yuan$^{1}$\thanks{Corresponding authors.} , Xiaoting Wang$^{2\dagger}$, Yuxuan Du$^{3,4\dagger}$\\
  $^1$Center on Frontiers of Computing Studies, \\
  and School of Computer Science, Peking University, Beijing, China\\
  $^2$Institute of Fundamental and Frontier Sciences\\University of Electronic Science and Technology of China, Chengdu, China\\
  $^3$College of Computing and Data Science\\
  Nanyang Technological University, Singapore, Singapore\\
  $^4$School of Physical and Mathematical Sciences\\
  Nanyang Technological University, Singapore, Singapore\\
  $^5$Department of Physics, Fudan University, Shanghai, China\\
  \texttt{yiminghwang@gmail.com, yuxuan.du@ntu.edu.sg,  xiaoyuan@pku.edu.cn}\\
  \texttt{xiaoting@uestc.edu.cn}
}
\begin{document}

\maketitle

\begin{abstract}
Variational quantum algorithms (VQAs) are leading strategies to reach practical utilities of near-term quantum devices. However, the no-cloning theorem in quantum mechanics precludes standard backpropagation, leading to prohibitive quantum resource costs when applying VQAs to large-scale tasks. To address this challenge, we reformulate the training dynamics of VQAs as a nonlinear partial differential equation and propose a novel protocol that leverages physics-informed neural networks (PINNs) to model this dynamical system efficiently. Given a small amount of training trajectory data collected from quantum devices, our protocol predicts the parameter updates of VQAs over multiple iterations on the classical side, dramatically reducing quantum resource costs. Through systematic numerical experiments, we demonstrate that our method achieves up to a 30x speedup compared to conventional methods and reduces quantum resource costs by as much as 90\% for tasks involving up to 40 qubits, including ground state preparation of different quantum systems, while maintaining competitive accuracy. Our approach complements existing techniques aimed at improving the efficiency of VQAs and further strengthens their potential for practical applications.
\end{abstract}

\section{Introduction}\label{Introduction}
Modern quantum computers, with a steadily increasing number of high-quality qubits, are approaching the threshold of practical utility \cite{ai2024quantum,preskill2025beyond,andersen2025thermalization}. In this pursuit, variational quantum algorithms (VQAs)~\cite{wecker2015progress,cerezo2021variational,bharti2022noisy,peruzzo5variational,tran2024variational,grimsley2023adaptive,jager2023universal,ye2023towards,tian2023recent} have emerged as a leading strategy, attributed to their flexibility in accommodating circuit depth and qubit connectivity among different platforms. In recent years, a wide range of theoretical and experimental studies have demonstrated the feasibility of VQAs across diverse applications, such as quantum chemistry \cite{google2020hartree, kumar2023quantum, guo2024experimental}, optimization \cite{farhi2014quantum,Lukinoptexp,abbas2024challenges}, and machine learning~\cite{Havlcek2019,Liu2021,du2023problem,gujju2024quantum,du2025quantum}. Despite the progress, they face critical challenges when applied to large-scale problems. In particular, the no-cloning theorem and the unitary constraints in the quantum universe prohibit the use of backpropagation techniques common in deep learning \cite{abbas2024quantum}, requiring VQAs to update sequentially to minimize a predefined cost function \cite{schuld2019evaluating,wierichs2022general}. This approach imposes substantial and even prohibitive quantum resource demands, especially in terms of the number of measurements required. Given the scarcity of quantum computers in the foreseeable future, enhancing the optimization efficiency of VQAs while minimizing resource consumption is crucial for enabling their practical deployment.

To advance VQAs for large-scale problems, substantial efforts have been devoted to improving the optimization efficiency. Prior literature in this field can be broadly categorized into three primary classes (see Sec.~\ref{sec:related works} for details). The first aims to reduce measurement costs in quantum many-body and chemistry problems by grouping terms in the Hamiltonian to enable simultaneous measurements \citep{verteletskyi2020measurement, yen2023deterministic,  wu2023overlapped}. The second harnesses classical simulators or learning models to identify well-initialized parameters that are close to local minima of the cost landscape for a given VQA, thereby improving convergence efficiency \cite{galda2021transferability, zhang2025diffusion, PRXQuantum.6.010317}. The third seeks to predict the dynamics of parameter updates by revising past optimization trajectories for the given task~\cite{,du2025artificial}, as exemplified by methods such as recurrent neural networks~\cite{verdon2019learning,kulshrestha2023learning} and QuACK~\cite{luo2024QuACK}. Despite significant advancements, no existing approach effectively balances optimization efficiency and accuracy at scale. In this regard, a critical question arises: \textit{is it possible to achieve both for large-scale VQA systems?}

Towards this question, we observe that prior efforts have primarily focused on hardware improvements and heuristic optimizations, while the potential of approximating training dynamics to alleviate the quantum resource burden \textbf{remains underexplored}. In response, we introduce a fresh perspective by utilizing Taylor expansion to reformulate the parameter optimization process in VQAs as a nonlinear partial differential equation (PDE). In this way,  the evolution of this nonlinear PDE corresponds to the trajectory of parameter updates during training. Building on this formulation, we devise a protocol, dubbed PALQO, that employs a physics-informed neural network (PINN)~\cite{raissi2019physics,karniadakis2021physics} to approximate solutions to the PDE, where high-order terms in the Taylor expansion serve as boundary conditions. The proposed framework is generic, which encompasses the quantum neural tangent kernel (QNTK) as a special case \cite{liu2022representation,tang2022graphqntk,you2023analyzing}. Besides, we prove that a polynomial number of training samples is sufficient to ensure that PALQO attains a satisfactory generalization ability.

We then conduct extensive numerical simulations on different ground state preparation tasks, including the transverse-field Ising model, Heisenberg model, and multiple molecule systems, to investigate the effectiveness of PALQO. Simulation results up to 40 qubits indicate that by learning from a limited set of initial parameter updates obtained from quantum devices, PALQO, which is deployed on classical hardware, can accurately predict future parameter updates. These results indicate the effectiveness of reducing the number of quantum measurements required during optimization. Numerical experiments on ground state preparation tasks across large-scale systems, involving up to 40 qubits, validate the effectiveness of PALQO. Moreover, we show that the proposed PALQO is complementary to existing approaches for improving the optimization efficiency of VQAs. To support the community, we release our source code at \cite{codePALQO}. These results open a new avenue for leveraging the power of PINNs to enhance the efficiency of VQAs and advance the frontier of practical quantum computing.

\textbf{Contributions}. For clarity, we summarize our main contributions below. \textbf{(1)} To our best knowledge, we establish the first general framework between the optimization trajectory of VQAs and PDE, thereby enabling the employment of various PINNs to advance the optimization efficiency of large-scale VQAs. \textbf{(2)} We propose PALQO, an effective PINN oriented to reduce the required number of measurements when training large-scale VQAs, and prove its generalization ability. \textbf{(3)}
Unlike prior studies that mainly focus on small-scale tasks, we conduct extensive numerical studies to validate the advancements of PALQO up to $40$ qubits, providing valuable insights to further improve the optimization efficiency of VQAs at scale.

\section{Preliminaries}\label{sec:preliminary}
In this section, we provide a concise overview of the basic concepts of quantum computing, variational quantum algorithms and physics-informed neural networks to set the stage for their integration in accelerating the quantum optimization process. Please refer to Appendix~\ref{sec:appendix basic concepts} for more details.

\textbf{Basics of Quantum Computing.} 
Quantum state, quantum circuit, and quantum measurement are three key components in quantum computing~\cite{nielsen2010quantum,kaye2006introduction}. In particular, an $n$-qubit quantum state is mathematically represented as a unit vector ${\bf{u}}\in \mathbb{C}^{N}$ in Hilbert space, where $\sum_{j=0}^{N-1}|{\bf{u}}_j|^2=1, N=2^n$. Here we follow conventions to use Dirac notation to represent ${\bf{u}}$ and its transpose conjugate  ${\bf{u}}^\dagger$ , i.e., $|\bf{u}\rangle$ and $\langle{\bf{u}}|$. For a quantum circuit, it serves as a computational model consisting of a sequence of quantum gates that describes operations on the given input state. The most widely used quantum gates are Pauli gates, i.e., $\text{X} = \big(\begin{smallmatrix}
  0 & 1\\
  1 & 0
\end{smallmatrix}\big)$, $\text{Z} = \big(\begin{smallmatrix}
  1 & 0\\
  0 & -1
\end{smallmatrix}\big)$, $\text{Y} = \big(\begin{smallmatrix}
  0 & -i\\
  i & 0
\end{smallmatrix}\big)$. According to the Solovay-Kitaev theorem~\cite{nielsen2010quantum}, arbitrary operation can be approximated by a quantum circuit $U=\prod_j U_j$ where each gate $U_j$ is drawn from a finite universal gate set, such as $\{\text{CNOT}, \text{H}, \text{S}, \text{R}_z(\theta), \text{R}_x(\theta)\}$. Concretely, $\text{CNOT}=|0\rangle\langle0|\otimes\mathbb{I}+|1\rangle\langle1|\otimes \text{X}$, 
$\text{H} = 1/\sqrt{2}\big(\begin{smallmatrix}
  1 & 1\\
  1 & -1
\end{smallmatrix}\big)$, $\text{R}_x(\theta)=e^{-i\theta \text{X}}$, $\text{R}_z(\theta)=e^{-i\theta Z}$, $S = \sqrt{Z}$.

For quantum measurement, it is the process that collapses a quantum state into a definite classical outcome.  In this study, we are interested in the expectation value of the measurement outcomes with a given observable $O$, a Hermitian operator, on quantum state $|{\bf{u}}\rangle$, i.e. $\langle {\bf{u}}| O |{\bf{u}}\rangle$. Suppose the observable presents as an $n$-qubit Hamiltonian that characterizes energy structure of the target quantum system in the form of $H=\sum_{j=1}^{N_H} c_j P_j$, where $P_j$ is a tensor product of Pauli matrices, i.e. $P_j\in\{\mathbb{I}, \text{X}, \text{Y}, \text{Z}\}^{\otimes n}$. To experimentally estimate $\langle {\bf{u}}| O |{\bf{u}}\rangle$ within error $\epsilon$, we typically perform $M\sim \mathcal{O}(1/\epsilon^2)$ repeated measurements for each $P_j$ on multiple copies of the state $|{\bf{u}}\rangle$ and get the outcomes $\{\hat{M}_{\bf{u}}^{j,k}\}_{k=1,..,M}$, then approximate the expectation value by statistical averaging $\langle {\bf{u}}| O |{\bf{u}}\rangle = 1/(MN_H)\sum_{j,k} c_j \hat{M}_{\bf{u}}^{j,k}$.

\textbf{Variational Quantum Algorithms.} Variational quantum algorithms (VQAs)  algorithms designed for machine learning tasks are called quantum neural networks (QNNs)~\cite{massoli2022leap,li2022quantum, hu2022quantum, you2023analyzing, monaco2023quantum, zhao2024quantum,huang2024coreset, nguyen2024theory,Peng2025Tit}, while those applied to many-body physics and quantum chemistry are typically known as variational quantum eigensolvers (VQEs)~\cite{ tilly2022variational,cerezo2022variational,harwood2022improving,ralli2023unitary,sato2023variational, kim2024qudit,wang2023symmetric,chan2024measurement}. The primary objective of VQE is to optimize a parameterized state $|\psi(\bm{\theta})\rangle=U(\bm{\theta})|\phi\rangle$ to minimize the energy function $\mathcal{E}$ defined by a given Hamiltonian $H=\sum_{j=1}^{N_H} c_j P_j$. Mathematically, the energy function to be minimized in VQEs takes the form of  
\begin{equation}
\min_{\bm{\theta}}\mathcal{E}(\bm{\theta}) = \langle \psi(\bm{\theta}) | H | \psi(\bm{\theta}) \rangle. 
\end{equation}
A common and widely adopted approach to complete this optimization problem is utilizing a gradient-based optimizer, like gradient descent, to iteratively adjust the parameters $\bm{\theta}$ according to the partial derivative $\partial_{\bm{\theta}}\mathcal{E}$. Because there is no-clone theorem and no backpropagation without exponential classical overhead in general VQEs \cite{abbas2024quantum}, we need to perform the \textit{parameter shift rule} without involving other quantum resource overhead, such as ancillary qubits to estimate the partial derivative \cite{schuld2019evaluating}. Concretely, the calculation of the partial derivative with respect to $\bm{\theta}_i$ takes the form as
\begin{equation}\label{eq:partial derivative}
\frac{\partial\mathcal{E}}{\partial{\bm{\theta}_i}} = \frac{1}{2}\left[\mathcal{E}\left(\bm{\theta}_i+\frac{\pi}{2}\right) - \mathcal{E}\left(\bm{\theta}_i-\frac{\pi}{2}\right)\right]\approx \frac{1}{MN_H}\sum_{j,k}c_j\left[\hat{M}_{\psi_{+}}^{j,k}+\hat{M}_{\psi_{-}}^{j,k}\right],
\end{equation}
where $\psi_+, \psi_-$ correspond to state $|\psi(\bm{\theta}_i+\pi/2)\rangle$ and $|\psi(\bm{\theta}_i-\pi/2)\rangle$, respectively.

While such a method provides a closed-form expression for gradient estimation without requiring additional qubits and can be extended to general VQEs, it necessitates evaluating  $\mathcal{E}$ twice with shifted parameter values at the same position to estimate the gradient of a single parameter. Hence, suppose the dimension of $\bm{\theta}$ is $p$, it requires to perform $\mathcal{O}(pN_H/\epsilon^2)$ measurements to estimate the partial derivative $\partial_{\bm{\theta}}\mathcal{E}$, which becomes \textbf{computationally prohibitive} in large-scale tasks, especially for large molecules as whose $n$-qubits second-quantized Hamiltonian has roughly $N_H\sim\mathcal{O}(n^4)$ terms \cite{tilly2022variational}. Therefore, it requires substantial resources for estimating updated parameters $\bm{\theta}$ during the optimization, which is considered as one major limitation of large-scale VQEs. Similarly, such a scalability issue also arises in QNNs, where the number of measurements required per iteration scales linearly with $p$ and the batch size.

\textbf{Physical-Informed Neural Network.} 
Physics-informed neural networks (PINNs) have become a promising learning-based tool in approximating the solution of partial differential equations (PDEs) \cite{dissanayake1994neural, lagaris1998artificial, han2017deep}.
With the advantages of computational efficiency for solving complex PDE, they have been widely employed in various practical scenarios such as fluid dynamics, battery degradation modeling, disease detection, and complex systems simulation \cite{karniadakis2021physics,okazaki2022physics,bian2023high,gao2024generative,wang2024physics}. PINNs harness the core tool, automatic differentiation, of modern machine learning to efficiently enforce the physical constraints of the underlying PDE. 

For a PDE problem, it can be generally written as $\mathcal{N}[u(\bm{x},t)]= g(\bm{x},t)$
where \(\bm{x}\in \mathcal{D} \subset \mathbb{R}^d\) denotes variables, \( \mathcal{N} \) represents the differential operators, \( u(\bm{x},t) \) stands for the solution, and $g(\bm{x},t)$ refers to input or source function. The aim of PINNs is to build a neural network $f_{\bm{w}}$ with parameters $\bm{w}$ to approximate the true solution $u$. Hence, the loss function of PINN for solving a general PDE is based on \textit{residuals}, including PDE residual and data residual. The PDE residual measures the difference between the neural network solution and the true solution, expressed as
\begin{equation}\label{eq:origin pde residual}
\mathcal{L}_{P}=\sum_j \left|\mathcal{N}\left[f_{\bm{w}}\left(\bm{x}_p^{(j)},t_p^{(j)}\right)\right] - g\left(\bm{x}_p^{(j)},t_p^{(j)}\right)\right|^2, \quad  \mathcal{L}_{D}=\sum_j \left|f_{\bm{w}}\left(\bm{x}_d^{(j)},t_d^{(j)}\right) - u_d^{(j)}\right|^2.
\end{equation}
Here, $\{\bm{x}_p^{(j)},t_p^{(j)}\}_{j=1}^{N_p}$ used in $\mathcal{L}_P$ are selected collocation points for enforcing PDE structure. For $\mathcal{L}_D$, the dataset 
$\{\bm{x}_d^{(j)},t_d^{(j)}, u_d^{(j)}\}_{j=1}^{N_d}$  with $u_d^{(j)}=u(\bm{x}_d^{(j)},t_d^{(j)})$ denote the training data on $u(\bm{x},t)$ \cite{raissi2019physics}. Thus, the total loss is putting all residuals together, i.e. $\mathcal{L}=\mathcal{L}_{P}+\mathcal{L}_{D}$.
By embedding physical principles into the learning process, PINNs serve as a versatile tool that only requires a small amount of data to tackle the computationally complex problem.

\subsection{Related Works}\label{sec:related works}
Prior literature related to improving the optimization efficiency of VQAs can be classified into three main classes, i.e., \textit{measurement grouping}, \textit{initializer design}, and \textit{prediction of training dynamics}. Since the first two classes are complementary to PALQO, we defer the explanations to Appendix~\ref{sec:appendix related works}. 

The third class aims to harness learning models to approximate the training process. Some works inspired by meta-learning utilize the recurrent neural network to learn a sequential update rule in a heuristic manner \cite{verdon2019learning,kulshrestha2023learning}. Nevertheless, the memory bottleneck and training instability of the recurrent neural network would lead to it being underwhelming \cite{chen2022learning}. Recent work proposed QuACK, involving linear dynamics approximation and nonlinear neural embedding, to accelerate the optimization \cite{luo2024QuACK}. However, the prediction phase requires estimating the energy loss of each step to find the optimal parameters, which is not friendly for large-scale problems. To overcome these limitations, the proposed PALQO uniquely approximates VQA training dynamics using a nonlinear PDE, embedding the dynamical laws directly into the learning process. In this way, it offers \textit{deeper physical insight} and achieves \textit{superior performance} through principled model-guided optimization.

\section{PALQO: physics-informed model for accelerating quantum optimization}

In this section, we first formally define the problem of learning the training dynamics of VQAs as nonlinear PDE problems in Sec.~\ref{sec:reformulate}. Then, in Sec.~\ref{sec:implementation}, we introduce PALQO to solve this PDE via a tailored PINN-based model, where the optimized solutions correspond to the optimization trajectory of VQAs, followed by a generalization error analysis. 

\subsection{Reformulating the optimization of VQA as a PDE problem}\label{sec:reformulate}
Recall the optimization of VQEs in Sec.~\ref{sec:preliminary}.  As it is costly in querying $\bm{\theta}^{(t)}$ of each step $t$, it is demanded to develop a protocol that only learns from a few trajectory data $\{\bm{\theta}^{(t)}\}_{t=1}^{\tau}$ to classically predict future steps, thereby avoiding prohibitive resource costs without compromising accuracy. To achieve this goal, we start by revisiting the gradient descent dynamics of VQEs. The updating rules of parameters $\bm{\theta}$ with learning rate $\eta$ at step $t$ is given by
$\delta{\bm{\theta}}=\bm{\theta}^{(t+1)} - \bm{\theta}^{(t)}  = -\eta \partial\mathcal{E}(\bm{\theta}) / \partial{\bm{\theta}}$, where $\partial\mathcal{E}(\bm{\theta}) / \partial{\bm{\theta}}$ is estimated through phase shift rule shown in Eq.~\eqref{eq:partial derivative}.
Suppose $\eta$ is infinitesimally small, the following ordinary differential equation, a.k.a., gradient flow,  characterizes how parameters change in continuous time, i.e.,  
\begin{equation}\label{eq:dynamic-theta}
\frac{\partial \bm{\theta}}{\partial t}  = -\frac{\partial{\mathcal{E}}}{\partial \bm{\theta}}.
\end{equation}
Besides, we can similarly define the dynamics of $\mathcal{E}$ in a general form under Taylor expansion as
\begin{equation}\label{eq:dynamic-E}
\frac{\partial{\mathcal{E}}}{\partial t}=- \sum_i \left(\frac{\partial\mathcal{E}}{\partial \bm{\theta}_i}\right)^2 + \frac{\eta}{2}\sum_{j,k}\frac{\partial^2\mathcal{E}}{\partial{\bm{\theta}_j}\partial{\bm{\theta}_k}}\frac{\partial\mathcal{E}}{\partial \bm{\theta}_j}\frac{\partial\mathcal{E}}{\partial \bm{\theta}_k}+\mathcal{O}(\eta^2),
\end{equation}
where the first term of RHS of Eq.~\eqref{eq:dynamic-E} is termed as quantum neural tangent Kernel (QNTK) \cite{liu2022representation,tang2022graphqntk,yu2024expressibility}, which captures the sensitivity of outputs to parameter changes, shaping how the gradient flow evolves in parameter space. The second term involves the Hessian matrix of $\mathcal{E}$ that reflects the local curvature of the cost function in the optimization landscape. Thus, tackling the problem of learning optimization trajectory can be recast to solve PDEs presented in Eqs.~\eqref{eq:dynamic-theta} and \eqref{eq:dynamic-E}. This reformulation provides a powerful framework and allows us to leverage a rich set of tools developed for PDEs to understand the underlying behavior of the optimization process in large-scale VQEs. 

\textbf{Remark}. In Appendix~\ref{sec:appendix qntk}, we elucidate the derivation of Eq.~\eqref{eq:dynamic-E} and its relation to QNTK. Besides,  the above reformulation can be effectively extended to broader VQAs such as QNNs.

\subsection{Implementation of PALQO}\label{sec:implementation}

In light of the above reformulation, we propose  PALQO  based on PINN introduced in Eq.~\eqref{eq:origin pde residual} to learn the optimization trajectory of large-scale VQEs. Conceptually, once PALQO attains a low training error, it can predict the optimization path, which substantially reduces the measurement cost as considered in large-scale VQEs. As such, it enhances scalability and resource efficiency by minimizing the need for extensive quantum circuit evaluations while maintaining high fidelity in modeling complex quantum optimization. 

\begin{figure*}[t]
\begin{center}
\centerline{\includegraphics[width=0.92\textwidth]{./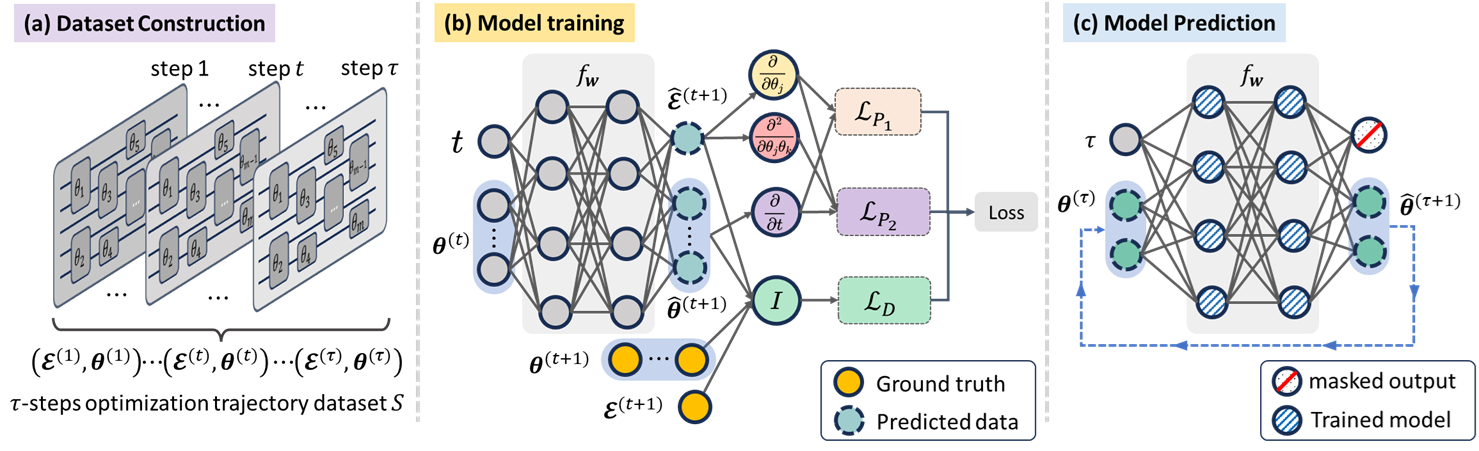}}
\caption{\textbf{Framework of PALQO}. (a) Dataset construction: the tunable angles $\bm{\theta}$ and corresponding loss $\mathcal{E}$ from a VQE over $\tau$ optimization steps are collected, forming a sequential training dataset that captures the optimization trajectory over time. (b) Model training: A PINN $f_{\bm{w}}$ is trained by minimizing the loss $\mathcal{L}=\lambda_{P_1} \mathcal{L}_{P_1}+\lambda_{P_2}\mathcal{L}_{P_2}+\lambda_{D}\mathcal{L}_{D}$. (c) Prediction: starting with the last step $\bm{\theta}^{(\tau)}$, the trained $f_{\bm{w}}$ is used to recurrently predict parameters,  mimicking the optimization process without access to the quantum device.} 
\label{fig:Fig_work_flow}
\end{center}
\end{figure*}

An overview of our protocol is depicted in  Fig.~\ref{fig:Fig_work_flow}, which consists of consists of three stages: \textit{Dataset construction, Model training, and Model prediction}. In Fig.~\ref{fig:Fig_work_flow}~(a), it starts by generating a dataset  $\mathcal{S} = \{\bm{\theta}^{(t)}, \mathcal{E}^{(t)}\}_{t=1}^\tau$ corresponding to trajectory data consisting of $\bm{\theta}$ and energy loss $\mathcal{E}$ over $\tau$ optimization steps collected from a quantum device. Then, as shown in Fig.~\ref{fig:Fig_work_flow}~(b), PALQO utilizes the collected $\mathcal{S}$ to train a PINN-based model $f_{\bm{w}}$ to capture the underlying optimization dynamics. Given the trained $f_{\bm{w}}$, as shown in Fig.~\ref{fig:Fig_work_flow}~(c), it can recursively predict the parameters $\bm{\theta}$ of the future steps until it convergence, i.e. $|\bm{\theta}^{(t)} - \bm{\theta}^{(t+1)}| + |\mathcal{E}^{(t)} - \mathcal{E}^{(t+1)}|\leq \varsigma$ with $\varsigma$ being a small constant.

\textbf{Dataset construction.} To mimic the dynamic of VQE optimization, we need first perform $\tau$ steps optimization via gradient descent to collect a sequential trajectory as the training data, including loss $\{\mathcal{E}^{(1)}, \mathcal{E}^{(2)}, \dots, \mathcal{E}^{(\tau)}\}$ and $\{\bm{\theta}^{(1)}, \bm{\theta}^{(2)}, \dots, \bm{\theta}^{(\tau)}\}$ where $\bm{\theta}^{(j)}=(\bm{\theta}_{1}^{(j)},\cdots,\bm{\theta}_{p}^{(j)})\in \mathbb{R}^p$ is the parameters at $j$-th epoch. Notably, assume a VQE with a total $T$ optimization steps, PALQO only needs $\tau \ll T$ steps to construct the training dataset $\mathcal{S}$. This is because the PINN-based model leverages strong inductive biases from the dynamical laws, such that it does not need to infer fundamental principles from scratch, allowing the model to generalize well from limited data.

\textbf{Model training.} Once the dataset is collected, PALQO employs a deep neural network $f_{\bm{w}}$ that takes  $\{(t,\bm{\theta}^{(t)})\}_{t=1}^{\tau}$ comprising $t$-th time step and trajectory data $\bm{\theta}^{(t)}$ as the input, and predict the loss and parameters for the $(t+1)$-th step, i.e.,$(\mathcal{E}^{(t+1)}, \bm{\theta}^{(t+1)})$.  Refer to Appendix~\ref{sec:implementation_detail} for the details about the neural architecture of  $f_{\bm{w}}$. Denote the prediction of  $f_{\bm{w}}$ for the $t$-th step as $(\hat{\bm{\theta}}^{(t)}, \hat{\mathcal{E}}^{(t)})$. Through leveraging the automatic differentiation capabilities of neural networks, the derivatives of outputs with respect to inputs, i.e., $\partial \hat{\bm{\theta}}/\partial t$ and $\partial \hat{\mathcal{E}}/\partial \bm{\theta}$, can be efficiently computed on classical devices. This efficiency enables the direct incorporation of dynamical law constraints, as described in Eqs.~\eqref{eq:dynamic-theta} and~\eqref{eq:dynamic-E}, into the loss function. 
In this regard, we devise the loss function of PALQO as 
\begin{equation}\label{eqn:overall}
\mathcal{L} = \lambda_D  \mathcal{L}_D + \lambda_{P_1} \mathcal{L}_{P_1} + \lambda_{P_2} \mathcal{L}_{P_2},
\end{equation}
where $\{\lambda_D,\lambda_{P_1},\lambda_{P_2}\}$ denote  hyperparameters to balance the data-driven loss $\mathcal{L}_D$ and two PDE residual losses $\mathcal{L}_{P_1}$ and $\mathcal{L}_{P_2}$. In particular, the data residual loss is defined as $\mathcal{L}_D = \sum_{t=1}^\tau (\mathcal{E}^{(t)} - \hat{\mathcal{E}}^{(t)})^2+\sum_{t=1}^\tau ( \bm{\theta}^{(t)} - \hat{\bm{\theta}}^{(t)})^2$, aiming to capture the temporal changes of $\mathcal{E}$ and $\bm{\theta}$ among $\tau$ steps.
Meanwhile, the two PDE residual losses aim to enforce the PINN to capture the evolution of VQE optimization dynamics through the underlying derivative structure of the loss landscape. Following Eq.~\eqref{eq:dynamic-theta}, the explicit form of the first PDE loss is $\mathcal{L}_{P_1}=\sum_{t=1}^\tau (\sum_{j=1}^p (\partial \hat{\bm{\theta}}^{(t)}_j/ \partial t + \partial \hat{\mathcal{E}}^{(t)} / \partial \bm{\theta}^{(t)}_j  ))^2$. 
By focusing on the first two orders of the derivatives in Eq.~\eqref{eq:dynamic-E}, the second PDE loss yields
\begin{equation}\label{eq:Lp2}
\mathcal{L}_{P_2}=\sum_{t=1}^\tau\Big(\frac{\partial \hat{\mathcal{E}}^{(t)}}{\partial t}+\sum_{j=1}^p \Big(\frac{\partial \hat{\mathcal{E}}^{(t)}}{\partial \bm{\theta}^{(t)}_j}\Big)^2 - \frac{\eta}{2}\sum_{j,k=1}^p \frac{\partial^2 \hat{\mathcal{E}}^{(t)}}{\partial \bm{\theta}^{(t)}_j\partial \bm{\theta}^{(t)}_k} \frac{\partial \hat{\mathcal{E}}^{(t)}}{\partial \bm{\theta}^{(t)}_j}\frac{\partial \hat{\mathcal{E}}^{(t)}}{\partial \bm{\theta}^{(t)}_k}\Big)^2.
\end{equation}
The model $f_{\bm{w}}$ is optimized by minimizing $\mathcal{L}$ in Eq.~\eqref{eqn:overall} via a gradient-based optimizer Adam \cite{kingma2014adam}. 

\textbf{Model prediction.} As the trained  $f_{\bm{w}}$ can not only approximate the solution of the underlying PDE but also capture temporal dependencies of the trajectory data, we are able to recurrently predict the upcoming updates of the $\bm{\theta}$. 
As shown in Fig.~\ref{fig:Fig_work_flow}~(c), by passing $(\tau,\bm{\theta}^{(\tau)})$ through the trained $f_{\bm{w}}$ and masking the $\hat{\mathcal{E}}^{(\tau)}$ node, we can obtain the predicted data $\hat{\bm{\theta}}^{(\tau+1)}$, and then fed $(\tau+1,\hat{\bm{\theta}}^{(\tau+1)})$ as the input back to the network to make the followed prediction. It is worth noting that directly making long-term forecasts to reach the optimal solution of the target problem is exceedingly challenging. To that end, we employ the non-overlapping sliding windows to enhance the network for long-term prediction. More details on the prediction process can be found in Appendix~\ref{sec:implementation_detail}.

\textbf{Remark.} The second PDE residual loss $\mathcal{L}_{P_2}$ can be extended to arbitrary higher orders. Empirical results indicate that a second-order approximation offers a sufficient balance between accuracy and computational cost for modeling the optimization dynamics of the VQEs studied herein. Moreover, while we mainly focus on learning the training process of VQE, our model can be efficiently extended to more general tasks such as quantum machine learning \cite{Havlcek2019,Liu2021,gujju2024quantum}, quantum simulation \cite{yuan2019theory,endo2020variational,kokail2019self}, and quantum optimization \cite{amaro2022filtering,diez2023multiobjective} by slightly modifying the Eqs.~\eqref{eq:dynamic-theta} and~(\ref{eq:dynamic-E}). See Appendix~\ref{sec:appendix additional experiments}\textcolor{red}{} for details.

Building upon prior work on the error analysis of PINNs \cite{de2022error}, we conduct the analysis of the Lipschitz constant bound for PALQO and derive a corollary to establish a generalization bound for PALQO applied to VQEs. An informal statement of the derived generalization bound is provided below, where the formal statement and the related proof are deferred to   Appendix~\ref{sec:appendix theoretical results}.

\begin{corollary}\label{thm:informal main}
(Informal) When utilizing PALQO, whose PINN is constructed by a $L$ layer tanh neural network with most $W$ width of each layer and trained over $\tau$ data samples, to approximate the solution of PDE that describes training dynamics of a VQE with $p$ tunable parameters $\bm{\theta}$, with probability at least $1-\gamma$, its the generalization error is upper bounded by 
\begin{equation}
  \tilde{\mathcal{O}}\left(\sqrt{\frac{pL^2W^2}{\tau}\left(\ln\left(\frac{p}{\epsilon}\right)+\ln\left(\frac{1}{\gamma}\right)\right)}\right).
\end{equation}
\end{corollary}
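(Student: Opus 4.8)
The plan is to obtain this corollary from a formal generalization theorem, stated and proved in Appendix~\ref{sec:appendix theoretical results}, via a standard uniform-convergence argument: bound the generalization error by the gap between the population PINN risk and its empirical counterpart $\mathcal{L}$ of Eq.~\eqref{eqn:overall}, and control this gap uniformly over the hypothesis class of $\tanh$ PINNs through a covering-number estimate. Two ingredients are needed: (i) a bound on the effective capacity of the network class---its number of trainable parameters is $n_{\bm{w}}=\mathcal{O}(LW^2)$---and (ii) a bound on the Lipschitz constant of the map $\bm{w}\mapsto\mathcal{L}(\bm{w})$, which is the delicate part because $\mathcal{L}$ involves derivatives of $f_{\bm{w}}$ up to second order through $\mathcal{L}_{P_1}$ and $\mathcal{L}_{P_2}$. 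Following the structure of \cite{de2022error} but adapting it to the VQE gradient-flow system and to the nonstandard residuals $\mathcal{L}_{P_1},\mathcal{L}_{P_2}$, the uniform bound then combines with a Hoeffding/union-bound concentration step to produce the stated rate.

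First I would fix a bounded domain for the inputs $(t,\bm{\theta})$ and a bounded ball $\|\bm{w}\|\le B$ for the parameters, and record that $\tanh$ together with its first three derivatives is globally bounded; this makes $f_{\bm{w}}$ and $\partial_t f_{\bm{w}},\ \partial_{\bm{\theta}}f_{\bm{w}},\ \partial^2_{\bm{\theta}}f_{\bm{w}}$ uniformly bounded on the domain, so the integrand of each residual in $\mathcal{L}$ is bounded by a $\mathrm{poly}(p,L,W)$ quantity---a prerequisite for the concentration step. Next I would bound the Lipschitz constant $\mathrm{Lip}_{\bm{w}}$ of $\bm{w}\mapsto\mathcal{L}(\bm{w})$ by a layerwise sensitivity analysis: perturbing a single weight propagates through all $L$ layers of $f_{\bm{w}}$ and, for the derivative terms in $\mathcal{L}_{P_1}$ and $\mathcal{L}_{P_2}$, through the associated forward/backward derivative graphs, so each summand---$\partial_t\hat{\mathcal{E}}$, $(\partial_{\bm{\theta}_j}\hat{\mathcal{E}})^2$, and the cubic Hessian$\,\times\,$gradient$\,\times\,$gradient term $\partial^2_{\bm{\theta}_j\bm{\theta}_k}\hat{\mathcal{E}}\,\partial_{\bm{\theta}_j}\hat{\mathcal{E}}\,\partial_{\bm{\theta}_k}\hat{\mathcal{E}}$---acquires a factor polynomial in $L$ and $W$, while the sums over the $p$ and $p^2$ index ranges supply the explicit $p$. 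With $\mathrm{Lip}_{\bm{w}}=\mathrm{poly}(p,L,W)$ in hand, a radius-$B$ ball in $\mathbb{R}^{n_{\bm{w}}}$ admits a $\delta$-net of cardinality $\mathcal{N}(\delta)\le(3B\,\mathrm{Lip}_{\bm{w}}/\delta)^{n_{\bm{w}}}$, hence $\ln\mathcal{N}(\delta)\lesssim n_{\bm{w}}\ln(\mathrm{poly}(p,L,W)/\delta)$.

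I would then combine a Hoeffding bound for each fixed net element with a union bound over the net: with probability at least $1-\gamma$, $\sup_{\bm{w}}|R(\bm{w})-\widehat{R}(\bm{w})|\le 2\,\mathrm{Lip}_{\bm{w}}\,\delta+c\sqrt{(\ln\mathcal{N}(\delta)+\ln(1/\gamma))/\tau}$, with $c$ depending on the bounded range of the loss. Choosing the covering scale $\delta\propto\epsilon$---matching the precision to which the trajectory data collected from the quantum device are resolved (Sec.~\ref{sec:preliminary}), below which refining the net gains nothing---so that the discretization term $\mathrm{Lip}_{\bm{w}}\delta$ is negligible, substituting $n_{\bm{w}}=\mathcal{O}(LW^2)$ together with $\ln\mathcal{N}(\delta)\lesssim LW^2\ln(\mathrm{poly}(p,L,W)/\epsilon)$, and finally absorbing all $\mathrm{polylog}(L,W)$ factors into $\tilde{\mathcal{O}}(\cdot)$ while collecting the surviving capacity factors (the raw parameter count $LW^2$, the extra $p$ from the $p$-fold residual sums, and the extra $L$ from the depth dependence of the derivative-term Lipschitz bounds) into $pL^2W^2$, yields the claimed bound $\tilde{\mathcal{O}}(\sqrt{pL^2W^2(\ln(p/\epsilon)+\ln(1/\gamma))/\tau})$.

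The main obstacle is step (ii): controlling $\mathrm{Lip}_{\bm{w}}$ through the higher-order derivative terms, above all the cubic Hessian$\,\times\,$gradient$\,\times\,$gradient contribution in $\mathcal{L}_{P_2}$, which is a product of three objects each itself a derivative of a deep network; a naive triangle-inequality expansion produces a proliferation of terms, and one must argue---using the boundedness of $\tanh^{(k)}$ for $k\le 3$ and the bounded-norm assumption on $\bm{w}$---that the total stays polynomial in $(p,L,W)$ rather than exponential in $L$. This is also precisely why \cite{de2022error} cannot be invoked verbatim: it treats scalar-output PINNs with standard first/second-order operators on a fixed PDE, whereas here the ``PDE'' is the VQE gradient flow of Eqs.~\eqref{eq:dynamic-theta}--\eqref{eq:dynamic-E} with a $p$-dimensional state and the residuals $\mathcal{L}_{P_1},\mathcal{L}_{P_2}$, so both the Lipschitz and covering estimates must be re-derived with the $p$-dependence tracked throughout.
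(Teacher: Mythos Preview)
Your overall strategy---a covering-number argument combined with Hoeffding plus a union bound, driven by a Lipschitz estimate for $\bm{w}\mapsto\mathcal{L}(\bm{w})$ through the second-order residual $\mathcal{L}_{P_2}$---is precisely the paper's approach (Corollary~\ref{thm:ge_pinn} via Lemma~\ref{lemma:lipsitch_E}). Two points in your accounting, however, do not line up with how the constants are actually assembled.

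First, the parameter count. You take $n_{\bm{w}}=\mathcal{O}(LW^2)$, but PALQO's network has input and output dimension $p{+}1$ (it maps $(t,\bm{\theta}^{(t)})$ to $(\hat{\mathcal{E}}^{(t)},\hat{\bm{\theta}}^{(t+1)})$), so the first and last layers alone contribute $\mathcal{O}(pW)$ weights; the paper uses the crude bound $m\le 2pLW^2$. This is the actual source of the multiplicative $p$ in the prefactor. Your attribution of the $p$ to ``the $p$-fold residual sums'' does not work in this framework: those sums feed into the Lipschitz constant and the loss range, but in the covering bound the Lipschitz constant sits \emph{inside} the logarithm and can contribute only $\ln p$, while the loss range (the Hoeffding constant $c$) would contribute a power of $p$ different from one.

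Second, the Lipschitz target. You set out to keep $\mathrm{Lip}_{\bm{w}}$ polynomial in $(p,L,W)$ ``rather than exponential in $L$,'' but the layerwise chain rule through $L$ layers unavoidably produces products of $L$ bounded factors, and Lemmas~\ref{lemma:bound_J_tanhNN}--\ref{lemma:lipsitch_E} indeed give $\mathrm{Lip}_{\bm{w}}=\mathcal{O}(\mathrm{poly}(p,a^L,W^L))$, which \emph{is} exponential in $L$. This is harmless precisely because it lives inside the log: $\ln(\mathrm{poly}(a^L,W^L))=\mathcal{O}(L)$, and that is where the extra $L$ turning $pLW^2$ into $pL^2W^2$ comes from. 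Aiming for a polynomial-in-$L$ Lipschitz bound is both unnecessary and unattainable; once you allow the exponential-in-$L$ constant and then take its logarithm, the accounting closes.
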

The achieved results indicate that for any $\epsilon>0$, the number of training examples scales at most polynomially in $p$,  $L$, and   $W$ is sufficient to guarantee a well-bounded generalization error. This warrants the applicability of PALQO in large-scale scenarios.

\section{Experiments}

To evaluate the practical performance of the proposed PALQO framework, we apply it to two representative quantum applications: finding ground state energy of many-body quantum system and molecules in quantum chemistry, which have broad applicability in understanding many-body physical phase transitions \cite{zhang2017observation,liu2024noise,kamakari2025experimental} and simulation of complex electronic structures of molecular systems in drug design and discovery \cite{guo2024experimental,wu2024qvae,vendrell2024designing}. The concrete settings are elucidated below.

\textbf{Many-body quantum system.}
A many-body quantum system consists of interacting quantum particles whose collective behavior and correlations lead to complex phenomena beyond single-particle descriptions. Here, we consider three typical many-body quantum systems.
1)~{\bf{Transverse-field Ising model (TFIM)}} describes spin particles on a lattice interacting via nearest-neighbor coupling and subject to a transverse magnetic field, whose Hamiltonian is typically in a form of $H_{\text{TFIM}} = -J \sum_{j} Z_j \otimes Z_{j+1} - h \sum_j X_j$, where $Z_j$ and $X_j$ refer to the Pauli matrices $Z$ and $X$ applied on the $j$-th qubit, respectively.
2)~{\bf{Quantum Heisenberg model}} also describes the spin particles on a lattice, but spin-spin interactions occur along all spatial directions. Its Hamiltonian can be represented as $H_{\text{QH}}=-1/2\sum_j (J_x X_{j}X_{j+1} + J_y Y_{j}Y_{j+1} + J_z Z_{j}Z_{j+1} + hZ_j)$.
3)~{\bf{Bond-alternating XXZ model}} is an anisotropic variant of the Heisenberg model with unequal coupling strengths in the transverse and longitudinal directions. The Hamiltonian is given by $H_{\text{XXZ}}=\sum_{j=odd}J(X_{j}X_{j+1} + Y_{j}Y_{j+1} + \delta Z_{j}Z_{j+1})+ \sum_{j=even} J'(X_{j}X_{j+1} + Y_{j}Y_{j+1} + \delta Z_{j}Z_{j+1})$. The coefficients $J, h, J_x, J_y, J_z, J', \delta$ within the explored Hamiltonians represent the coupling strength that determines the ground state properties and phase transitions.

\textbf{Quantum chemistry.} The Hamiltonian of a molecule describes the total energy of its electrons and nuclei and serves as the fundamental operator for determining the molecule’s electronic structure in quantum chemistry. The general form of the molecule Hamiltonian can be presented as $H=\sum_j h_j P_j$ where $P_j$ represents tensor products of Pauli matrices, and $h_j$ are the associated real coefficients. Here, we select a widely studied molecule---LiH, and a relatively large and challenging BeH\(_2\) molecule as the target molecules \cite{tilly2022variational,cadi2024folded,tran2024variational}. We generate these molecule Hamiltonians with Openfermion  \cite{Openfermion}. Refer to Appendix~\ref{sec:appendix details of experiments} for more details about the molecule experiments.

\subsection{Experimental Setup}

We employ three standard ansatzes, i.e., hardware-efficient ansatz (HEA) \cite{leone2024practical,wang2024entanglement,zhuang2024hardware}, Hamiltonian variable ansatz (HVA) \cite{wiersema2020exploring,park2024hamiltonian,wang2025imaginary}, unitary coupled cluster with single and double excitations ansatz (UCCSD) \cite{xia2020qubit,boy2025energy,hu2025unitary}, to implement VQE for the different Hamiltonians mentioned above. These ansatzes adopt a layered architecture. Refer to  Appendix~\ref{sec:appendix details of ansatz} for the implementation of these ansatzes. For all ansatzes, their initial parameters $\bm{\theta}^{(0)}$ are uniformly sampled from $[0,1]$, following the strategy adopted in QuACK \cite{luo2024QuACK}. The gradient descent is set as the default optimizer.

For implementing PALQO, we randomly initialize the parameters $\bm{w}$ of the neural network $f_{\bm{w}}$ from $[-1,1]$ and employ the Adam as the optimizer, where the architectures of $f_{\bm{w}}$ are listed in Appendix~\ref{sec:appendix nn architecture}. To improve the training stability and convergence, we utilize a linear decay strategy to adaptively adjust the learning rate during training. Besides, the weight hyperparameters in loss function $\lambda_{P_1}, \lambda_{P_2}, \lambda_{D}, $ are set as $1.0, 1.0$ and $10^{-4}$, respectively. 

\textbf{Benchmark models.} To show the outperformance of PALQO against the state-of-the-art methods, we introduce the following baseline and benchmark. 
First, we use a \textit{vanilla VQE} as the baseline since it provides a well-established reference point to evaluate improvements in accuracy, convergence, and efficiency. 
Second, we select a \textit{LSTM-based model}~\cite{verdon2019learning} as a benchmark since it provides a strong reference for evaluating methods in modeling the temporal dependencies and iterative dynamics of optimization trajectories.
Third, we pick \textit{QuACK}~\cite{luo2024QuACK} as another benchmark as it represents an advanced approach that learns surrogate dynamics of VQAs by embedding Koopman operator-based linear representations into nonlinear neural networks. The implementation of these reference models is deferred to Appendix~\ref{sec: appendix details of benchmarks}.

\textbf{Evaluation metrics.} To quantify the performance of PALQO in accuracy and efficiency, we consider the following metrics,
1)~{\bf{Accuracy.}} we define the accuracy as how close the estimated energy $\hat{E}$ is to a given target energy $E$ of a quantum system, i.e. $\Delta E = |\hat{E} - E|$.
2)~{\bf{Efficiency.}} we define the speedup ratio as $\alpha = {\mathcal{I}_{P}}/{\mathcal{I}_{V}}$, where $\mathcal{I}_{P}$ and $\mathcal{I}_{V}$ refer to the number of iterations required by the baseline method (vanilla VQE) and PALQO or other benchmark models, respectively, to achieve an acceptable accuracy $a$. Specifically, we set $a \le 10^{-3}$.

\subsection{Experimental Results}

We next evaluate the performance of PALQO and other reference models when applied to the aforementioned Hamiltonians under different settings.

\textbf{PALQO significantly reduces the measurement overhead.} 
Here, we utilize the number of measurements incurred during the optimization as a quantum resource measure to explore the performance of PALQO and the other benchmark models when applied to 20 qubits TFIM with HEA, 20 qubits Heisenberg model with HVA, and 14 qubits BeH\(_2\) with UCCSD ansatz, The number of parameters for each case are $120,180,90$, respectively.
As shown in Tab.~\ref{tab:measurement shots}, PALQO achieves significant quantum resource efficiency in aforementioned tasks, 
with around 90\% average reduction in measurement overhead while preserving $\Delta E$ around $10^{-3}$.
\begin{wraptable}{r}{9cm}
\caption{\small The number of quantum measurement shots ($\times 10^8$) required for TFIM, Heisenberg model, and BeH\(_2\). }
\label{tab:measurement shots}
\begin{center}
\begin{small}
\begin{sc}
\begin{tabular}{lccccr}
\toprule
System size & $H_{\text{TFIM}}=20$ &  $H_{\text{HQ}}=20$ & $H_{\text{BeH\(_2\)}}=14$ \\
\midrule
Vanilla VQE & $10.97$ & $21.66$ & $464.3$ \\
LSTM  & $3.126$ & $14.49$ & $312.4$\\
QuACK & $5.217$ & $14.15$ & $461.8$\\
\midrule
PALQO & $\bm{1.535}$ & $\bm{5.749}$ & $\bm{28.01}$\\
\bottomrule
\end{tabular}
\end{sc}
\end{small}
\end{center}
\end{wraptable}
These substantial savings stem from two key factors: 1) compared to the vanilla VQE, PALQO leverages PINN to predict parameter updates, thereby reducing reliance on frequent quantum measurements; 2) the rapid convergence on the classical side enables further reduction in quantum resource expenditure.

\begin{figure}[!ht]
  \centering
  \includegraphics[width=\textwidth]{./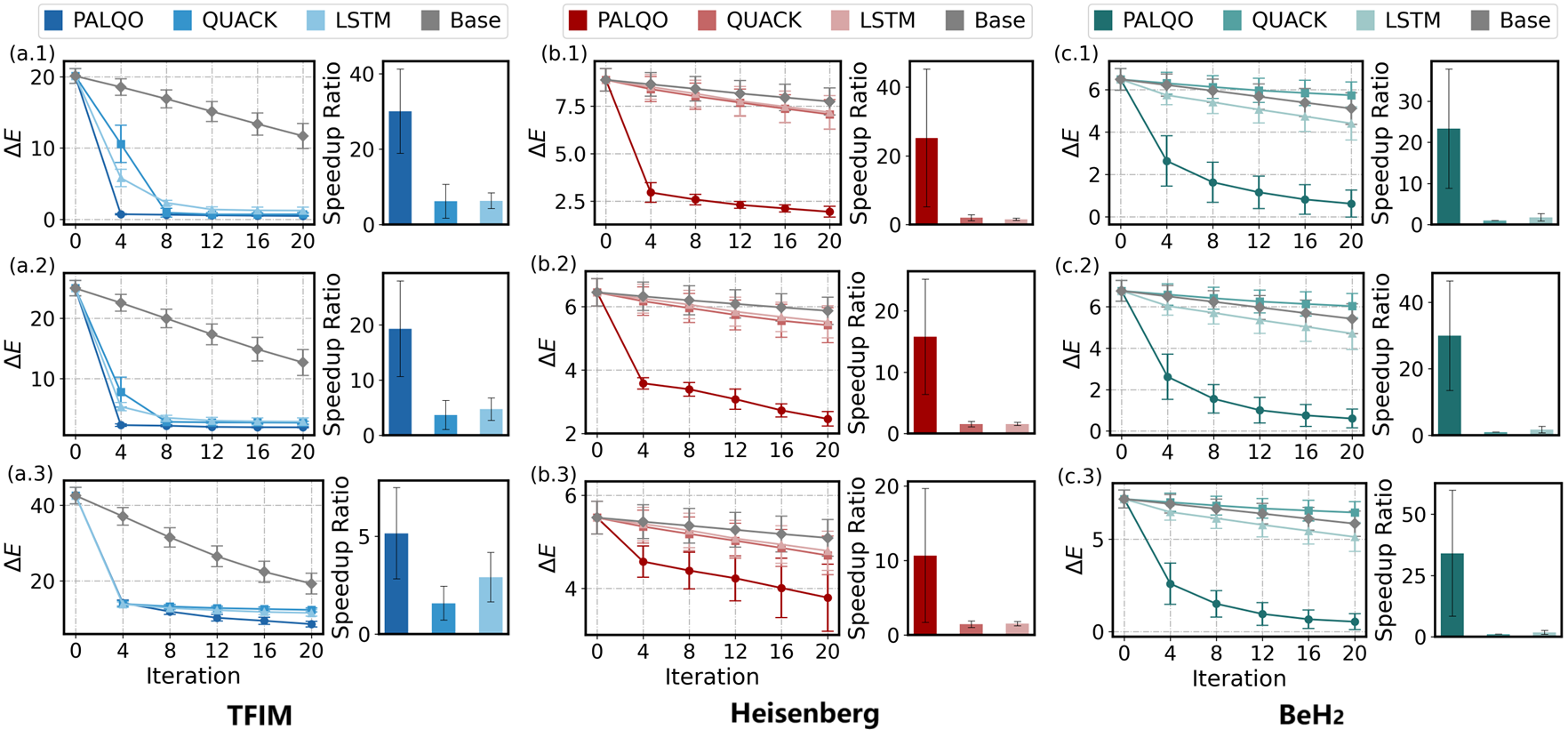}
   \caption{\small Performance comparison between PALQO and the reference models in 20 qubits TFIM with HEA, 20 qubits Heisenberg model with HVA, and 14 qubits BeH\(2\) with UCCSD ansatz. Each subplot comprises a $\Delta E$ curve over iterations performed on a quantum device, along with a bar chart depicting the speedup ratios achieved by PALQO and competing models. The left column illustrates results for TFIM with $J/h=\{2,1,0.5\}$. The central column shows results for Heisenberg model with $J_x=J_y=h=1, J_z=\{0.5, 1, 2\}$. The right column displays the model performance on BeH\(2\) with the bond length $b = \{1.3,1.4,1.5\}$. }
     \label{fig:performance comparison}
\end{figure}

\textbf{PALQO outperforms benchmark models in accuracy and efficiency.} The performance comparisons of PALQO on 20 qubits TFIM with HEA, 20 qubits Heisenberg model with HVA, and 14 qubits BeH\(2\) with UCCSD ansatz for varying structural parameters are presented in Fig.~\ref{fig:performance comparison}. 
In particular, we observed that PALQO consistently outperformed, up to 30x speedup and lower $\Delta E =|\hat{E} - E|$ around $10^{-3}$, like the case of TFIM with $J/h=2$ and HEA ansatz, compared to the other evaluated approaches. 
Furthermore, as the PALOQ  predicts the future optimization steps on classical hardware, it exhibited a faster rate of convergence, achieving a substantial reduction in $\Delta E$ within fewer iterations performed on a quantum device, compared to the baseline methods.
Although the speedup ratio of PALQO has a relatively large variance, its minimum value remains comparable to the average performance of the other approaches. Refer to Appendix~\ref{sec:appendix additional experiments} for the results of XXZ model and LiH. 

\begin{figure*}[!h]
    \centering
    \noindent\makebox[\textwidth][c]{%
        \begin{tabular}{@{}cc@{}}
            \begin{minipage}[t]{0.5\textwidth}
                \noindent\includegraphics[width=7cm, height=3cm]{./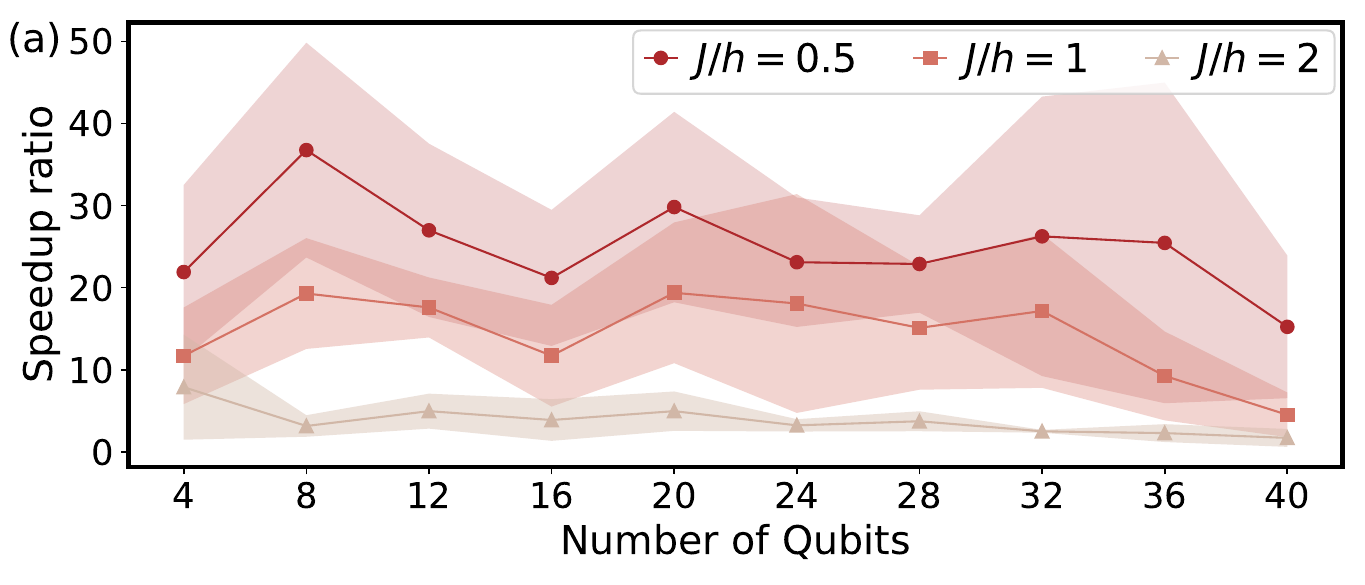}
            \end{minipage} &
            \begin{minipage}[t]{0.5\textwidth}
                \noindent\includegraphics[width=7cm, height=3cm]{./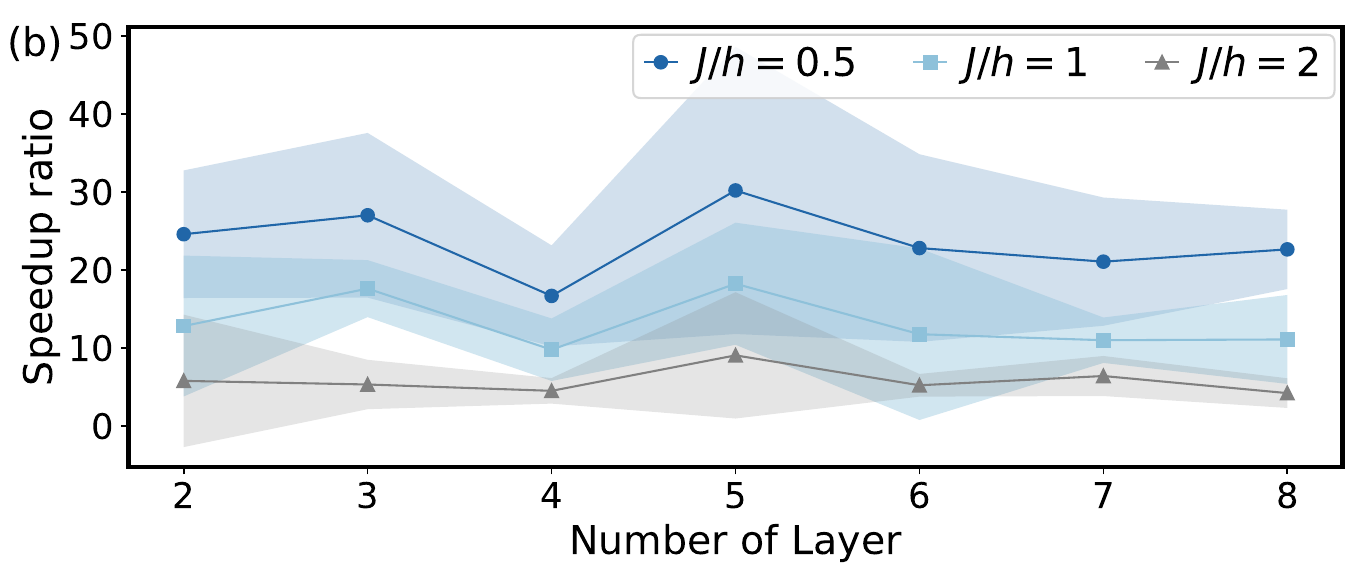}
            \end{minipage} \\
        \end{tabular}%
    }
    \caption{\small Scalability analysis of PALQO on TFIMs. (a) The speedup ratio achieved by PALQO in modeling VQE training dynamics with a fixed HEA, ranging from 4 to 40 qubits. (b) The speedup ratio of PALQO with a fixed system size of 12 qubits, assessed under increasing HEA ansatz layers from 2 to 8.}
    \label{fig:scalability}
\end{figure*}

\textbf{Scalability of PALQO.} We next investigate the scalability of PALQO on the TFIM with HEA, examining its performance in increasing system sizes (from $n=4$ to $n=40$) and the number of ansatz layers (from $2$ to $8$).
In Fig.~\ref{fig:scalability}, the results reveal that the speedup of PALQO is contingent upon the specific system configuration.
Nevertheless, as shown in Fig.~\ref{fig:scalability}~(a), while the speedup ratio fluctuates with the number of qubits varying, it still achieves up to 30x speedup when $J/h=0.5$. The lower speedup at $J/h=2$ is due to a smaller energy gap between the ground and first excited states, making the optimization more challenging.
Similar behavior also appears in Fig.~\ref{fig:scalability}~(b). This suggests that the performance benefits of PALOQ are maintained as the computational demands grow, indicating its potential for large-scale quantum optimization.

\begin{figure}[!h]
    \centering
    \noindent\makebox[\textwidth][c]{%
        \begin{tabular}{ccc}
            \begin{minipage}[t]{0.33\textwidth}
                \noindent\includegraphics[width=\linewidth]{./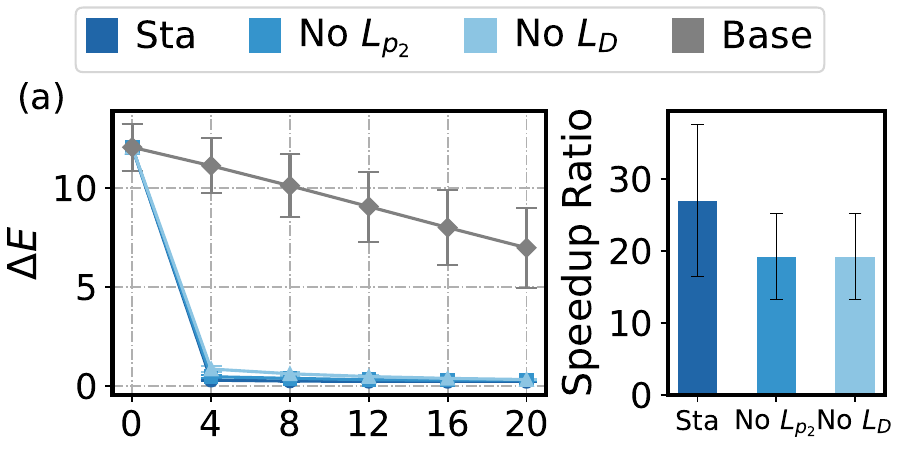}
            \end{minipage} &
            \begin{minipage}[t]{0.33\textwidth}
                \noindent\includegraphics[width=\linewidth]{./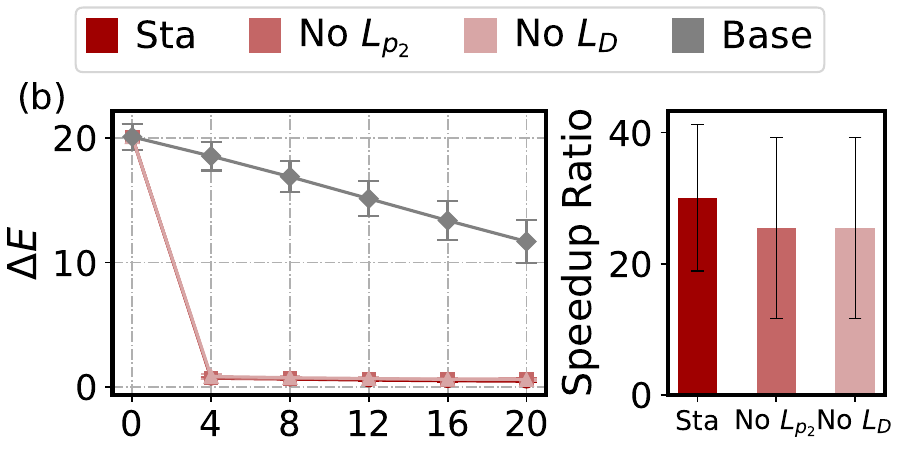}
            \end{minipage}  &
            \begin{minipage}[t]{0.33\textwidth}
                \noindent\includegraphics[width=\linewidth]{./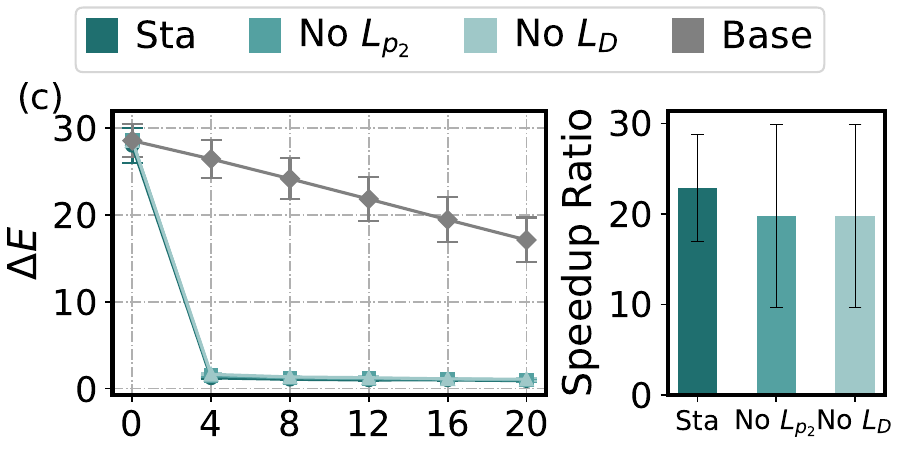}
            \end{minipage} \\
        \end{tabular}%
    }
    \caption{Ablation study on the loss function configuration in PALQO. The three panels show the performance on TFIM with $n=12, 20, 28$ qubits, evaluating the impact of the components $\mathcal{L}_D$ and $\mathcal{L}_{P_2}$ in $\mathcal{L}$.}
    \label{fig:ablation study}
\end{figure}

\textbf{Ablation studies on loss function.} We further evaluate the performance of PALOQ against variants where specific loss terms in Eq.~\eqref{eqn:overall} are removed in the task of TFIM with HEA ansatz. Specifically, we carried out separate ablation studies on the PDE residual and data residual components of the loss function shown in Fig.~\ref{fig:ablation study}. We noticed that while both the PDE and data residual positively influence model performance, their contributions are not essential. These findings suggest that adopting low-order approximations during the construction of PALQO enables the retention of satisfactory speedup while simultaneously reducing the complexity of downstream model training and preventing the degradation of higher-order derivative information.

\begin{figure*}[!h]
    \centering
    \noindent\makebox[0.95\textwidth][c]{%
        \begin{tabular}{@{}cc@{}}
            \begin{minipage}[t]{0.5\textwidth}
                \noindent\includegraphics[width=7cm, height=3cm]{./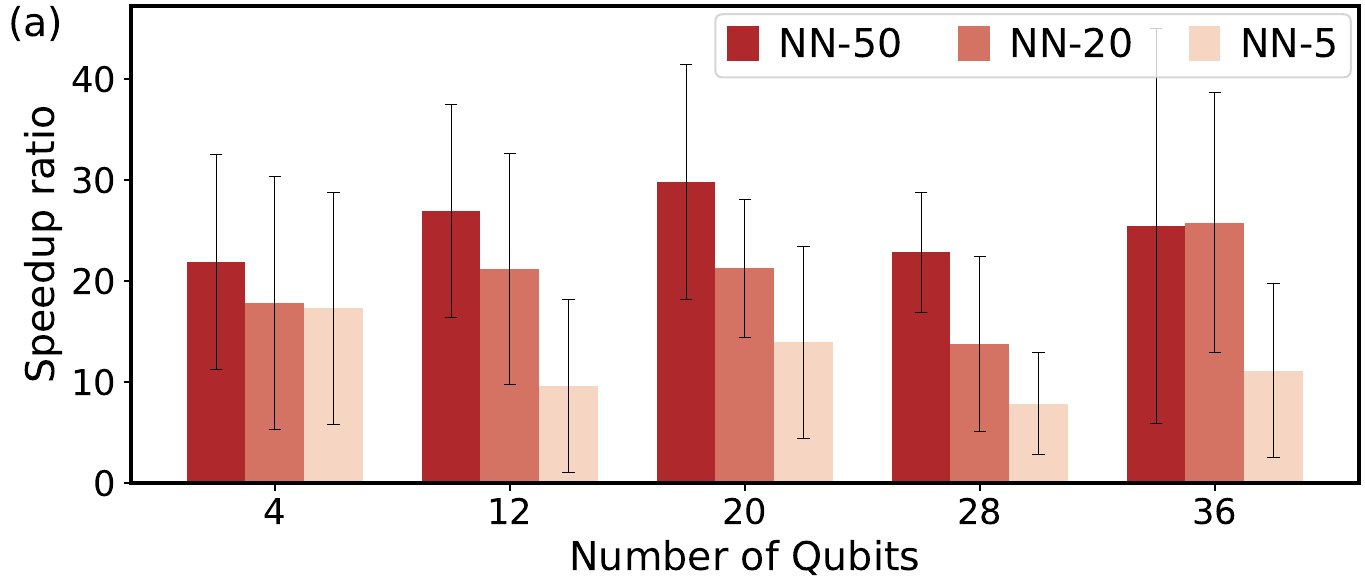}
            \end{minipage} &
            \begin{minipage}[t]{0.5\textwidth}
                \noindent\includegraphics[width=7cm, height=3cm]{./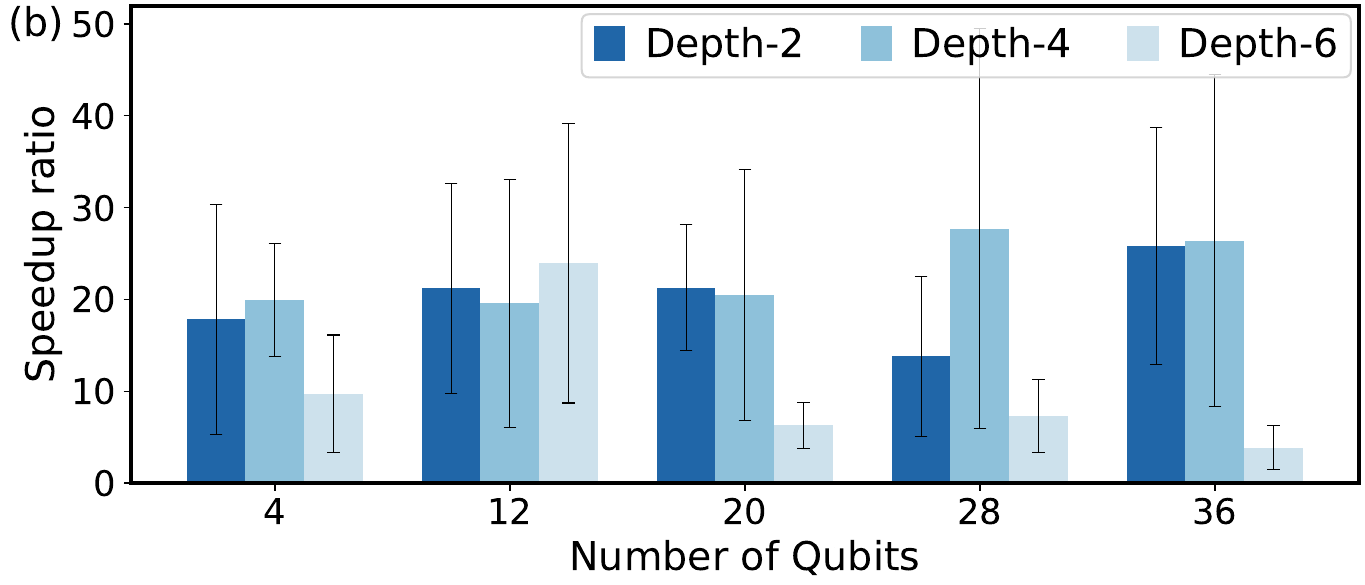}
            \end{minipage} \\
        \end{tabular}%
    }
    \caption{\small Performance of PALQO evaluated on the TFIM under different neural network architecture configurations. (a) Results obtained using 2-layer neural networks with varying hidden layer widths of $W=\{50p, 20p, 5p\}$, where $p$ denotes the number of parameters $\bm{\theta}$. (b) Results with a fixed hidden layer width of $20p$, varying the number of hidden layers from 2 to 6.}
    \label{fig:nn architecture}
\end{figure*}
\textbf{Performance on varying the size of PALQO.}  
To investigate the influence of varying neural network sizes (width $W$ and depth $L$) within PALQO on its performance, we conducted tests on the TFIM with HEA ansatz and $p=6n$ parameters, where $n$ is the system size varying from 4 to 36. In Fig.~\ref{fig:nn architecture}~(a), we observed that an increase in the width of the hidden layers leads to a corresponding improvement in speedup ratio. However,  an inverse phenomenon occurs in Fig.~\ref{fig:nn architecture}~(b), further increasing the neural network depth does not effectively enhance PALQO's performance, which may be related to the vanishing gradient phenomenon, where higher-order derivative information tends to diminish as the neural network becomes deeper \cite{bengio1994learning}. These observations provide guidance for the neural network design in PALQO, indicating that increasing hidden layer width should be prioritized.

\begin{wrapfigure}{r}{8cm}
\centering
\includegraphics[width=6cm, height=2.5cm]{./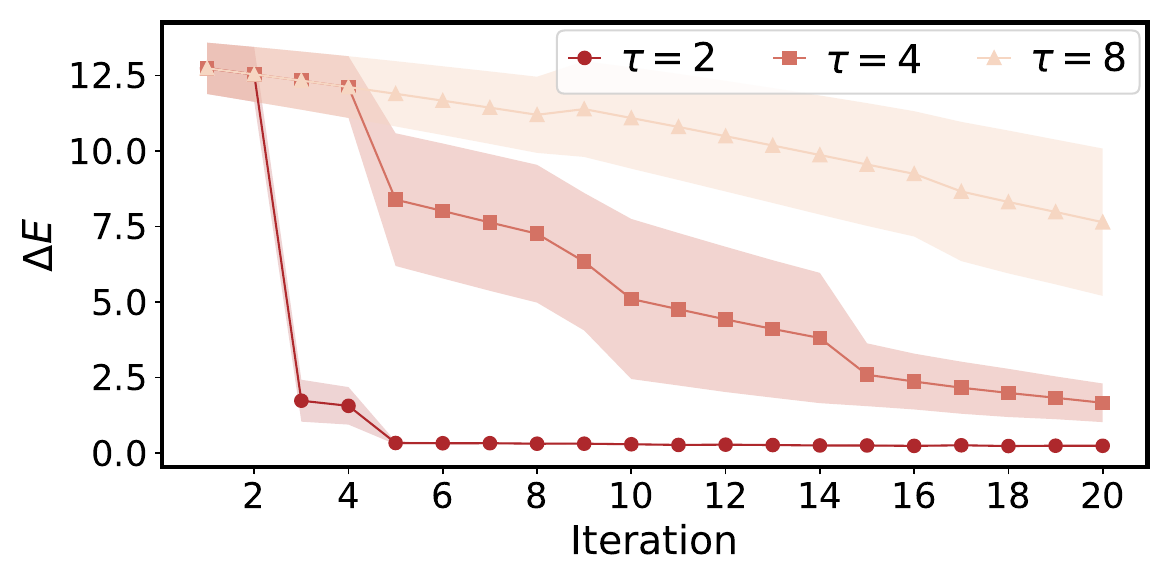}
\caption{\small Performance comparison of PALQO when trained with 2, 4, and 8 data samples.}
\label{fig:sample size}
\end{wrapfigure}
\textbf{The impact of the number of training samples on model prediction.} 
We performed experiments on 12-qubit TFIMs with HEA ansatz using various sample sizes to explore how the number of training samples affects the performance.In Fig.~\ref{fig:sample size}, the results validate that PALQO can achieve satisfactory performance even with a limited number of training samples.
Such data efficiency arises from the direct integration of the physical constraints imposed by the governing PDE into the loss function of the neural network, a core characteristic of PINNs \cite{han2017deep}. 

\section{Conclusion}
In this study, we devised PALQO towards optimizing large-scale VQAs given restrictive quantum resources. In contrast to previous studies, we derive PALQO from reformulating the training dynamics as a nonlinear PDE and using PINN to approximate the solution, and also provide a generalization analysis. Extensive numerical experiments up to 40 qubits validate the effectiveness of PALQO. Although it is still uncertain whether PALQO can scale to the regime where quantum hardware decisively outperforms classical methods, its results at the currently accessible scale are highly encouraging. 
\paragraph{Limitations and future works} 
PALQO reduces the need for repeated quantum gradient evaluations by learning the optimization path classically. While this lowers the number of quantum queries compared to vanilla VQE, it may diminish some quantum advantages. Besides, mitigating the high variance in speedup ratios is crucial for achieving more stable and reliable performance. One future research direction is to incorporate adaptive strategies and variance reduction techniques to achieve this goal and further unlock its potential. 

\section{Acknowledge}
We thank Yusen Wu for the helpful discussions and the anonymous reviewers for their constructive feedback and valuable suggestions. This work was supported by Innovation Program for Quantum Science and Technology (Grant No. 2023ZD0300200), the National Natural Science Foundation of China NSAF (Grant No. U2330201 and No. 92265208). Y. D. acknowledges the support from the SUG grant of NTU. The numerical experiments in this work were supported by the High-Performance Computing Platform of Peking University.

\bibliography{bibliograph.bib}

\begin{thebibliography}{103}
\providecommand{\natexlab}[1]{#1}
\providecommand{\url}[1]{\texttt{#1}}
\expandafter\ifx\csname urlstyle\endcsname\relax
  \providecommand{\doi}[1]{doi: #1}\else
  \providecommand{\doi}{doi: \begingroup \urlstyle{rm}\Url}\fi

\bibitem[AI et~al.(2024)]{ai2024quantum}
Google~Quantum AI et~al.
\newblock Quantum error correction below the surface code threshold.
\newblock \emph{Nature}, 638\penalty0 (8052):\penalty0 920, 2024.

\bibitem[Preskill(2025)]{preskill2025beyond}
John Preskill.
\newblock Beyond nisq: The megaquop machine.
\newblock \emph{ACM Transactions on Quantum Computing}, 6\penalty0
  (3):\penalty0 1--7, 2025.

\bibitem[Andersen et~al.(2025)Andersen, Astrakhantsev, Karamlou, Berndtsson,
  Motruk, Szasz, Gross, Schuckert, Westerhout, Zhang,
  et~al.]{andersen2025thermalization}
Trond~I Andersen, Nikita Astrakhantsev, Amir~H Karamlou, Julia Berndtsson,
  Johannes Motruk, Aaron Szasz, Jonathan~A Gross, Alexander Schuckert, Tom
  Westerhout, Yaxing Zhang, et~al.
\newblock Thermalization and criticality on an analogue--digital quantum
  simulator.
\newblock \emph{Nature}, 638\penalty0 (8049):\penalty0 79--85, 2025.

\bibitem[Wecker et~al.(2015)Wecker, Hastings, and Troyer]{wecker2015progress}
Dave Wecker, Matthew~B Hastings, and Matthias Troyer.
\newblock Progress towards practical quantum variational algorithms.
\newblock \emph{Physical Review A}, 92\penalty0 (4):\penalty0 042303, 2015.

\bibitem[Cerezo et~al.(2021)Cerezo, Arrasmith, Babbush, Benjamin, Endo, Fujii,
  McClean, Mitarai, Yuan, Cincio, et~al.]{cerezo2021variational}
Marco Cerezo, Andrew Arrasmith, Ryan Babbush, Simon~C Benjamin, Suguru Endo,
  Keisuke Fujii, Jarrod~R McClean, Kosuke Mitarai, Xiao Yuan, Lukasz Cincio,
  et~al.
\newblock Variational quantum algorithms.
\newblock \emph{Nature Reviews Physics}, 3\penalty0 (9):\penalty0 625--644,
  2021.

\bibitem[Bharti et~al.(2022)Bharti, Cervera-Lierta, Kyaw, Haug, Alperin-Lea,
  Anand, Degroote, Heimonen, Kottmann, Menke, et~al.]{bharti2022noisy}
Kishor Bharti, Alba Cervera-Lierta, Thi~Ha Kyaw, Tobias Haug, Sumner
  Alperin-Lea, Abhinav Anand, Matthias Degroote, Hermanni Heimonen, Jakob~S
  Kottmann, Tim Menke, et~al.
\newblock Noisy intermediate-scale quantum algorithms.
\newblock \emph{Reviews of Modern Physics}, 94\penalty0 (1):\penalty0 015004,
  2022.

\bibitem[Peruzzo et~al.()Peruzzo, Mcclean, Shadbolt, Yung, Zhou, Love,
  Aspuru-Guzik, and O’Brien]{peruzzo5variational}
A~Peruzzo, J~Mcclean, P~Shadbolt, MH~Yung, X~Zhou, P~Love, A~Aspuru-Guzik, and
  J~O’Brien.
\newblock A variational eigenvalue solver on a quantum processor.
\newblock \emph{Nature communications}, 5.

\bibitem[Tran et~al.(2024)Tran, Kikuchi, and Oshima]{tran2024variational}
Quoc~Hoan Tran, Shinji Kikuchi, and Hirotaka Oshima.
\newblock Variational denoising for variational quantum eigensolver.
\newblock \emph{Physical Review Research}, 6\penalty0 (2):\penalty0 023181,
  2024.

\bibitem[Grimsley et~al.(2023)Grimsley, Barron, Barnes, Economou, and
  Mayhall]{grimsley2023adaptive}
Harper~R Grimsley, George~S Barron, Edwin Barnes, Sophia~E Economou, and
  Nicholas~J Mayhall.
\newblock Adaptive, problem-tailored variational quantum eigensolver mitigates
  rough parameter landscapes and barren plateaus.
\newblock \emph{npj Quantum Information}, 9\penalty0 (1):\penalty0 19, 2023.

\bibitem[J{\"a}ger and Krems(2023)]{jager2023universal}
Jonas J{\"a}ger and Roman~V Krems.
\newblock Universal expressiveness of variational quantum classifiers and
  quantum kernels for support vector machines.
\newblock \emph{Nature Communications}, 14\penalty0 (1):\penalty0 576, 2023.

\bibitem[Ye et~al.(2023)Ye, Yan, and Yan]{ye2023towards}
Xinyu Ye, Ge~Yan, and Junchi Yan.
\newblock Towards quantum machine learning for constrained combinatorial
  optimization: a quantum qap solver.
\newblock In \emph{International Conference on Machine Learning}, pages
  39903--39912. PMLR, 2023.

\bibitem[Tian et~al.(2023)Tian, Sun, Du, Zhao, Liu, Zhang, Yi, Huang, Wang, Wu,
  et~al.]{tian2023recent}
Jinkai Tian, Xiaoyu Sun, Yuxuan Du, Shanshan Zhao, Qing Liu, Kaining Zhang, Wei
  Yi, Wanrong Huang, Chaoyue Wang, Xingyao Wu, et~al.
\newblock Recent advances for quantum neural networks in generative learning.
\newblock \emph{IEEE Transactions on Pattern Analysis and Machine
  Intelligence}, 45\penalty0 (10):\penalty0 12321--12340, 2023.

\bibitem[Quantum et~al.(2020)Quantum, Collaborators*†, Arute, Arya, Babbush,
  Bacon, Bardin, Barends, Boixo, Broughton, Buckley, et~al.]{google2020hartree}
Google~AI Quantum, Collaborators*†, Frank Arute, Kunal Arya, Ryan Babbush,
  Dave Bacon, Joseph~C Bardin, Rami Barends, Sergio Boixo, Michael Broughton,
  Bob~B Buckley, et~al.
\newblock Hartree-fock on a superconducting qubit quantum computer.
\newblock \emph{Science}, 369\penalty0 (6507):\penalty0 1084--1089, 2020.

\bibitem[Kumar et~al.(2023)Kumar, Asthana, Abraham, Crawford, Mayhall, Zhang,
  Cincio, Tretiak, and Dub]{kumar2023quantum}
Ashutosh Kumar, Ayush Asthana, Vibin Abraham, T~Daniel Crawford, Nicholas~J
  Mayhall, Yu~Zhang, Lukasz Cincio, Sergei Tretiak, and Pavel~A Dub.
\newblock Quantum simulation of molecular response properties in the nisq era.
\newblock \emph{Journal of Chemical Theory and Computation}, 19\penalty0
  (24):\penalty0 9136--9150, 2023.

\bibitem[Guo et~al.(2024)Guo, Sun, Qian, Gong, Zhang, Chen, Ye, Wu, Cao, Liu,
  et~al.]{guo2024experimental}
Shaojun Guo, Jinzhao Sun, Haoran Qian, Ming Gong, Yukun Zhang, Fusheng Chen,
  Yangsen Ye, Yulin Wu, Sirui Cao, Kun Liu, et~al.
\newblock Experimental quantum computational chemistry with optimized unitary
  coupled cluster ansatz.
\newblock \emph{Nature Physics}, pages 1--7, 2024.

\bibitem[Farhi et~al.(2014)Farhi, Goldstone, and Gutmann]{farhi2014quantum}
Edward Farhi, Jeffrey Goldstone, and Sam Gutmann.
\newblock A quantum approximate optimization algorithm.
\newblock \emph{arXiv preprint arXiv:1411.4028}, 2014.

\bibitem[Ebadi et~al.(2022)Ebadi, Keesling, Cain, Wang, Levine, Bluvstein,
  Semeghini, Omran, Liu, Samajdar, Luo, Nash, Gao, Barak, Farhi, Sachdev,
  Gemelke, Zhou, Choi, Pichler, Wang, Greiner, Vuleti, and Lukin]{Lukinoptexp}
S.~Ebadi, A.~Keesling, M.~Cain, T.~T. Wang, H.~Levine, D.~Bluvstein,
  G.~Semeghini, A.~Omran, J.-G. Liu, R.~Samajdar, X.-Z. Luo, B.~Nash, X.~Gao,
  B.~Barak, E.~Farhi, S.~Sachdev, N.~Gemelke, L.~Zhou, S.~Choi, H.~Pichler,
  S.-T. Wang, M.~Greiner, V.~Vuleti, and M.~D. Lukin.
\newblock Quantum optimization of maximum independent set using rydberg atom
  arrays.
\newblock \emph{Science}, 376\penalty0 (6598):\penalty0 1209--1215, 2022.
\newblock \doi{10.1126/science.abo6587}.
\newblock URL \url{https://www.science.org/doi/abs/10.1126/science.abo6587}.

\bibitem[Abbas et~al.(2024{\natexlab{a}})Abbas, Ambainis, Augustino,
  B{\"a}rtschi, Buhrman, Coffrin, Cortiana, Dunjko, Egger, Elmegreen,
  et~al.]{abbas2024challenges}
Amira Abbas, Andris Ambainis, Brandon Augustino, Andreas B{\"a}rtschi, Harry
  Buhrman, Carleton Coffrin, Giorgio Cortiana, Vedran Dunjko, Daniel~J Egger,
  Bruce~G Elmegreen, et~al.
\newblock Challenges and opportunities in quantum optimization.
\newblock \emph{Nature Reviews Physics}, pages 1--18, 2024{\natexlab{a}}.

\bibitem[Havlcek et~al.(2019)Havlcek, Corcoles, Temme, Harrow, Kandala, Chow,
  and Gambetta]{Havlcek2019}
Vojtoch Havlcek, Antonio~D. Corcoles, Kristan Temme, Aram~W. Harrow, Abhinav
  Kandala, Jerry~M. Chow, and Jay~M. Gambetta.
\newblock Supervised learning with quantum-enhanced feature spaces.
\newblock \emph{Nature}, 567\penalty0 (7747):\penalty0 209--212, Mar 2019.
\newblock ISSN 1476-4687.
\newblock \doi{10.1038/s41586-019-0980-2}.
\newblock URL \url{https://doi.org/10.1038/s41586-019-0980-2}.

\bibitem[Liu et~al.(2021)Liu, Arunachalam, and Temme]{Liu2021}
Yunchao Liu, Srinivasan Arunachalam, and Kristan Temme.
\newblock A rigorous and robust quantum speed-up in supervised machine
  learning.
\newblock \emph{Nature Physics}, 17\penalty0 (9):\penalty0 1013--1017, Sep
  2021.
\newblock ISSN 1745-2481.
\newblock \doi{10.1038/s41567-021-01287-z}.
\newblock URL \url{https://doi.org/10.1038/s41567-021-01287-z}.

\bibitem[Du et~al.(2023)Du, Yang, Tao, and Hsieh]{du2023problem}
Yuxuan Du, Yibo Yang, Dacheng Tao, and Min-Hsiu Hsieh.
\newblock Problem-dependent power of quantum neural networks on multiclass
  classification.
\newblock \emph{Physical Review Letters}, 131\penalty0 (14):\penalty0 140601,
  2023.

\bibitem[Gujju et~al.(2024)Gujju, Matsuo, and Raymond]{gujju2024quantum}
Yaswitha Gujju, Atsushi Matsuo, and Rudy Raymond.
\newblock Quantum machine learning on near-term quantum devices: Current state
  of supervised and unsupervised techniques for real-world applications.
\newblock \emph{Physical Review Applied}, 21\penalty0 (6):\penalty0 067001,
  2024.

\bibitem[Du et~al.(2025{\natexlab{a}})Du, Wang, Guo, Yu, Qian, Zhang, Hsieh,
  Rebentrost, and Tao]{du2025quantum}
Yuxuan Du, Xinbiao Wang, Naixu Guo, Zhan Yu, Yang Qian, Kaining Zhang, Min-Hsiu
  Hsieh, Patrick Rebentrost, and Dacheng Tao.
\newblock Quantum machine learning: A hands-on tutorial for machine learning
  practitioners and researchers.
\newblock \emph{arXiv preprint arXiv:2502.01146}, 2025{\natexlab{a}}.

\bibitem[Abbas et~al.(2024{\natexlab{b}})Abbas, King, Huang, Huggins,
  Movassagh, Gilboa, and McClean]{abbas2024quantum}
Amira Abbas, Robbie King, Hsin-Yuan Huang, William~J Huggins, Ramis Movassagh,
  Dar Gilboa, and Jarrod McClean.
\newblock On quantum backpropagation, information reuse, and cheating
  measurement collapse.
\newblock \emph{Advances in Neural Information Processing Systems}, 36,
  2024{\natexlab{b}}.

\bibitem[Schuld et~al.(2019)Schuld, Bergholm, Gogolin, Izaac, and
  Killoran]{schuld2019evaluating}
Maria Schuld, Ville Bergholm, Christian Gogolin, Josh Izaac, and Nathan
  Killoran.
\newblock Evaluating analytic gradients on quantum hardware.
\newblock \emph{Physical Review A}, 99\penalty0 (3):\penalty0 032331, 2019.

\bibitem[Wierichs et~al.(2022)Wierichs, Izaac, Wang, and
  Lin]{wierichs2022general}
David Wierichs, Josh Izaac, Cody Wang, and Cedric Yen-Yu Lin.
\newblock General parameter-shift rules for quantum gradients.
\newblock \emph{Quantum}, 6:\penalty0 677, 2022.

\bibitem[Verteletskyi et~al.(2020)Verteletskyi, Yen, and
  Izmaylov]{verteletskyi2020measurement}
Vladyslav Verteletskyi, Tzu-Ching Yen, and Artur~F Izmaylov.
\newblock Measurement optimization in the variational quantum eigensolver using
  a minimum clique cover.
\newblock \emph{The Journal of chemical physics}, 152\penalty0 (12), 2020.

\bibitem[Yen et~al.(2023)Yen, Ganeshram, and Izmaylov]{yen2023deterministic}
Tzu-Ching Yen, Aadithya Ganeshram, and Artur~F Izmaylov.
\newblock Deterministic improvements of quantum measurements with grouping of
  compatible operators, non-local transformations, and covariance estimates.
\newblock \emph{npj Quantum Information}, 9\penalty0 (1):\penalty0 14, 2023.

\bibitem[Wu et~al.(2023)Wu, Sun, Huang, and Yuan]{wu2023overlapped}
Bujiao Wu, Jinzhao Sun, Qi~Huang, and Xiao Yuan.
\newblock Overlapped grouping measurement: A unified framework for measuring
  quantum states.
\newblock \emph{Quantum}, 7:\penalty0 896, 2023.

\bibitem[Galda et~al.(2021)Galda, Liu, Lykov, Alexeev, and
  Safro]{galda2021transferability}
Alexey Galda, Xiaoyuan Liu, Danylo Lykov, Yuri Alexeev, and Ilya Safro.
\newblock Transferability of optimal qaoa parameters between random graphs.
\newblock In \emph{2021 IEEE International Conference on Quantum Computing and
  Engineering (QCE)}, pages 171--180. IEEE, 2021.

\bibitem[Zhang et~al.(2025)Zhang, Qin, Zhang, Zhou, Li, Du, and
  Xiao]{zhang2025diffusion}
Shikun Zhang, Zheng Qin, Yongyou Zhang, Yang Zhou, Rui Li, Chunxiao Du, and
  Zhisong Xiao.
\newblock Diffusion-enhanced optimization of variational quantum eigensolver
  for general hamiltonians.
\newblock \emph{arXiv preprint arXiv:2501.05666}, 2025.

\bibitem[Puig et~al.(2025)Puig, Drudis, Thanasilp, and
  Holmes]{PRXQuantum.6.010317}
Ricard Puig, Marc Drudis, Supanut Thanasilp, and Zo\"e Holmes.
\newblock Variational quantum simulation: A case study for understanding warm
  starts.
\newblock \emph{PRX Quantum}, 6:\penalty0 010317, Jan 2025.
\newblock \doi{10.1103/PRXQuantum.6.010317}.
\newblock URL \url{https://link.aps.org/doi/10.1103/PRXQuantum.6.010317}.

\bibitem[Du et~al.(2025{\natexlab{b}})Du, Zhu, Zhang, Hsieh, Rebentrost, Gao,
  Wu, Eisert, Chiribella, Tao, et~al.]{du2025artificial}
Yuxuan Du, Yan Zhu, Yuan-Hang Zhang, Min-Hsiu Hsieh, Patrick Rebentrost, Weibo
  Gao, Ya-Dong Wu, Jens Eisert, Giulio Chiribella, Dacheng Tao, et~al.
\newblock Artificial intelligence for representing and characterizing quantum
  systems.
\newblock \emph{arXiv preprint arXiv:2509.04923}, 2025{\natexlab{b}}.

\bibitem[Verdon et~al.(2019)Verdon, Broughton, McClean, Sung, Babbush, Jiang,
  Neven, and Mohseni]{verdon2019learning}
Guillaume Verdon, Michael Broughton, Jarrod~R McClean, Kevin~J Sung, Ryan
  Babbush, Zhang Jiang, Hartmut Neven, and Masoud Mohseni.
\newblock Learning to learn with quantum neural networks via classical neural
  networks.
\newblock \emph{arXiv preprint arXiv:1907.05415}, 2019.

\bibitem[Kulshrestha et~al.(2023)Kulshrestha, Liu, Ushijima-Mwesigwa, and
  Safro]{kulshrestha2023learning}
Ankit Kulshrestha, Xiaoyuan Liu, Hayato Ushijima-Mwesigwa, and Ilya Safro.
\newblock Learning to optimize quantum neural networks without gradients.
\newblock In \emph{2023 IEEE International Conference on Quantum Computing and
  Engineering (QCE)}, volume~1, pages 263--271. IEEE, 2023.

\bibitem[Luo et~al.(2024)Luo, Shen, Dangovski, and Soljacic]{luo2024QuACK}
Di~Luo, Jiayu Shen, Rumen Dangovski, and Marin Soljacic.
\newblock Quack: accelerating gradient-based quantum optimization with koopman
  operator learning.
\newblock \emph{Advances in Neural Information Processing Systems}, 36, 2024.

\bibitem[Raissi et~al.(2019)Raissi, Perdikaris, and
  Karniadakis]{raissi2019physics}
Maziar Raissi, Paris Perdikaris, and George~E Karniadakis.
\newblock Physics-informed neural networks: A deep learning framework for
  solving forward and inverse problems involving nonlinear partial differential
  equations.
\newblock \emph{Journal of Computational physics}, 378:\penalty0 686--707,
  2019.

\bibitem[Karniadakis et~al.(2021)Karniadakis, Kevrekidis, Lu, Perdikaris, Wang,
  and Yang]{karniadakis2021physics}
George~Em Karniadakis, Ioannis~G Kevrekidis, Lu~Lu, Paris Perdikaris, Sifan
  Wang, and Liu Yang.
\newblock Physics-informed machine learning.
\newblock \emph{Nature Reviews Physics}, 3\penalty0 (6):\penalty0 422--440,
  2021.

\bibitem[Liu et~al.(2022)Liu, Tacchino, Glick, Jiang, and
  Mezzacapo]{liu2022representation}
Junyu Liu, Francesco Tacchino, Jennifer~R Glick, Liang Jiang, and Antonio
  Mezzacapo.
\newblock Representation learning via quantum neural tangent kernels.
\newblock \emph{PRX Quantum}, 3\penalty0 (3):\penalty0 030323, 2022.

\bibitem[Tang and Yan(2022)]{tang2022graphqntk}
Yehui Tang and Junchi Yan.
\newblock Graphqntk: Quantum neural tangent kernel for graph data.
\newblock \emph{Advances in neural information processing systems},
  35:\penalty0 6104--6118, 2022.

\bibitem[You et~al.(2023)You, Chakrabarti, Chen, and Wu]{you2023analyzing}
Xuchen You, Shouvanik Chakrabarti, Boyang Chen, and Xiaodi Wu.
\newblock Analyzing convergence in quantum neural networks: deviations from
  neural tangent kernels.
\newblock In \emph{International Conference on Machine Learning}, pages
  40199--40224. PMLR, 2023.

\bibitem[cod()]{codePALQO}
Implementatio of physics-informed model for accelerating large-scale quantum
  optimization.
\newblock \url{https://github.com/Yajie-Hao/PALQO}.

\bibitem[Nielsen and Chuang(2010)]{nielsen2010quantum}
Michael~A Nielsen and Isaac~L Chuang.
\newblock \emph{Quantum computation and quantum information}.
\newblock Cambridge university press, 2010.

\bibitem[Kaye et~al.(2006)Kaye, Laflamme, and Mosca]{kaye2006introduction}
Phillip Kaye, Raymond Laflamme, and Michele Mosca.
\newblock \emph{An introduction to quantum computing}.
\newblock OUP Oxford, 2006.

\bibitem[Massoli et~al.(2022)Massoli, Vadicamo, Amato, and
  Falchi]{massoli2022leap}
Fabio~Valerio Massoli, Lucia Vadicamo, Giuseppe Amato, and Fabrizio Falchi.
\newblock A leap among quantum computing and quantum neural networks: A survey.
\newblock \emph{ACM Computing Surveys}, 55\penalty0 (5):\penalty0 1--37, 2022.

\bibitem[Li et~al.(2022)Li, Lu, and Deng]{li2022quantum}
Weikang Li, Zhi-de Lu, and Dong-Ling Deng.
\newblock Quantum neural network classifiers: A tutorial.
\newblock \emph{SciPost Physics Lecture Notes}, page 061, 2022.

\bibitem[Hu et~al.(2022)Hu, Dong, Wang, Lin, Wang, and Jiang]{hu2022quantum}
Zhirui Hu, Peiyan Dong, Zhepeng Wang, Youzuo Lin, Yanzhi Wang, and Weiwen
  Jiang.
\newblock Quantum neural network compression.
\newblock In \emph{Proceedings of the 41st IEEE/ACM International Conference on
  Computer-Aided Design}, pages 1--9, 2022.

\bibitem[Monaco et~al.(2023)Monaco, Kiss, Mandarino, Vallecorsa, and
  Grossi]{monaco2023quantum}
Saverio Monaco, Oriel Kiss, Antonio Mandarino, Sofia Vallecorsa, and Michele
  Grossi.
\newblock Quantum phase detection generalization from marginal quantum neural
  network models.
\newblock \emph{Physical Review B}, 107\penalty0 (8):\penalty0 L081105, 2023.

\bibitem[Zhao et~al.(2024)Zhao, Qiao, Zhang, and Gao]{zhao2024quantum}
Jiaming Zhao, Wenbo Qiao, Peng Zhang, and Hui Gao.
\newblock Quantum implicit neural representations.
\newblock In \emph{Proceedings of the 41st International Conference on Machine
  Learning}, pages 60940--60956, 2024.

\bibitem[Huang et~al.(2024)Huang, Yuan, Wang, and Du]{huang2024coreset}
Yiming Huang, Xiao Yuan, Huiyuan Wang, and Yuxuan Du.
\newblock Coreset selection can accelerate quantum machine learning models with
  provable generalization.
\newblock \emph{Physical Review Applied}, 22\penalty0 (1):\penalty0 014074,
  2024.

\bibitem[Nguyen et~al.(2024)Nguyen, Schatzki, Braccia, Ragone, Coles, Sauvage,
  Larocca, and Cerezo]{nguyen2024theory}
Quynh~T Nguyen, Louis Schatzki, Paolo Braccia, Michael Ragone, Patrick~J Coles,
  Frederic Sauvage, Martin Larocca, and Marco Cerezo.
\newblock Theory for equivariant quantum neural networks.
\newblock \emph{PRX Quantum}, 5\penalty0 (2):\penalty0 020328, 2024.

\bibitem[Peng et~al.(2025)Peng, Li, Chen, Zhang, Liang, Wang, and
  Du]{Peng2025Tit}
Yifeng Peng, Xinyi Li, Samuel Yen-Chi Chen, Kaining Zhang, Zhiding Liang, Ying
  Wang, and Yuxuan Du.
\newblock {TITAN}: {A} {T}rajectory-{I}nformed {T}echnique for {A}daptive
  {P}arameter {F}reezing in {L}arge-{S}cale {VQE}, 2025.
\newblock arXiv:2509.15193v1.

\bibitem[Tilly et~al.(2022)Tilly, Chen, Cao, Picozzi, Setia, Li, Grant,
  Wossnig, Rungger, Booth, et~al.]{tilly2022variational}
Jules Tilly, Hongxiang Chen, Shuxiang Cao, Dario Picozzi, Kanav Setia, Ying Li,
  Edward Grant, Leonard Wossnig, Ivan Rungger, George~H Booth, et~al.
\newblock The variational quantum eigensolver: a review of methods and best
  practices.
\newblock \emph{Physics Reports}, 986:\penalty0 1--128, 2022.

\bibitem[Cerezo et~al.(2022)Cerezo, Sharma, Arrasmith, and
  Coles]{cerezo2022variational}
M~Cerezo, Kunal Sharma, Andrew Arrasmith, and Patrick~J Coles.
\newblock Variational quantum state eigensolver.
\newblock \emph{npj Quantum Information}, 8\penalty0 (1):\penalty0 113, 2022.

\bibitem[Harwood et~al.(2022)Harwood, Trenev, Stober, Barkoutsos, Gujarati,
  Mostame, and Greenberg]{harwood2022improving}
Stuart~M Harwood, Dimitar Trenev, Spencer~T Stober, Panagiotis Barkoutsos,
  Tanvi~P Gujarati, Sarah Mostame, and Donny Greenberg.
\newblock Improving the variational quantum eigensolver using variational
  adiabatic quantum computing.
\newblock \emph{ACM Transactions on Quantum Computing}, 3\penalty0
  (1):\penalty0 1--20, 2022.

\bibitem[Ralli et~al.(2023)Ralli, Weaving, Tranter, Kirby, Love, and
  Coveney]{ralli2023unitary}
Alexis Ralli, Tim Weaving, Andrew Tranter, William~M Kirby, Peter~J Love, and
  Peter~V Coveney.
\newblock Unitary partitioning and the contextual subspace variational quantum
  eigensolver.
\newblock \emph{Physical Review Research}, 5\penalty0 (1):\penalty0 013095,
  2023.

\bibitem[Sato et~al.(2023)Sato, Watanabe, Raymond, Kondo, Wada, Endo, Sugawara,
  and Yamamoto]{sato2023variational}
Yuki Sato, Hiroshi~C Watanabe, Rudy Raymond, Ruho Kondo, Kaito Wada, Katsuhiro
  Endo, Michihiko Sugawara, and Naoki Yamamoto.
\newblock Variational quantum algorithm for generalized eigenvalue problems and
  its application to the finite-element method.
\newblock \emph{Physical Review A}, 108\penalty0 (2):\penalty0 022429, 2023.

\bibitem[Kim et~al.(2024)Kim, Hu, Sohn, Kim, Kim, Lee, and Lim]{kim2024qudit}
Byungjoo Kim, Kang-Min Hu, Myung-Hyun Sohn, Yosep Kim, Yong-Su Kim, Seung-Woo
  Lee, and Hyang-Tag Lim.
\newblock Qudit-based variational quantum eigensolver using photonic orbital
  angular momentum states.
\newblock \emph{Science Advances}, 10\penalty0 (43):\penalty0 eado3472, 2024.

\bibitem[Wang et~al.(2023)Wang, Liu, Liu, Luo, Du, and Tao]{wang2023symmetric}
Xinbiao Wang, Junyu Liu, Tongliang Liu, Yong Luo, Yuxuan Du, and Dacheng Tao.
\newblock Symmetric pruning in quantum neural networks.
\newblock In \emph{The Eleventh International Conference on Learning
  Representations}, 2023.
\newblock URL \url{https://openreview.net/forum?id=K96AogLDT2K}.

\bibitem[Chan et~al.(2024)Chan, Shi, Dellantonio, D{\"u}r, and
  Muschik]{chan2024measurement}
Albie Chan, Zheng Shi, Luca Dellantonio, Wolfgang D{\"u}r, and Christine~A
  Muschik.
\newblock Measurement-based infused circuits for variational quantum
  eigensolvers.
\newblock \emph{Physical Review Letters}, 132\penalty0 (24):\penalty0 240601,
  2024.

\bibitem[Dissanayake and Phan-Thien(1994)]{dissanayake1994neural}
MWM~Gamini Dissanayake and Nhan Phan-Thien.
\newblock Neural-network-based approximations for solving partial differential
  equations.
\newblock \emph{communications in Numerical Methods in Engineering},
  10\penalty0 (3):\penalty0 195--201, 1994.

\bibitem[Lagaris et~al.(1998)Lagaris, Likas, and
  Fotiadis]{lagaris1998artificial}
Isaac~E Lagaris, Aristidis Likas, and Dimitrios~I Fotiadis.
\newblock Artificial neural networks for solving ordinary and partial
  differential equations.
\newblock \emph{IEEE transactions on neural networks}, 9\penalty0 (5):\penalty0
  987--1000, 1998.

\bibitem[Han et~al.(2017)Han, Jentzen, et~al.]{han2017deep}
Jiequn Han, Arnulf Jentzen, et~al.
\newblock Deep learning-based numerical methods for high-dimensional parabolic
  partial differential equations and backward stochastic differential
  equations.
\newblock \emph{Communications in mathematics and statistics}, 5\penalty0
  (4):\penalty0 349--380, 2017.

\bibitem[Okazaki et~al.(2022)Okazaki, Ito, Hirahara, and
  Ueda]{okazaki2022physics}
Tomohisa Okazaki, Takeo Ito, Kazuro Hirahara, and Naonori Ueda.
\newblock Physics-informed deep learning approach for modeling crustal
  deformation.
\newblock \emph{Nature Communications}, 13\penalty0 (1):\penalty0 7092, 2022.

\bibitem[Bian et~al.(2023)Bian, Song, Peng, Chang, Yang, Horstmeyer, Ye, Zhu,
  Qin, Zheng, et~al.]{bian2023high}
Liheng Bian, Haoze Song, Lintao Peng, Xuyang Chang, Xi~Yang, Roarke Horstmeyer,
  Lin Ye, Chunli Zhu, Tong Qin, Dezhi Zheng, et~al.
\newblock High-resolution single-photon imaging with physics-informed deep
  learning.
\newblock \emph{Nature Communications}, 14\penalty0 (1):\penalty0 5902, 2023.

\bibitem[Gao et~al.(2024)Gao, Kaltenbach, and Koumoutsakos]{gao2024generative}
Han Gao, Sebastian Kaltenbach, and Petros Koumoutsakos.
\newblock Generative learning for forecasting the dynamics of high-dimensional
  complex systems.
\newblock \emph{Nature Communications}, 15\penalty0 (1):\penalty0 8904, 2024.

\bibitem[Wang et~al.(2024{\natexlab{a}})Wang, Zhai, Zhao, Di, and
  Chen]{wang2024physics}
Fujin Wang, Zhi Zhai, Zhibin Zhao, Yi~Di, and Xuefeng Chen.
\newblock Physics-informed neural network for lithium-ion battery degradation
  stable modeling and prognosis.
\newblock \emph{Nature Communications}, 15\penalty0 (1):\penalty0 4332,
  2024{\natexlab{a}}.

\bibitem[Chen et~al.(2022)Chen, Chen, Chen, Heaton, Liu, Wang, and
  Yin]{chen2022learning}
Tianlong Chen, Xiaohan Chen, Wuyang Chen, Howard Heaton, Jialin Liu, Zhangyang
  Wang, and Wotao Yin.
\newblock Learning to optimize: A primer and a benchmark.
\newblock \emph{Journal of Machine Learning Research}, 23\penalty0
  (189):\penalty0 1--59, 2022.

\bibitem[Yu et~al.(2024)Yu, Li, Ye, Lu, Han, and Deng]{yu2024expressibility}
Li-Wei Yu, Weikang Li, Qi~Ye, Zhide Lu, Zizhao Han, and Dong-Ling Deng.
\newblock Expressibility-induced concentration of quantum neural tangent
  kernels.
\newblock \emph{Reports on Progress in Physics}, 87\penalty0 (11):\penalty0
  110501, 2024.

\bibitem[Kingma(2014)]{kingma2014adam}
Diederik~P Kingma.
\newblock Adam: A method for stochastic optimization.
\newblock \emph{arXiv preprint arXiv:1412.6980}, 2014.

\bibitem[Yuan et~al.(2019)Yuan, Endo, Zhao, Li, and Benjamin]{yuan2019theory}
Xiao Yuan, Suguru Endo, Qi~Zhao, Ying Li, and Simon~C Benjamin.
\newblock Theory of variational quantum simulation.
\newblock \emph{Quantum}, 3:\penalty0 191, 2019.

\bibitem[Endo et~al.(2020)Endo, Sun, Li, Benjamin, and
  Yuan]{endo2020variational}
Suguru Endo, Jinzhao Sun, Ying Li, Simon~C Benjamin, and Xiao Yuan.
\newblock Variational quantum simulation of general processes.
\newblock \emph{Physical Review Letters}, 125\penalty0 (1):\penalty0 010501,
  2020.

\bibitem[Kokail et~al.(2019)Kokail, Maier, van Bijnen, Brydges, Joshi,
  Jurcevic, Muschik, Silvi, Blatt, Roos, et~al.]{kokail2019self}
Christian Kokail, Christine Maier, Rick van Bijnen, Tiff Brydges, Manoj~K
  Joshi, Petar Jurcevic, Christine~A Muschik, Pietro Silvi, Rainer Blatt,
  Christian~F Roos, et~al.
\newblock Self-verifying variational quantum simulation of lattice models.
\newblock \emph{Nature}, 569\penalty0 (7756):\penalty0 355--360, 2019.

\bibitem[Amaro et~al.(2022)Amaro, Modica, Rosenkranz, Fiorentini, Benedetti,
  and Lubasch]{amaro2022filtering}
David Amaro, Carlo Modica, Matthias Rosenkranz, Mattia Fiorentini, Marcello
  Benedetti, and Michael Lubasch.
\newblock Filtering variational quantum algorithms for combinatorial
  optimization.
\newblock \emph{Quantum Science and Technology}, 7\penalty0 (1):\penalty0
  015021, 2022.

\bibitem[D{\'\i}ez-Valle et~al.(2023)D{\'\i}ez-Valle, Luis-Hita,
  Hern{\'a}ndez-Santana, Mart{\'\i}nez-Garc{\'\i}a, D{\'\i}az-Fern{\'a}ndez,
  Andr{\'e}s, Garc{\'\i}a-Ripoll, S{\'a}nchez-Mart{\'\i}nez, and
  Porras]{diez2023multiobjective}
Pablo D{\'\i}ez-Valle, Jorge Luis-Hita, Senaida Hern{\'a}ndez-Santana, Fernando
  Mart{\'\i}nez-Garc{\'\i}a, {\'A}lvaro D{\'\i}az-Fern{\'a}ndez, Eva
  Andr{\'e}s, Juan~Jos{\'e} Garc{\'\i}a-Ripoll, Escol{\'a}stico
  S{\'a}nchez-Mart{\'\i}nez, and Diego Porras.
\newblock Multiobjective variational quantum optimization for constrained
  problems: an application to cash handling.
\newblock \emph{Quantum Science and Technology}, 8\penalty0 (4):\penalty0
  045009, 2023.

\bibitem[De~Ryck and Mishra(2022)]{de2022error}
Tim De~Ryck and Siddhartha Mishra.
\newblock Error analysis for physics-informed neural networks (pinns)
  approximating kolmogorov pdes.
\newblock \emph{Advances in Computational Mathematics}, 48\penalty0
  (6):\penalty0 79, 2022.

\bibitem[Zhang et~al.(2017)Zhang, Pagano, Hess, Kyprianidis, Becker, Kaplan,
  Gorshkov, Gong, and Monroe]{zhang2017observation}
Jiehang Zhang, Guido Pagano, Paul~W Hess, Antonis Kyprianidis, Patrick Becker,
  Harvey Kaplan, Alexey~V Gorshkov, Z-X Gong, and Christopher Monroe.
\newblock Observation of a many-body dynamical phase transition with a 53-qubit
  quantum simulator.
\newblock \emph{Nature}, 551\penalty0 (7682):\penalty0 601--604, 2017.

\bibitem[Liu et~al.(2024)Liu, Li, Zhang, Jian, and Yao]{liu2024noise}
Shuo Liu, Ming-Rui Li, Shi-Xin Zhang, Shao-Kai Jian, and Hong Yao.
\newblock Noise-induced phase transitions in hybrid quantum circuits.
\newblock \emph{Physical Review B}, 110\penalty0 (6):\penalty0 064323, 2024.

\bibitem[Kamakari et~al.(2025)Kamakari, Sun, Li, Thio, Gujarati, Fisher, Motta,
  and Minnich]{kamakari2025experimental}
Hirsh Kamakari, Jiace Sun, Yaodong Li, Jonathan~J Thio, Tanvi~P Gujarati,
  Matthew~PA Fisher, Mario Motta, and Austin~J Minnich.
\newblock Experimental demonstration of scalable cross-entropy benchmarking to
  detect measurement-induced phase transitions on a superconducting quantum
  processor.
\newblock \emph{Physical Review Letters}, 134\penalty0 (12):\penalty0 120401,
  2025.

\bibitem[Wu et~al.(2024)Wu, Ye, and Yan]{wu2024qvae}
Huanjin Wu, Xinyu Ye, and Junchi Yan.
\newblock Qvae-mole: The quantum vae with spherical latent variable learning
  for 3-d molecule generation.
\newblock \emph{Advances in Neural Information Processing Systems},
  37:\penalty0 22745--22771, 2024.

\bibitem[Vendrell et~al.(2024)Vendrell, Ajagekar, Bergman, Hall, and
  You]{vendrell2024designing}
Raul~Conchello Vendrell, Akshay Ajagekar, Michael~T Bergman, Carol~K Hall, and
  Fengqi You.
\newblock Designing microplastic-binding peptides with a variational quantum
  circuit--based hybrid quantum-classical approach.
\newblock \emph{Science Advances}, 10\penalty0 (51):\penalty0 eadq8492, 2024.

\bibitem[Cadi~Tazi and Thom(2024)]{cadi2024folded}
Lila Cadi~Tazi and Alex~JW Thom.
\newblock Folded spectrum vqe: A quantum computing method for the calculation
  of molecular excited states.
\newblock \emph{Journal of Chemical Theory and Computation}, 20\penalty0
  (6):\penalty0 2491--2504, 2024.

\bibitem[McClean et~al.(2025-02-12)McClean, Rubin, Sung, Kivlichan,
  Bonet-Monroig, Cao, Dai, Fried, Gidney, Gimby, et~al.]{Openfermion}
Jarrod~R McClean, Nicholas~C Rubin, Kevin~J Sung, Ian~D Kivlichan, Xavier
  Bonet-Monroig, Yudong Cao, Chengyu Dai, E~Schuyler Fried, Craig Gidney,
  Brendan Gimby, et~al.
\newblock {OpenFermion: the electronic structure package for quantum
  computers}.
\newblock \url{https://github.com/quantumlib/OpenFermion}, 2025-02-12.

\bibitem[Leone et~al.(2024)Leone, Oliviero, Cincio, and
  Cerezo]{leone2024practical}
Lorenzo Leone, Salvatore~FE Oliviero, Lukasz Cincio, and Marco Cerezo.
\newblock On the practical usefulness of the hardware efficient ansatz.
\newblock \emph{Quantum}, 8:\penalty0 1395, 2024.

\bibitem[Wang et~al.(2024{\natexlab{b}})Wang, Qi, Wang, and
  Dong]{wang2024entanglement}
Xin Wang, Bo~Qi, Yabo Wang, and Daoyi Dong.
\newblock Entanglement-variational hardware-efficient ansatz for eigensolvers.
\newblock \emph{Physical Review Applied}, 21\penalty0 (3):\penalty0 034059,
  2024{\natexlab{b}}.

\bibitem[Zhuang et~al.(2024)Zhuang, Wu, and Duan]{zhuang2024hardware}
J-Z Zhuang, Y-K Wu, and L-M Duan.
\newblock Hardware-efficient variational quantum algorithm in a trapped-ion
  quantum computer.
\newblock \emph{Physical Review A}, 110\penalty0 (6):\penalty0 062414, 2024.

\bibitem[Wiersema et~al.(2020)Wiersema, Zhou, de~Sereville, Carrasquilla, Kim,
  and Yuen]{wiersema2020exploring}
Roeland Wiersema, Cunlu Zhou, Yvette de~Sereville, Juan~Felipe Carrasquilla,
  Yong~Baek Kim, and Henry Yuen.
\newblock Exploring entanglement and optimization within the hamiltonian
  variational ansatz.
\newblock \emph{PRX quantum}, 1\penalty0 (2):\penalty0 020319, 2020.

\bibitem[Park and Killoran(2024)]{park2024hamiltonian}
Chae-Yeun Park and Nathan Killoran.
\newblock Hamiltonian variational ansatz without barren plateaus.
\newblock \emph{Quantum}, 8:\penalty0 1239, 2024.

\bibitem[Wang et~al.(2025)Wang, Chai, Feng, Guo, Jansen, and
  T{\"u}ys{\"u}z]{wang2025imaginary}
Xiaoyang Wang, Yahui Chai, Xu~Feng, Yibin Guo, Karl Jansen, and Cenk
  T{\"u}ys{\"u}z.
\newblock Imaginary hamiltonian variational ansatz for combinatorial
  optimization problems.
\newblock \emph{Physical Review A}, 111\penalty0 (3):\penalty0 032612, 2025.

\bibitem[Xia and Kais(2020)]{xia2020qubit}
Rongxin Xia and Sabre Kais.
\newblock Qubit coupled cluster singles and doubles variational quantum
  eigensolver ansatz for electronic structure calculations.
\newblock \emph{Quantum Science and Technology}, 6\penalty0 (1):\penalty0
  015001, 2020.

\bibitem[Boy et~al.(2025)Boy, Filip, and Wales]{boy2025energy}
Choy Boy, Maria-Andreea Filip, and David~J Wales.
\newblock Energy landscapes for the unitary coupled cluster ansatz.
\newblock \emph{Journal of Chemical Theory and Computation}, 21\penalty0
  (4):\penalty0 1739--1751, 2025.

\bibitem[Hu et~al.(2025)Hu, Wang, and Li]{hu2025unitary}
Jiaqi Hu, Qingchun Wang, and Shuhua Li.
\newblock Unitary block-correlated coupled cluster ansatz based on the
  generalized valence bond wave function for quantum simulation.
\newblock \emph{Journal of Chemical Theory and Computation}, 2025.

\bibitem[Bengio et~al.(1994)Bengio, Simard, and Frasconi]{bengio1994learning}
Yoshua Bengio, Patrice Simard, and Paolo Frasconi.
\newblock Learning long-term dependencies with gradient descent is difficult.
\newblock \emph{IEEE transactions on neural networks}, 5\penalty0 (2):\penalty0
  157--166, 1994.

\bibitem[Peruzzo et~al.(2014)Peruzzo, McClean, Shadbolt, Yung, Zhou, Love,
  Aspuru-Guzik, and O’brien]{peruzzo2014variational}
Alberto Peruzzo, Jarrod McClean, Peter Shadbolt, Man-Hong Yung, Xiao-Qi Zhou,
  Peter~J Love, Al{\'a}n Aspuru-Guzik, and Jeremy~L O’brien.
\newblock A variational eigenvalue solver on a photonic quantum processor.
\newblock \emph{Nature communications}, 5\penalty0 (1):\penalty0 4213, 2014.

\bibitem[Larocca et~al.(2023)Larocca, Ju, Garc{\'\i}a-Mart{\'\i}n, Coles, and
  Cerezo]{larocca2023theory}
Martin Larocca, Nathan Ju, Diego Garc{\'\i}a-Mart{\'\i}n, Patrick~J Coles, and
  Marco Cerezo.
\newblock Theory of overparametrization in quantum neural networks.
\newblock \emph{Nature Computational Science}, 3\penalty0 (6):\penalty0
  542--551, 2023.

\bibitem[Truger et~al.(2024)Truger, Barzen, Bechtold, Beisel, Leymann, Mandl,
  and Yussupov]{truger2024warm}
Felix Truger, Johanna Barzen, Marvin Bechtold, Martin Beisel, Frank Leymann,
  Alexander Mandl, and Vladimir Yussupov.
\newblock Warm-starting and quantum computing: A systematic mapping study.
\newblock \emph{ACM Computing Surveys}, 56\penalty0 (9):\penalty0 1--31, 2024.

\bibitem[Egger et~al.(2021)Egger, Mare{\v{c}}ek, and Woerner]{egger2021warm}
Daniel~J Egger, Jakub Mare{\v{c}}ek, and Stefan Woerner.
\newblock Warm-starting quantum optimization.
\newblock \emph{Quantum}, 5:\penalty0 479, 2021.

\bibitem[Chai et~al.(2024)Chai, Jansen, K{\"u}hn, Schw{\"a}gerl, and
  Stollenwerk]{chai2024structure}
Yahui Chai, Karl Jansen, Stefan K{\"u}hn, Tim Schw{\"a}gerl, and Tobias
  Stollenwerk.
\newblock Structure-inspired ansatz and warm start of variational quantum
  algorithms for quadratic unconstrained binary optimization problems.
\newblock \emph{arXiv preprint arXiv:2407.02569}, 2024.

\bibitem[Rad et~al.(2022)Rad, Seif, and Linke]{rad2022surviving}
Ali Rad, Alireza Seif, and Norbert~M Linke.
\newblock Surviving the barren plateau in variational quantum circuits with
  bayesian learning initialization.
\newblock \emph{arXiv preprint arXiv:2203.02464}, 2022.

\bibitem[Anschuetz and Kiani(2022)]{anschuetz2022quantum}
Eric~R Anschuetz and Bobak~T Kiani.
\newblock Quantum variational algorithms are swamped with traps.
\newblock \emph{Nature Communications}, 13\penalty0 (1):\penalty0 7760, 2022.

\bibitem[Barkoutsos et~al.(2018)Barkoutsos, Gonthier, Sokolov, Moll, Salis,
  Fuhrer, Ganzhorn, Egger, Troyer, Mezzacapo, et~al.]{barkoutsos2018quantum}
Panagiotis~Kl Barkoutsos, Jerome~F Gonthier, Igor Sokolov, Nikolaj Moll, Gian
  Salis, Andreas Fuhrer, Marc Ganzhorn, Daniel~J Egger, Matthias Troyer,
  Antonio Mezzacapo, et~al.
\newblock Quantum algorithms for electronic structure calculations:
  Particle-hole hamiltonian and optimized wave-function expansions.
\newblock \emph{Physical Review A}, 98\penalty0 (2):\penalty0 022322, 2018.

\bibitem[Fisher(1936)]{iris_53}
R.~A. Fisher.
\newblock {Iris}.
\newblock UCI Machine Learning Repository, 1936.
\newblock {DOI}: https://doi.org/10.24432/C56C76.

\bibitem[Havl{\'\i}{\v{c}}ek et~al.(2019)Havl{\'\i}{\v{c}}ek, C{\'o}rcoles,
  Temme, Harrow, Kandala, Chow, and Gambetta]{havlivcek2019supervised}
Vojt{\v{e}}ch Havl{\'\i}{\v{c}}ek, Antonio~D C{\'o}rcoles, Kristan Temme,
  Aram~W Harrow, Abhinav Kandala, Jerry~M Chow, and Jay~M Gambetta.
\newblock Supervised learning with quantum-enhanced feature spaces.
\newblock \emph{Nature}, 567\penalty0 (7747):\penalty0 209--212, 2019.

\end{thebibliography}
\bibliographystyle{unsrtnat}

\newpage
\appendix

\section*{Technical Appendices and Supplementary Material}
To facilitate a thorough understanding of our work, this appendix is organized as follows.
First, in Section~\ref{sec:appendix basic concepts}, we introduce the foundational concepts of quantum computing and variational quantum algorithms (VQAs), which form the basis of our work. 
Next, we review the related works focused on improving the optimization efficiency of VQAs in Section~\ref{sec:appendix related works}. 
Then, we detail the implementation of the proposed PALQO, including its connection to the quantum neural tangent kernel (QNTK) and a breakdown of its design components in Section~\ref{sec:implementation_detail}. 
Subsequently, we present the theoretical analysis, covering both generalization error bounds and Lipschitz constant bounds for PALQO in Section~\ref{sec:appendix theoretical results}. 
In addition, we list the experimental details, including computational resources, variational ansätze used in VQE tasks, benchmark descriptions, and experimental setups in Section~\ref{sec:appendix details of experiments}. 
Finally,  in Section~\ref{sec:appendix additional experiments}, we supplement the main results with additional numerical experiments, showcasing PALQO's performance on XXZ and LiH systems, a quantum machine learning task, and the robustness under noise. Besides, we also discussed that it can be complementary to existing approaches, such as measurement grouping, to further improve the optimization efficiency. 
Finally, we discuss the limitations of the proposed method in Section~\ref{sec:appendix additional experiments}.

\section{Quantum Computing and Variational Quantum Algorithms}\label{sec:appendix basic concepts}
\subsection{Basic concepts of quantum computing}
\paragraph{Quantum State}
In quantum computing, the quantum state that stores the information about the physical system is the essential element to be manipulated for computing. We usually describe it as a normalized complex vector in Hilbert space $\mathcal{H}$ by Dirac notation, i.e. $|\psi\rangle \in \mathbb{C}^d$ ($\langle \psi|$ denotes the conjugate transpose of $|\psi\rangle$). For a single-qubit system, as the space $\mathcal{H}=\text{span}(|0\rangle,|1\rangle)$ where $|0\rangle=[1,0]^\top$ and $|1\rangle=[0,1]^\top$, the quantum state $|\psi\rangle$ can be expressed as $|\psi\rangle = \alpha |0\rangle + \beta|1\rangle, \left|\alpha\right|^2+\left|\beta\right|^2 = 1$. Similarly, since the Hilbert space $\mathcal{H}$ of $n$-qubit system  spanned by $\mathcal{H}_1\otimes \dots \otimes \mathcal{H}_n$, an $n$-qubit quantum state $|\psi\rangle$ can be written as $|\psi\rangle=\sum_j \lambda_j |\psi_j\rangle$ where $\sum_{j=1}^2\left|\lambda_j\right|^2=1$, $|\psi_j\rangle = {\otimes_{k=1}^n |\bm{b}_k\rangle}, |\bm{b}_k\rangle \in \{0,1\}^{\otimes N}$.

\paragraph{Quantum Circuit Model}
To process data stored in a quantum state while preserving its normalization under the $l_2$-norm, a unitary transformation $U$ satisfies the requirement that $U^\dagger U = \mathbb{I}$. In quantum computing, the circuit model is a widely used language to describe how the quantum information flows through a network of unitary transformations. 
To process data stored in a quantum state while preserving its normalization under the $l_2$-norm, the unitary transformation $U$ satisfies the requirement such that $U^\dagger U = U U^\dagger = \mathbb{I}$.

\begin{figure}[h]
\includegraphics[width=6cm]{./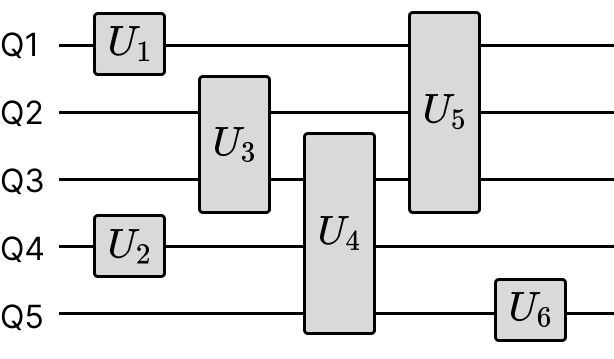}
\centering
\caption{A diagram of a quantum circuit model. The solid block represents the quantum gate, and the horizontal lines stand for qubits. The running order of the quantum circuit is from left to right. The corresponding unitary matrix of this quantum circuit is $U=U_6U_5U_4U_3U_2U_1$.}\label{fig:qc}
\end{figure}

In quantum computing, the circuit model is a widely used language to describe how the quantum information flows through a network of unitary transformations. The diagram of the quantum circuit model is shown in Fig.~\ref{fig:qc}. Like the classical circuit model, we name the unitary operation $U\in \mathbb{C}^{2^n\times 2^n}$ on $n$ qubits as a quantum gate. A group of commonly used single-qubit gates is the Pauli gates, i.e.,
\begin{equation}
I = \begin{bmatrix}
1 & 0 \\
0 & 1 \\
\end{bmatrix},
X =
 \begin{bmatrix}
0 & 1 \\
1 & 0 \\
\end{bmatrix},
Y =
 \begin{bmatrix}
0 & -i \\
i & 0 \\
\end{bmatrix},
Z =
 \begin{bmatrix}
1 & 0 \\
0 & -1 \\
\end{bmatrix}.
\end{equation}
Based on the Pauli gates, there are rotational gates around the $X$, $Y$, $Z$-axes of the Bloch sphere that can be parametrized with the rotation angle $\theta\in\mathbb{R}$, respectively, i.e.,
\begin{equation}
R_x =\begin{bmatrix}
\cos{\frac{\theta}{2}} & -i\sin{\frac{\theta}{2}}  \\
-i\sin{\frac{\theta}{2}}  & \cos{\frac{\theta}{2}}  \\
\end{bmatrix},
R_y =\begin{bmatrix}
\cos{\frac{\theta}{2}} & -\sin{\frac{\theta}{2}}  \\
\sin{\frac{\theta}{2}}  & \cos{\frac{\theta}{2}}  \\
\end{bmatrix},
R_z =\begin{bmatrix}
e^{-i\frac{\theta}{2}} & 0  \\
0  & e^{i\frac{\theta}{2}}  \\
\end{bmatrix}.
\end{equation}
Besides, a widely used multi-qubit gate is a controlled gate which applies a specific operation on the target qubits according to the value of the control qubit, generally formed as $U_c=\left|0\rangle\langle0\right|\otimes \mathbb{I}+\left|1\rangle\langle1\right|\otimes G$ where $G$ is the operation applied on target qubits. The CNOT gate and CZ gate are two specific two-qubit controlled gates where $G$ operation is $X$ or $Z$ gate, respectively. Their mathematical expressions are:
\begin{equation}
\text{CNOT} =  \begin{bmatrix}
1 & 0 & 0 & 0 \\
0 & 1 & 0 & 0 \\
0 & 0 & 0 & 1 \\
0 & 0 & 1 & 0 \\
\end{bmatrix}~\text{and}~
\text{CZ} = \begin{bmatrix}
1 & 0 & 0 & 0 \\
0 & 1 & 0 & 0 \\
0 & 0 & 1 & 0 \\
0 & 0 & 0 & -1 \\
\end{bmatrix}.
\end{equation}

There is a specific collection of quantum gates, termed universal quantum gates, such that any unitary transformation can be represented as a finite sequence of the gates drawn from this set.

\paragraph{Measurement} To extract the classical information from the quantum state, one needs to perform a quantum measurement, which causes the collapse of the superposition into one of its possible states. For instance, when we perform a projective measurement associated with measurement operator $M_m$ where $m$ refers to the measurement outcomes on $|\bf{u}\rangle$, then such an operation returns $m$ with probability $\langle{\bf{u}}|M_m|{\bf{u}}\rangle$.
Besides, through quantum measurement, we can estimate the expectation value of a given Hamiltonian $H$, which corresponds to the average energy of the system in the quantum state $|\psi\rangle$, i.e., $E=\langle \psi|H|\psi\rangle$. 

These components together form the foundation of quantum computation, enabling the execution of quantum algorithms and the realization of quantum advantage. 

\subsection{Variational quantum algorithm}
Variational Quantum Algorithms (VQAs) represent a promising class of hybrid quantum-classical algorithms tailored for the noisy intermediate-scale quantum (NISQ) era \cite{cerezo2021variational,tilly2022variational}. These algorithms cleverly combine the power of quantum computation for preparing and measuring parameterized quantum states with classical optimization routines that iteratively adjust these parameters to minimize a cost function. Generally, the cost function can be expressed as
\begin{equation}\label{eq:cost function vqa}
\mathcal{E}(f,U(\bm{\theta}),\{|\bm{u}\rangle\},\{\bm{O}\},\{{\bm{s}}\})= \sum_{j,k,l} f(\langle \psi(\bm{\theta},\bm{u}_j)|O_k|\psi(\bm{\theta},\bm{u}_j)\rangle,\bm{s}_l),
\end{equation}
where $U(\bm{\theta})$ denotes parametrized quantum circuit with tunable parameters $\bm{\theta}$, $\bm{s}$ refer to labels (optional), $\{|\bm{u}\rangle\}$ and $\{\bm{O}\}$ are a set of given states and observables, respectively, and $|\psi(\bm{\theta},\bm{u}_j)\rangle=U(\bm{\theta})|\bm{u}_j\rangle$ refers to the parametrized quantum state. The following are two typical VQAs: variational quantum eigensolver  (VQE) \cite{peruzzo2014variational, xia2020qubit, tilly2022variational, zhuang2024hardware} and quantum neural network (QNN) \cite{li2022quantum,massoli2022leap,larocca2023theory,you2023analyzing}. 

\paragraph{Variational Quantum Eigensolver} is a prominent variational quantum algorithm specifically designed to find the ground state energy of a quantum system. It utilizes a parameterized quantum circuit to prepare a trial wave function, and a classical optimizer iteratively adjusts the circuit's parameters to minimize the expectation value of the Hamiltonian of the system. Given a Hamiltonian $H=\sum_{k=1}^{N_H} \lambda_k H_k$, the cost function of VQE can be presented in the form of Eq.~\eqref{eq:cost function vqa} by setting $f$ as a identity function, $\{|\bm{u}\rangle\}=\{|0\rangle\}$, $\{\bm{s}\} = \emptyset$, and $\{\bm{O}\}=\{\lambda_k H_k\}_{k=1}^{N_H}$, i.e. 
\begin{equation}\label{eq:cost function vqe}
\mathcal{E}_{\text{VQE}}= \langle 0|U(\bm{\theta})^\dagger H U(\bm{\theta})|0\rangle.
\end{equation}

\paragraph{Quantum Neural Network} is a machine learning model that employs parameterized quantum circuits to learn from data, analogous to the role of layers in classical neural networks \cite{li2022quantum,larocca2023theory}. Given training samples $\{\bm{x}_j,y_j\}^{N}_{j=1}$, the cost function of QNN can be expressed as
\begin{equation}\label{eq:cost function qnn}
\mathcal{
E}_{\text{QNN}}= \frac{1}{2N}\sum_{j=1}^N\Big(\langle \bm{x}_j|U(\bm{\theta})^\dagger O U(\bm{\theta})|\bm{x}_j\rangle - y_j \Big)^2,
\end{equation}
by setting $\{|\bm{u}\rangle\}=\{|\bm{x}_j\rangle\}_{j=1}^N$, $\{\bm{O}\}=\{O\}$, and $\{\bm{s}\} = \{y_j\}_{j=1}^N$, where $f(\cdot, \cdot)$ can be  the mean squared error between $\langle \bm{x}_j|U(\bm{\theta})^\dagger O U(\bm{\theta})|\bm{x}_j\rangle$ and $y_j$.

\section{Related works on accelerating the optimization of VQAs }\label{sec:appendix related works}

\paragraph{Reducing Measurement Costs}.
Since the number of terms in an electronic Hamiltonian generally scales with $\mathcal{O}(N^4)$, where $N$ is the system size, many works explore ways of grouping compatible terms that can be simultaneously measured
\citep{verteletskyi2020measurement,yen2023deterministic,  wu2023overlapped}. 
However, the reduction in measurements heavily relies on the interaction structure of the Hamiltonian, and finding the optimal groups could be computationally complicated.
 
\paragraph{Improving Convergence Efficiency}
Warm start is a common approach that generates superior initializations to improve efficiency in optimization and machine learning. The relevant studies naturally borrow ideas from warm start to enhance the convergence efficiency of VQAs \cite{truger2024warm}. 
One line utilizes problem-specific techniques like randomized rounding in QAOA \cite{egger2021warm}, and imaginary time evolution in QUBO and learning quantum circuit \cite{chai2024structure, PRXQuantum.6.010317}.
In a different vein, some studies focus on exploring generative-based approaches, such as Bayesian Learning \cite{rad2022surviving}, and diffusion model \cite{rad2022surviving}, to identify a promising region in parameter space. Nonetheless, the non-convex landscape of VQA loss appears to be filled with traps \cite{anschuetz2022quantum}.

\paragraph{Predicting Dynamics of Parameter Updates}
Learning to optimize in VQAs aims to harness machine learning to approximate the training process. Some works inspired by meta-learning utilize the recurrent neural network to learn a sequential update rule in a heuristic manner \cite{verdon2019learning,kulshrestha2023learning}. Nevertheless, the memory bottleneck and training instability of the recurrent neural network would lead to it being underwhelming \cite{chen2022learning}. Recent work proposed QuACK, involving linear dynamics approximation and nonlinear neural embedding, to accelerate the optimization \cite{luo2024QuACK}. However, the prediction phase requires estimating the energy loss of each step to find the optimal parameters, which is not friendly for large-scale problems. Our method is developed from an alternative perspective, which explicitly approximates the training dynamics with a second-order nonlinear PDE, then utilizes a learning-based model to find the solution.


\section{Implementation Details of PALQO}\label{sec:implementation_detail}
In this section, we present a more detailed discussion about the PALQO, including the relation to QNTK, and details of the training and prediction process. 
 
\subsection{Relation to QNTK}\label{sec:appendix qntk}
The quantum neural tangent kernel (QNTK) is a tool used to analyze the behavior of VQAs, particularly variational quantum circuits \cite{liu2022representation, you2023analyzing}. Inspired by the neural tangent kernel from classical deep learning, the QNTK allows for theoretical insights into the training dynamics and generalization properties of these quantum models.

Let us first present the explicit form of QNTK in QNN. Recall the definition of QNN in Eq.~\eqref{eq:cost function qnn}, where the loss function is $\mathcal{E}_{\text{QNN}}$, the number of trainable parameters is $p$. Let the residual of $j$-th sample be $\mathcal{E}_j=g(\bm{x}_j, \bm{\theta}) - y_j$ where $g(\bm{x}_j,\bm{\theta})  = \langle \bm{x}_j|U(\bm{\theta})^\dagger O U(\bm{\theta})|\bm{x}_j\rangle$. The derivative of $\mathcal{E}_j$ with respect to $t$ can be expressed as
\begin{equation}\label{eq:dynamics qnn}
\frac{\partial \mathcal{E}_i}{\partial t} = -\frac{\eta}{2N}\sum_{j=1}^N\sum_{k=1}^p  \frac{\partial g(\bm{x}_i,\bm{\theta}^{(t)})}{\partial\bm{\theta}_k^{(t)}}\frac{\partial g(\bm{x}_j,\bm{\theta}^{(t)})}{\partial \bm{\theta}_k^{(t)}}\mathcal{E}_j.
\end{equation} 
In this regard, the element of QNTK, $K_{i,j}$, is defined as
\begin{equation}\label{eq:qntk}
K_{i,j} \equiv \sum_{k=1}^p \frac{\partial g(\bm{x}_i,\bm{\theta}^{(t)})}{\partial\bm{\theta}_k^{(t)}}\frac{\partial g(\bm{x}_j,\bm{\theta}^{(t)})}{\partial \bm{\theta}_k^{(t)}}.
\end{equation}

We next present QNTK in VQE. We consider the cost function of VQE in Eq.~\eqref{eq:cost function vqe}, the change between every two iterations can be expressed as
\begin{align}
\Delta \mathcal{E}_{\text{VQE}} &= \mathcal{E}_{\text{VQE}}(\bm{\theta}^{(t+1)}) - \mathcal{E}_{\text{VQE}}(\bm{\theta}^{(t)}),\\
&= \mathcal{E}_{\text{VQE}}(\bm{\theta}^{(t)}+\delta \bm{\theta}^{(t)}) - \mathcal{E}_{\text{VQE}}(\bm{\theta}^{(t)}).
\end{align}
Supported by Taylor expansion, we have
\begin{equation}
\mathcal{E}_{\text{VQE}}(\bm{\theta}^{(t)}+\delta \bm{\theta}^{(t)}) =  \mathcal{E}_{\text{VQE}}(\bm{\theta}^{(t)})  + \sum_i \frac{\mathcal{E}_{\text{VQE}}}{\partial \bm{\theta}^{(t)}_i}\delta \bm{\theta}^{(t)}_i
 + \frac{1}{2}\sum_{j,k} \frac{\partial^2 \mathcal{E}_{\text{VQE}}}{\partial \bm{\theta}^{(t)}_j \partial \bm{\theta}^{(t)}_k}  \partial \mathcal{E}\delta \bm{\theta}^{(t)}_j \delta \bm{\theta}^{(t)}_k +\mathcal{O}(\|\delta \bm{\theta}^{(t)}\|^3).
\end{equation}
Since $\delta \bm{\theta}^{(t)}=-\eta \nabla_{\bm{\theta}}\mathcal{E}_{\text{VQE}}(\bm{\theta}^{(t)})$, suppose the learning rate $\eta$ is infinitesimally small, we can write the dynamics of $\mathcal{E}_{\text{VQE}}$ as
\begin{equation}\label{eq:dynamics vqe}
\frac{\partial \mathcal{E}_{\text{VQE}}}{\partial t} = - \sum_i \frac{\partial \mathcal{E}_{\text{VQE}}}{\partial \bm{\theta}^{(t)}_i}\frac{\partial \mathcal{E}_{\text{VQE}}}{\partial \bm{\theta}^{(t)}_i} + \frac{1}{2}\eta\sum_{j,k} \frac{\partial^2 \mathcal{E}_{\text{VQE}}}{\partial \bm{\theta}^{(t)}_j \partial \bm{\theta}^{(t)}_k}\frac{\partial \mathcal{E}_{\text{VQE}}}{\partial \bm{\theta}^{(t)}_j}\frac{\partial \mathcal{E}_{\text{VQE}}}{\partial \bm{\theta}^{(t)}_k}+\mathcal{O}(\eta^2).
\end{equation}
In Eq.~\eqref{eq:dynamics vqe}, the first contributing term can be regarded as a special case of QNTK in Eq.~\eqref{eq:qntk} that only has a single data point, denoted as
\begin{equation}
K' = \sum_i \frac{\partial \mathcal{E}_{\text{VQE}}}{\partial \bm{\theta}^{(t)}_i}\frac{\partial \mathcal{E}_{\text{VQE}}}{\partial \bm{\theta}^{(t)}_i}.
\end{equation}
This suggests that due to the similarity in cost functions of various VQAs, PALQO can be naturally extended to other VQA models like QNNs.

\subsection{Implementation details of PALQO}

PALQO is a hybrid quantum-classical algorithm designed to optimize VQA parameters by iteratively combining short VQA training runs on a quantum device with classical learning using a PINN. In each iteration, the algorithm performs a few VQA steps to gather data (i.e., $\mathcal{\bm{\theta}}$ and $\mathcal{E}$), trains the PINN to model the local loss landscape, and then uses the trained PINN to predict a potentially better set of parameters. These predicted parameters are then used as the starting point for the next VQA training phase, repeating the cycle until the VQA loss converges, aiming to accelerate and improve the overall optimization process by leveraging the PINN as a surrogate model to guide the search in the parameter space. The whole process of PALQO is summarized in Algorithm~\ref{alg:PALQO}.

\begin{algorithm}[ht]
   \caption{PALQO}
   \label{alg:PALQO}
\begin{algorithmic}[1]
    \STATE {\bfseries Input:} a VQA with parameters $\bm{\theta}$, PINN-based model $f_{\bm{w}}$ with $\bm{w}$ constituting weights and biases.
    \STATE {\bfseries Output:} Parameters \(\hat{\bm{\theta}}^*\) to minimize the VQA loss.
    \STATE Randomly initialize the $\bm{\theta}$ and $\bm{w}$.
    \REPEAT
    \STATE Perform $\tau$ steps VQA training on quantum device to form $\mathcal{S} = \{\bm{\theta}^{(t)}, \mathcal{E}^{(t)}\}_{t=1}^\tau$.
    \STATE Train the model $f_{\bm{w}}$ over $\mathcal{S}$.
    \STATE $j \leftarrow 0$, $\bm{\theta}^{(j)} \leftarrow \bm{\theta}^{(\tau)}$
    \REPEAT 
    \STATE $\hat{\bm{\theta}}^{(j+1)} = f_{\bm{w}}(\bm{\theta}^{(j)})$, $\bm{\theta}^{(j)} \leftarrow \hat{\bm{\theta}}^{(j+1)}$.
    \UNTIL {$\hat{\bm{\theta}}^{(j)}$ converge} 
    \STATE $\hat{\bm{\theta}}^* \leftarrow \hat{\bm{\theta}}^{(j)}, \bm{\theta}\leftarrow \hat{\bm{\theta}}^*$.
    \UNTIL {$\mathcal{E}_{\bm{\theta}}$ and $\bm{\theta}$ converge}
    \STATE {\bfseries Return: $\hat{\bm{\theta}}^*$} 
\end{algorithmic}

\end{algorithm}

Instead of relying solely on the potentially noisy and gradient-limited information obtained directly from the quantum device in each step, PALQO uses the PINN to learn a smoother and more global picture of the loss landscape based on local explorations. This can potentially lead to faster convergence and help escape local minima in VQA optimization.

In the following, we elucidate the implementation of each step omitted in the main text.

\paragraph{Dataset Collection}  Here, we formally define the dataset required for each training session. The dataset consists of $m$ sets, each corresponding to one step in the VQE iteration. Each training sample consists of an input-output pair $\{(t, \bm{\theta}^{(t)}), (\mathcal{E}^{(t)}, \bm{\theta}^{(t+1)})\}$, where $\bm{\theta}^{(t)}$ represents the variational parameters at step $t$, and $\mathcal{E}^{(t)}$ is the corresponding loss function value. The variable $t$ is a custom-defined discrete sequence that maintains the temporal ordering of $\bm{\theta}^{(t)}$. To ensure consistency, here we specify the input-output par as $\{(\hat{t}, \bm{\theta}^{(t)}), (\mathcal{E}^{(t)}, \bm{\theta}^{(t+1)})\}$ where $\hat{t}$ is the time variable starting at $0.01$ and increases by $0.01$ for each step $t$. In other words, for a dataset with $\tau$ training samples, $\hat{t}$ takes values from $0.01$ to $0.01 \times \tau$. 

\paragraph{Neural Network Structure}
\label{sec:appendix nn architecture}
The Neural Network is a fully connected feedforward neural network with two hidden layers. The total number of variational parameters is defined as $p$, making both the input and output dimensions $p + 1$. Each hidden layer consists of $50 \times p$ neurons, and the activation function for all layers is \texttt{tanh}.

\paragraph{Iterative Prediction in PALQO}
As described in the main text, the prediction process involves feeding the input $(t+\tau, \bm{\theta}^{(t+\tau)})$ into the network to iteratively produce the $m$-step prediction, i.e. $\{\hat{\bm{\theta}}^{(t+\tau+j)}\}_{j=1}^m$. And the iterative prediction terminates once $\hat{\bm{\theta}}$ converges. Here, calculating the $\mathcal{E}$ in each step is expensive, thereby the convergence is defined as satisfying the condition only on $\bm{\theta}$: $\Delta = \|\hat{\bm{\theta}}^{(t+\tau+m)} - \hat{\bm{\theta}}^{(t+\tau+m-1)}\|_2 < \epsilon$, where $\epsilon = 10^{-4}$.
However, in the actual VQE optimization trajectory, $\Delta$ tends to decrease gradually as iterations progress. If the stopping condition is applied directly, it may lead to premature termination, resulting in suboptimal performance, or excessively delayed termination, leading to unnecessary computational overhead. 

To address this issue, we incorporate an additional guarantee mechanism: the iterative prediction is executed for a fixed number of 2000 iterations. We separately calculate the loss $\mathcal{E}$ using the $\bm{\theta}$ that minimizes $\Delta$ and $\hat{\bm{\theta}}^{(t+\tau+2000)}$, and then select the minimal one as the optimal variational parameter, $\hat{\bm{\theta}}^*$, which is subsequently used as the initialization $\bm{\theta}^{(0)}$ for the next VQE cycle.

\section{Theoretical Analysis}\label{sec:appendix theoretical results}
In this section, we provide a rigorous analysis of the performance of PALQO, which builds on a previous work, i.e,. Corollary 1 of (De Ryck \& Mishra (2022))~\cite{de2022error}, to gain insights into the generalization ability of PALQO. Notably, while previous work offers a general bound, it cannot be directly applied to the nonlinear PDEs relevant to our problem. Therefore, we introduce Lemmas \ref{lemma:bound_J_tanhNN}, \ref{lemma:bound_H_tanhNN},  and \ref{lemma:lipsitch_E}, and combine these with Corollary 1 in Ref.~\cite{de2022error} to derive our Corollary \ref{thm:ge_pinn}.

\begin{lemma}\label{lemma:bound_J_tanhNN}
Given an $L$-layer tanh neural network $f(\bm{x},(\bm{W},\bm{b}))$ constructed by bounded weights $\bm{W}=\{W^{(l)},|W^{(l)}|\leq a, l\in[L]\}$, bias $\bm{b}=\{b^{(l)},|b^{(l)}|\leq a, l\in[L]\}$ and activation function $\sigma=tanh(x)$, the norm of Jacobian with respect to input vector $\bm{x}$ is bounded by,
\begin{align*}
\left|J_f\right| \leq a^L.
\end{align*}
\end{lemma}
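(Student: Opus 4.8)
\textbf{Proof proposal for Lemma \ref{lemma:bound_J_tanhNN}.}

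The plan is to bound the Jacobian by induction on the layer index $l$, tracking how the derivative of the intermediate representation with respect to the input vector grows as it propagates through the network. Write the forward pass as $\bm{z}^{(0)} = \bm{x}$ and $\bm{z}^{(l)} = \sigma(W^{(l)}\bm{z}^{(l-1)} + b^{(l)})$ for $l \in [L]$, so that $f(\bm{x}) = \bm{z}^{(L)}$ (up to the final linear readout, which I would fold into the last layer or treat as a layer with identity activation). By the chain rule, the Jacobian factorizes as $J_f = D^{(L)} W^{(L)} D^{(L-1)} W^{(L-1)} \cdots D^{(1)} W^{(1)}$, where $D^{(l)}$ is the diagonal matrix holding the entrywise derivatives $\sigma'$ of the pre-activations at layer $l$.

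The key fact I would invoke is that $\sigma = \tanh$ is $1$-Lipschitz, i.e. $|\sigma'(\cdot)| \le 1$ everywhere, so each $D^{(l)}$ has operator norm (or whatever matrix norm $|\cdot|$ the paper intends — I would use the one consistent with the stated weight bound $|W^{(l)}| \le a$) at most $1$. Then submultiplicativity of the norm gives $|J_f| \le \prod_{l=1}^{L} |D^{(l)}|\,|W^{(l)}| \le \prod_{l=1}^{L} 1 \cdot a = a^L$. Equivalently, one can phrase this inductively: letting $r_l = |\partial \bm{z}^{(l)}/\partial \bm{x}|$, we have $r_0 = 1$ (or $=\|\mathbb{I}\|$, a constant absorbed into notation) and $r_l \le |D^{(l)}|\,|W^{(l)}|\,r_{l-1} \le a\, r_{l-1}$, which unrolls to $r_L \le a^L$.

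The main subtlety — and the step I would be most careful about — is matching the matrix norm in the statement to the vector/matrix norms implied by ``$|W^{(l)}| \le a$'' and ``$|J_f|$''. If $|W^{(l)}|$ denotes the maximum absolute entry, one needs an extra dimensional factor when passing to the operator norm, which would change the bound; so I expect the intended reading is that $|\cdot|$ is a submultiplicative norm (e.g. spectral or Frobenius) under which the stated inequalities are clean, and I would state that convention explicitly at the outset. A second minor point is the treatment of the input layer and the output layer (whether there is a separate affine readout with its own weight bounded by $a$); folding both into the $L$ factors is what makes the clean $a^L$ come out, so I would fix that bookkeeping up front. Beyond these conventions, the argument is entirely the submultiplicativity-plus-$1$-Lipschitz estimate and involves no real obstacle.
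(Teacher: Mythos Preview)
Your proposal is correct and matches the paper's own proof essentially line for line: the paper also writes the layerwise Jacobian as $\text{diag}[\sigma'(\cdot)]\cdot W^{(l)\top}$, applies the chain rule to get $J_f$ as the product of these factors, bounds each diagonal block by $1$ via $\sigma'=\mathrm{sech}^2\le 1$, and multiplies the $|W^{(l)}|\le a$ bounds to obtain $a^L$. Your added remarks about the norm convention and layer bookkeeping are more careful than the paper, which leaves these implicit.
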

\begin{proof}
As the output of $l$-layer can be presented by $\bm{f}_l = \sigma\left({W^{(l)}}^\top \bm{f}_{l-1}+b^{(l)}\right)$ and $\sigma'(x)=1-\sigma^2(x)$, the Jacobian with respect to the input vector is
\begin{equation}
J^{(l)}=\frac{\partial{\bm{f}_{l}}}{\partial{\bm{f}_{l-1}}}=diag[\sigma'(\bm{f}_{(l-1)})]\cdot{W^{(l)}}^\top.
\end{equation}
According to the chain rule, we can derive the Jacobian of $f(\bm{x},(\bm{W},\bm{b}))$ as 
\begin{equation}
J_f = \prod_{l=0}^{L-1} J^{(L-l)}=\prod_{l=0}^{L-1} diag[\sigma'(\bm{f}_{(L-l-1)})]\cdot{W^{(L-l)}}^\top.
\end{equation}
Since $\sigma'=sech^2(\bm{x})$ and let $D=diag(\sigma')$, we have $|D_{i,i}| \leq 1$. Then, as $\left|W^{(l)}\right|\leq a$, we have
\begin{equation}
\left|J_f\right| \leq a^L.
\end{equation}
\end{proof}

\begin{lemma}\label{lemma:bound_H_tanhNN}
For an $L$-layer tanh neural network $f(\bm{x},(\bm{W},\bm{b}))$ constructed by  bounded weights $\bm{W}=\{W^{(l)},|W^{(l)}|\leq a, l\in[L]\}$, bias $\bm{b}=\{b^{(l)},|b^{(l)}|\leq a, l\in[L]\}$ and activation function $\sigma=tanh(x)$, the norm of Hessian with respect to input vector $\bm{x}$ is bounded by,
\begin{equation}
\left|H_f\right|\leq 2a^{2L}L.
\end{equation}
\end{lemma}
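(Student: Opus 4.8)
\textbf{Proof proposal for Lemma~\ref{lemma:bound_H_tanhNN}.}

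The plan is to mirror the structure of the proof of Lemma~\ref{lemma:bound_J_tanhNN} but carry the differentiation one order further. First I would set up notation: write the layer recurrence $\bm{f}_l = \sigma({W^{(l)}}^\top \bm{f}_{l-1} + b^{(l)})$, let $\bm{z}_l = {W^{(l)}}^\top \bm{f}_{l-1} + b^{(l)}$ be the pre-activation, and recall the two derivative facts for $\sigma = \tanh$: $\sigma' = 1 - \sigma^2 = \operatorname{sech}^2$, so $|\sigma'| \le 1$, and $\sigma'' = -2\sigma(1-\sigma^2) = -2\sigma\sigma'$, so $|\sigma''| \le 2$ (indeed $|\sigma''| \le \tfrac{4}{3\sqrt 3} < 1$, but the crude bound $2$ suffices and matches the statement). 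As in the previous lemma, the layerwise Jacobian is $J^{(l)} = \operatorname{diag}[\sigma'(\bm{z}_l)]\,{W^{(l)}}^\top$, and the full Jacobian $J_f = \prod_{l=1}^{L} J^{(L-l+1)}$ satisfies $|J_f| \le a^L$ (and more generally the Jacobian of the composition of layers $l$ through $L$ is bounded by $a^{L-l+1}$, which I will use for the partial products).

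Next I would differentiate $J_f$ with respect to the input once more to get the Hessian. Using the product rule on $J_f = J^{(L)} J^{(L-1)} \cdots J^{(1)}$, the Hessian is a sum of $L$ terms, where in the $m$-th term we differentiate the factor $J^{(m)} = \operatorname{diag}[\sigma'(\bm{z}_m)]\,{W^{(m)}}^\top$. The derivative of $\operatorname{diag}[\sigma'(\bm{z}_m)]$ with respect to the input pulls down a $\sigma''$ factor (bounded by $2$) times another ${W^{(m)}}^\top$ (bounded by $a$) times the Jacobian of everything below layer $m$ (bounded by $a^{m-1}$). The remaining factors contribute: the layers above $m$ give a Jacobian bounded by $a^{L-m}$, and the two copies of the lower-layer Jacobian that appear (one from the chain rule through $\bm{z}_m$, one from the factors $J^{(m-1)}\cdots J^{(1)}$ already present) each contribute at most $a^{m-1}$. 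Collecting powers of $a$ in each of the $L$ summands gives a bound of the form $2\, a^{L-m} \cdot a \cdot a^{m-1} \cdot a^{m-1} = 2 a^{2L - m - 1 + \text{(correction)}}$; being careful with the exact bookkeeping, each summand is bounded by $2a^{2L}$ (the two nested copies of lower Jacobians plus the upper Jacobian and the $W^{(m)}$ factor account for the $2L$ exponent, absorbing lower-order slack into the $a \ge 1$ regime implicitly assumed, or one keeps the sharper exponent and notes $a^{\cdot} \le a^{2L}$ when $a \ge 1$). Summing the $L$ terms yields $|H_f| \le 2 a^{2L} L$.

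The main obstacle I anticipate is the precise bookkeeping of the powers of $a$: the Hessian of a depth-$L$ composition is genuinely a sum over which layer is "hit twice," and one must verify that every summand's accumulated weight-norm factors multiply to at most $a^{2L}$ rather than, say, $a^{2L-1}$ or $a^{2L+1}$. This requires being careful about (i) whether the bound $|W^{(l)}| \le a$ is an entrywise or operator-norm bound and how that interacts with the diagonal $\sigma''$ factor, and (ii) the implicit assumption $a \ge 1$ that lets one replace smaller powers of $a$ by $a^{2L}$; if $a < 1$ the bound is still true but for a slightly different reason (everything only shrinks). I would handle this by writing the Hessian explicitly via the chain rule, $H_f = \sum_{m=1}^{L} \big(\text{upper Jacobian}\big)\cdot \big(\partial_{\bm{x}} J^{(m)}\big)\cdot\big(\text{lower Jacobian}\big)$ with the $\partial_{\bm{x}} J^{(m)}$ term itself expanded as $\operatorname{diag}[\sigma''(\bm{z}_m)\odot(J_{<m}\cdot)]\,{W^{(m)}}^\top$, then bounding each factor by its established constant ($1$ for $\operatorname{diag}[\sigma']$, $2$ for $\sigma''$, $a$ for each $W$, $a^{k}$ for a $k$-layer Jacobian block) and checking the exponents sum correctly; the routine algebra beyond that I would not spell out in full.
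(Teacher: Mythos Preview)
Your proposal is correct and follows essentially the same route as the paper: the paper also writes the Hessian as a sum over layers, $H_f = \sum_{l=1}^{L} {J^{(1)}}^\top\cdots {J^{(l-1)}}^\top \big(J^{(L)}\cdots J^{(l+1)} H^{(l)}\big) J^{(l-1)}\cdots J^{(1)}$ with $H^{(l)} = \operatorname{diag}[\sigma''(\cdot)]\,W^{(l)}{W^{(l)}}^\top$, then bounds each of the $L$ summands by $2a^{2L}$ using $|\sigma''|\le 2$ and $|J^{(k)}|\le a$. Your derivation via the product rule on $J_f = J^{(L)}\cdots J^{(1)}$ unpacks exactly this formula, and your flagged caveat about the implicit $a\ge 1$ assumption (needed so that each term $2a^{L+l}\le 2a^{2L}$) is a subtlety the paper glosses over entirely.
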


\begin{proof}
Since $\sigma'(x)=1-\sigma^2(x)$ and $\sigma''(x)=-2\sigma(x)(1-\sigma^2(x))$, the Hessian of $f(\bm{x},(\bm{W},\bm{b}))$ can be expressed as
\begin{equation}
H^{(l)}= \frac{\partial^2\bm{f}_l}{\partial{(\bm{f}_{l-1})^2}}= { diag[\sigma''(\bm{f}_{l-1})] \cdot  W^{(l)} W^{(l)}}^\top.
\end{equation}
According to the lemma of expression for Hessian $H$ in terms of $J$ \citep{de2022error}, and $\left|\sigma''(x)\right|\leq 2$
\begin{equation}
H_f = \sum_{l=1}^L {J^{(1)}}^\top\cdots {J^{(l-1)}}^\top \cdot\left(J^{(L)} \cdots J^{(l+1)} H^{(l)}\right)\cdot J^{(l-1)}\cdots J^{(1)}.
\end{equation}
we can bound the $H_f$ by
\begin{equation}
\left|H_f\right|\leq 2a^{2L}L.
\end{equation}
\end{proof}

\begin{lemma}[Lipschitz
continuous of Jacobian and Hessian (Lemma 12, \citep{de2022error})]\label{lemma:lip_J_H_tanhNN}
Let $a,b \in \mathbb{R}$, for an $L$-layer tanh neural network $f(\bm{x},(\bm{W},\bm{b}))$ constructed by  bounded weights $\phi \in \{\bm{W},\bm{b}\}, \left|\phi\right|\leq a$ and activation function $\sigma=tanh(x)$, at most $W$ width, it holds that for any $\bm{x}\in[-b,b]^p$,
\begin{align*}
\left|J_\phi - J_{\phi'}\right| & \leq b(p+7)L a^{2L-1}W^{2L-2}2^{L}\left|\phi - \phi'\right|, \cr
\left|H_\phi - H_{\phi'}\right| & \leq b(p+7)L^2 a^{3L-1}W^{3L-3}2^{L+2}\left|\phi - \phi'\right|.
\end{align*}
\end{lemma}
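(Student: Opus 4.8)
The plan is to leverage the explicit structural representations of the input-Jacobian $J_f$ and input-Hessian $H_f$ established in Lemmas \ref{lemma:bound_J_tanhNN} and \ref{lemma:bound_H_tanhNN} --- namely $J_f = \prod_l J^{(l)}$ and $H_f = \sum_l (\cdots)\, H^{(l)}\, (\cdots)$ --- and to combine them with a telescoping product identity so that the parameter-Lipschitz bound reduces to per-layer factor estimates. Throughout, the regularity of $\sigma = \tanh$ supplies the constants: $|\sigma'| \leq 1$, $|\sigma''| \leq 2$, and $|\sigma'''|$ uniformly bounded, which is what ultimately produces the powers of $2$ in the stated bounds. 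Here $J_\phi$ and $J_{\phi'}$ denote the input-Jacobians of two networks whose parameter configurations $\phi,\phi'$ differ, and the goal is Lipschitz continuity of these maps in $\phi$.

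First I would establish a \emph{forward-pass Lipschitz estimate}: since $\sigma$ is $1$-Lipschitz and each $|W^{(l)}| \leq a$, a layer-by-layer recursion on $\bm{f}_l = \sigma(W^{(l)\top}\bm{f}_{l-1} + b^{(l)})$ gives $|\bm{f}_l(\phi) - \bm{f}_l(\phi')| \leq c_l |\phi - \phi'|$, where $c_l$ compounds a factor of roughly $aW$ per layer (operator-norm propagation of the perturbed activation) plus a direct term from the perturbed weight acting on the bounded activation. With this in hand I would bound the per-layer factors. Writing $J^{(l)}_\phi - J^{(l)}_{\phi'} = \mathrm{diag}[\sigma'(\bm{f}^{\phi}_{l-1})]W^{(l)\top}_\phi - \mathrm{diag}[\sigma'(\bm{f}^{\phi'}_{l-1})]W^{(l)\top}_{\phi'}$ and adding and subtracting a mixed term splits it into a contribution from the weight change (controlled by $\|\sigma'\|_\infty \leq 1$ times $|\phi-\phi'|$) and a contribution from the activation change (controlled by $|\sigma''| \leq 2$, the forward-pass estimate, and $|W^{(l)}| \leq a$). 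The identical decomposition, now invoking the boundedness of $\sigma'''$, yields the analogous Lipschitz estimate for $H^{(l)} = \mathrm{diag}[\sigma''(\bm{f}_{l-1})]W^{(l)}W^{(l)\top}$.

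Next I would assemble the global bounds by telescoping. For the Jacobian,
\[ J_\phi - J_{\phi'} = \sum_{l=1}^{L} \left(\prod_{k<l} J^{(k)}_\phi\right)\left(J^{(l)}_\phi - J^{(l)}_{\phi'}\right)\left(\prod_{k>l} J^{(k)}_{\phi'}\right), \]
and I would bound each summand by the product of the $L-1$ surviving factor norms (each of operator norm $\leq aW$, contributing $a^{L-1}W^{L-1}$) times the per-layer Lipschitz estimate from the previous step; summing over $l$ produces the factor $L$, the powers $a^{2L-1}$ and $W^{2L-2}$, and the factor $2^{L}$ from compounding the $|\sigma''| \leq 2$ bounds through the $c_l$. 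For the Hessian I would apply the same telescoping identity separately inside each of the $L$ terms of the $H_f$ representation; since each such term is a product of up to $2L$ Jacobian factors together with a single $H^{(l)}$ factor, differencing it in $\phi$ generates $\mathcal{O}(L)$ subterms, yielding the extra factor of $L$ (hence $L^2$), the powers $a^{3L-1}$ and $W^{3L-3}$, and the factor $2^{L+2}$ from the additional $\sigma''$ and $\sigma'''$ contributions.

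The main obstacle is the \emph{bookkeeping of constants} rather than any conceptual difficulty: one must track precisely how the powers of $a$, $W$, $2$, and $L$ accumulate through the nested products and the two layers of telescoping, and verify that the finitely many $\mathcal{O}(1)$ cross terms collapse into the stated $(p+7)$ prefactor --- the $p$ entering through the input-dimension sums implicit in the norms $|J_f|,|H_f|$ and the additive constant absorbing the mixed terms of the decomposition. Because this is exactly Lemma 12 of \citet{de2022error}, I would align my layer-wise constants with theirs rather than optimizing them; the genuine content lies entirely in the forward-pass Lipschitz estimate combined with the telescoping identity, and the remainder is mechanical.
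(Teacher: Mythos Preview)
The paper does not supply its own proof of this lemma: it is quoted verbatim as Lemma~12 of \citet{de2022error} and used as a black box in the proof of Lemma~\ref{lemma:lipsitch_E}. Your proposal is therefore not comparable to anything in the present paper, but it is a faithful sketch of the argument in the cited reference: the forward-pass Lipschitz recursion, the per-layer add-and-subtract decomposition of $J^{(l)}_\phi - J^{(l)}_{\phi'}$ and $H^{(l)}_\phi - H^{(l)}_{\phi'}$, and the telescoping-product identity are exactly the ingredients de~Ryck and Mishra use, and your identification of the constant bookkeeping (powers of $a$, $W$, $2$, and the $(p+7)$ prefactor) as the only real labor is accurate.
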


\begin{lemma}\label{lemma:lipsitch_E}
Let $a,b,N \in \mathbb{R}$, suppose that the employed PINN is constructed by the tanh neural network with bounded weights and biases $\phi \in [-a,a]^m$, at most $L$ layers and $W$ width. Moreover, suppose it adopts a smooth activation function $\sigma=tanh(x)=\frac{e^{-x}-e^x}{e^{-x}+e^x}$, and the input $\bm{x} = \{x_j\}_{j=1}^N$ where $\bm{x}_j \in[-b,b]^p$. When applying such a PINN to approximate the solution of training dynamics of VQAs with a fixed learning rate $\eta$. The Lipschitz constant $\mathcal{L}$ of training error $\mathcal{E}_T$ or generalization error $\mathcal{E}_G$ can be respectively bounded by
\begin{align}
\mathcal{L} & \leq \mathcal{O}\left(\text{poly}(b,p,L,\eta,a^L,W^L)\right).
\end{align}
\end{lemma}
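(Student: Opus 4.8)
\textbf{Proof proposal for Lemma~\ref{lemma:lipsitch_E}.}

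The plan is to bound the Lipschitz constant of the training/generalization error viewed as a function of the network parameters $\phi$, by differentiating through the PINN loss $\mathcal{L}=\lambda_D\mathcal{L}_D+\lambda_{P_1}\mathcal{L}_{P_1}+\lambda_{P_2}\mathcal{L}_{P_2}$ term by term and controlling each factor using the already-established bounds. Concretely, the error at a single data point is a sum of squared residuals, each residual being a polynomial expression in the network output $\hat{\mathcal{E}}$, $\hat{\bm\theta}$, their first derivatives $\partial_t\hat{\bm\theta}$, $\partial_{\bm\theta}\hat{\mathcal{E}}$, and (for $\mathcal{L}_{P_2}$) the Hessian $\partial^2_{\bm\theta}\hat{\mathcal{E}}$. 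The first step is to observe that the map $\phi\mapsto(\text{error})$ is Lipschitz whenever (i) each such building-block quantity is uniformly bounded on the compact domain $[-b,b]^p\times[-a,a]^m$, and (ii) each building-block quantity is itself Lipschitz in $\phi$. Property (i) for the outputs follows because $\tanh$ is bounded by $1$ and the affine maps have bounded weights, so $|\hat{\mathcal{E}}|,|\hat{\bm\theta}_j|=\mathcal{O}(a^L W^{L-1})$ or similar; property (i) for the Jacobian and Hessian \emph{with respect to the input} is exactly Lemmas~\ref{lemma:bound_J_tanhNN} and \ref{lemma:bound_H_tanhNN}, giving $|J_f|\le a^L$ and $|H_f|\le 2a^{2L}L$.

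Next I would establish property (ii). For the raw outputs this is the classical fact that an $L$-layer bounded-weight $\tanh$ network is Lipschitz in its parameters with constant $\mathrm{poly}(a^L,W^L)$ on a bounded input domain (this is standard and also implicit in the machinery of \cite{de2022error}). For the input-derivatives $J_\phi$ and $H_\phi$, property (ii) is precisely Lemma~\ref{lemma:lip_J_H_tanhNN}, which yields $|J_\phi-J_{\phi'}|\le b(p+7)L a^{2L-1}W^{2L-2}2^L|\phi-\phi'|$ and the analogous cubic-in-$a$, cubic-in-$W$ bound for the Hessian. Then I combine these via the product/chain rule for Lipschitz functions: if $g_1,\dots,g_k$ are each bounded by $B$ and each $C$-Lipschitz, then any fixed polynomial $P(g_1,\dots,g_k)$ of bounded degree is $\mathrm{poly}(B,C)$-Lipschitz. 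Applying this with $k$ a constant (at most the number of distinct derivative objects appearing, which is $O(p^2)$ because of the double sum $\sum_{j,k}$ in $\mathcal{L}_{P_2}$, but each individual summand involves only a constant number of factors) and bounding the sums over $j,k\in[p]$ by multiplying by $p$ or $p^2$, each of $\mathcal{L}_D$, $\mathcal{L}_{P_1}$, $\mathcal{L}_{P_2}$ becomes Lipschitz in $\phi$ with constant polynomial in $b,p,L,\eta,a^L,W^L$. The factor $\eta$ enters only through the $\tfrac{\eta}{2}$ coefficient of the Hessian term in $\mathcal{L}_{P_2}$, so it appears at most linearly. Summing the three (with fixed weights $\lambda_D,\lambda_{P_1},\lambda_{P_2}$) and, for the generalization error, noting that an expectation of Lipschitz functions is Lipschitz with the same constant, gives the claimed bound $\mathcal{L}\le\mathcal{O}(\mathrm{poly}(b,p,L,\eta,a^L,W^L))$.

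The main obstacle I anticipate is bookkeeping rather than conceptual: one must carefully track how the squaring in each residual term turns a bound-$B$, Lipschitz-$C$ quantity into a bound-$B^2$, Lipschitz-$(2BC)$ quantity, and how nesting the input-derivative bounds inside the residual (e.g. the triple product $\partial^2_{\bm\theta}\hat{\mathcal{E}}\cdot\partial_{\bm\theta}\hat{\mathcal{E}}\cdot\partial_{\bm\theta}\hat{\mathcal{E}}$ in $\mathcal{L}_{P_2}$) compounds the polynomial degrees in $a^L$ and $W^L$ — one gets exponents like $a^{6L}W^{6L}$ after squaring a cubic term, which is still $\mathrm{poly}(a^L,W^L)$ but must be written out to confirm no super-polynomial blowup. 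A secondary subtlety is that $\mathcal{L}_{P_1}$ and $\mathcal{L}_{P_2}$ mix derivatives of one output ($\hat{\mathcal{E}}$) with another output ($\hat{\bm\theta}$) of the \emph{same} network sharing parameters, so the Lipschitz-in-$\phi$ estimates for $\partial_t\hat{\bm\theta}_j$ (a time-derivative, i.e. a derivative with respect to one input coordinate) should be obtained from the same Lemma~\ref{lemma:lip_J_H_tanhNN} by treating $t$ as just another input coordinate; I would state this reduction explicitly to avoid re-deriving it. Once these are in place the final polynomial bound follows by summing finitely many terms.
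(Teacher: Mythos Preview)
Your proposal is correct and follows essentially the same strategy as the paper: reduce the Lipschitz bound of the squared-residual error to a boundedness estimate times a Lipschitz estimate of the residual itself, then control the residual via Lemmas~\ref{lemma:bound_J_tanhNN}, \ref{lemma:bound_H_tanhNN} (uniform bounds on $J_f$, $H_f$) and Lemma~\ref{lemma:lip_J_H_tanhNN} (Lipschitz-in-$\phi$ bounds on $J_\phi$, $H_\phi$). The differences are organizational rather than conceptual. First, the paper works only with the single PDE residual $\mathcal{R}[f_\phi]=\partial_t f_\phi-\mathcal{N}[f_\phi]$ written compactly in matrix form as $\mathcal{N}[f_\phi]=J_\phi^\top J_\phi-\tfrac{\eta}{2}J_\phi^\top H_\phi J_\phi$, whereas you treat all three loss components $\mathcal{L}_D,\mathcal{L}_{P_1},\mathcal{L}_{P_2}$ separately; your scope is broader but the paper's matrix formulation lets it write down explicit constants (the ``$A$'' and ``$B$'' decomposition) rather than invoking a generic polynomial-of-Lipschitz-functions rule. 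Second, for the boundedness factor $\max_\phi|\mathcal{R}[f_\phi]|$ the paper uses a neat shortcut---set $\phi'=0$ in the already-established Lipschitz bound on $\mathcal{R}$---while you argue boundedness directly from $|\tanh|\le 1$ and bounded weights; both work, but the paper's trick avoids a separate computation. Either route yields the same $\mathcal{O}(\mathrm{poly}(b,p,L,\eta,a^L,W^L))$ conclusion.
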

\begin{proof}
Since the analysis of $\mathcal{L}$ of $\mathcal{E}_T$ and $\mathcal{E}_T$ is similar, here we mainly focus on $\mathcal{E}_T$. As we select the square error as the loss function, i.e. 
\begin{equation}\label{eqn:lemmaD4-1}
\mathcal{E}_T(\phi) = \frac{1}{N}\sum_{j=1}^N \left(\mathcal{R}[f_\phi(x_j)]\right)^2=\frac{1}{N}\sum_{j=1}^N(\partial_t f_{\phi}(x_j) - \mathcal{N}[f_{\phi}(x_j)])^2,
\end{equation}
where $\mathcal{R}$ is residual of PDE, and $f_\phi$ is the PINN approximation. As $\mathcal{E}_T$ is differentiable, we have
\begin{equation}
\left|\mathcal{E}_{T}(\phi) - \mathcal{E}_T(\phi')\right| \leq 2\max_{\phi}\left|\mathcal{R}[f_{\phi}]\right|\left|\mathcal{R}[f_{\phi}]-\mathcal{R}[f_{\phi'}]\right|.
\end{equation}
For the $\left|\mathcal{R}[f_{\phi}]-\mathcal{R}[f_{\phi'}]\right|$ term, according to the chain rule for the derivative of a composite function, we have $J_\phi=\prod_{k=0}^{L-1}J_\phi^{L-k}$, $H_\phi=\sum_{k=0}^L (J_\phi^1)^\top\cdots(J_\phi^{k-1})^\top\cdot(J_\phi^L\cdots J_\phi^{k+1}H_{\phi}^k)\cdot J_\phi^{k-1}\cdots J_\phi^1$, where $J_\phi^{L-k}$ is the Jacobinan matrix at the $(L-k)$- the layer, and $H_\phi^k$ is the Hessian matrix at the $k$-th layer. For the training dynamic of VQAs with a fixed learning rate $\eta$, we can formulate it as a PDE as shown in Eq.~\eqref{eq:dynamic-E}, i.e. 
\begin{align}
\mathcal{N}[f_\phi] 
&= J_\phi^\top \cdot J_\phi - \frac{1}{2}\eta J_{\phi}^\top\cdot H_{\phi} \cdot J_{\phi}.
\end{align}
where $\mathcal{N}$ is the differential operator. As $\partial_t f_{\phi}$ can also be regarded as the Jacobian only for the variable $t$.
Thus, we have
\begin{align}
\left|\mathcal{R}[f_{\phi}]-\mathcal{R}[f_{\phi'}]\right| \leq  \left|J_{\phi} - J_{\phi'}\right| + \underbrace{\left|J_\phi^\top\cdot J_{\phi} - J_{\phi'}^\top\cdot J_{\phi'}\right|}_{A}+\frac{1}{2}\eta \underbrace{\left|J_{\phi'}^\top\cdot H_{\phi'}\cdot J_{\phi'} - J_{\phi}^\top\cdot H_{\phi} \cdot J_{\phi} \right|}_{B}.
\end{align}
Since the activiation function $\sigma=tanh(x)$, $|\sigma'|_{\infty} = 1 $ and $|\sigma''|_{\infty}\leq 1$, and based on the Lemma of Lipschitz continuity of Jacobian and Hessian (Lemma \ref{lemma:lip_J_H_tanhNN}), we can bound the $A$ term
\begin{align}
A&=\left|J_\phi^\top\cdot J_{\phi} - J_{\phi'}^\top\cdot J_{\phi'}\right| \leq \left(\left|J_{\phi}^\top\right| + \left|J_{\phi'}^\top \right|\right) \left|J_{\phi} - J_{\phi'}\right| \cr
&\leq b(p+7)L a^{3L-1}W^{2L-2}2^{L+2}\left|\phi - \phi'\right|.
\end{align}
Similarly, the term $B$ can be bounded by
\begin{align*}
B&=\left|J_{\phi'}^\top\cdot H_{\phi'}\cdot J_{\phi'} - J_{\phi}^\top\cdot H_{\phi} \cdot J_{\phi} \right|\cr
&\leq |J_{\phi'}^\top-J_{\phi}^\top|\left|H_{\phi'} - H_\phi\right|\left|J_{\phi'}-J_{\phi}\right| + \left|J_{\phi'}^\top\right| \left|H_{\phi'}-H_{\phi}\right| \left|J_{\phi}\right| \cr
&+ \left|J_{\phi}^\top\right| \left|H_{\phi'}\right| \left|J_{\phi'}-J_{\phi}\right|+\left|J_{\phi'}^\top-J_{\phi}^\top\right| \left|H_{\phi}\right| \left|J_{\phi}\right| \cr
&\leq b^5(p+7)^3 L^3 a^{5L-1} W^{5L-5} 2^{2L+4}\left|\phi-\phi'\right|.
\end{align*}
Thus, we have
\begin{align}\label{eq:lemma_lip_1}
\left|\mathcal{R}[f_{\phi}]-\mathcal{R}[f_{\phi'}]\right| &\leq \left(b(p+7)^3 L a^{2L-1} W^{2L-2} 2^{L}\right)\cr
&\times \left(1+4 a^L + \eta b^4 (p+7)^2 L a^{3L} W^{3L-3} 2^{L+3} \right)\left|\phi - \phi'\right|.
\end{align}
Besides, we can set $\phi'=0$ to bound $2 \max_\phi \left|\mathcal{R}[f_{\phi}]\right|$ in Eq.~(\ref{eqn:lemmaD4-1}, i.e.,
\begin{align}\label{eq:lemma_lip_2}
2 \max_\phi \left|\mathcal{R}[f_{\phi}]\right| &\leq \left(b(p+7)^3 L a^{2L-1} W^{2L-2} 2^{L}\right)\cr
&\times \left(a+4a^{L+1} + \eta b^4 (p+7)^2 L a^{3L+1} W^{3L-3} 2^{L+3} \right).
\end{align}
Combine Eq.~\eqref{eq:lemma_lip_1} and Eq.~\eqref{eq:lemma_lip_2}, we have 
\begin{align}
\mathcal{L} &\leq \left(b^2(p+7)^6 L^2 a^{4L-1} W^{4L-4} 2^{2L}\right)\left(1+4 a^L+ \eta b^4 (p+7)^2 L a^{3L} W^{3L-3} 2^{L+3} \right)^2 \cr
&=\mathcal{O}\left(\text{poly}(b,p,L,\eta,a^L,W^L)\right).
\end{align}
\end{proof}
\subsection{Generalization error analysis}\label{sec:ge}
We now present the theoretical analysis of the generalization performance of the PINN model on learning the training dynamics of VQAs. We first start with the following general setting, let $D\subset \mathbb{R}^d$ be a compact space and $u:D\rightarrow \mathbb{R}$ be the true solution for the training dynamics and $u_{\phi}:D \rightarrow \mathbb{R}$ be the PINNs approximation with parameters $W\in \mathbb{R}^d$. Let $\mathcal{S}=\{x_i\}_{i=1}^N$ be the independently sampled training data-set with probability measure $\mu$ over $D$. Here, we define the empirical risk $\mathcal{E}_T$ trained over $\mathcal{S}$ and expected risk $\mathcal{E}_E$ perspectively,
\begin{align*}
\mathcal{E}_T&=\frac{1}{\left|\mathcal{S}\right|}\sum_{j=1}^{\left|\mathcal{S}\right|}|u(x_j)-u_\phi(x_j)|^2,\cr
\mathcal{E}_E&=\int_D d\mu |u-u_\phi|^2.
\end{align*}
Here, we denote $\phi^*= \argmin_{\phi \in \mathbb{R}^m}\mathcal{E}_T$ as the optimal parameters of PINN over training set $\mathcal{S}$, then the generalization error can be decomposed as follows \citep{de2022error},
\begin{align}\label{eq:error_decomp}
\mathcal{E}_E(\phi^*)&\leq 
\sup_{\hat{\phi}\in \mathbb{R}^m}\left|\mathcal{E}_E(\phi^*)-\mathcal{E}_E(\hat{\phi})\right| +\sup_{\hat{\phi}\in \mathbb{R}^m}\left|\mathcal{E}_T(\hat{\phi})-\mathcal{E}_T(\phi^*)\right|\cr
&+ \sup_{\hat{\phi} \in \mathbb{R}^m}\left|\mathcal{E}_E(\hat{\phi})-\mathcal{E}_T(\hat{\phi})\right|+\mathcal{E}_T(\phi^*).
\end{align}
Based on this, we can utilize Hoeffding's inequality and the covering number to give an upper bound on the generalization error of PINN on learning VQAs' training dynamics.
\begin{corollary}
\label{thm:ge_pinn}
Let $L, W, p, m\in\mathbb{N}, c, k, \epsilon,\gamma, \eta>0$, and $\phi \in \left[-a,a\right]^m$ be the parameters of a tanh neural network with most $W$ width, $L$ hidden layers and activation function $\sigma$. Let $\mathcal{L}$ Lipschitz continuous of $\mathcal{E}_{E}$ and $\mathcal{E}_{T}$. The generalization error of PINN, that is trained over $\mathcal{S}=\{[(t_j,\bm{\theta}^{(j)}),(\mathcal{E}^{(j)},\bm{\theta}^{(j)})\}_{j=1}^\tau$, where $t_j$ and $\mathcal{E}^{(j)}$ are the time variable and loss vale at step $j$, respectively, $\bm{\theta}^{(j)} \in[-b,b]^p$ for approximating the training dynamics of VQAs with a fixed $\eta$ learning rate, with probability at least $1-\gamma$,
\begin{equation}
\mathcal{E}_E(\phi^*) - \mathcal{E}_T(\phi^*) \leq \sqrt{\frac{4c^2}{N}pLW^2\left(\ln\left(\frac{2a\mathcal{L}}{\epsilon}\right)+\ln\left(\frac{1}{\gamma}\right)\right)}.
\end{equation}
where $\mathcal{L}=\mathcal{O}\left(\text{poly}(b,p,L,\eta,a^L,W^L)\right)$, 
\end{corollary}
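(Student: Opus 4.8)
The plan is to establish Corollary~\ref{thm:ge_pinn} by a standard covering-number argument applied to the error decomposition in Eq.~\eqref{eq:error_decomp}, invoking the already-proved Lipschitz bound from Lemma~\ref{lemma:lipsitch_E} as the crucial ingredient. Concretely, the final term $\mathcal{E}_T(\phi^*)$ is the optimization/training error itself and is left on the left-hand side; the two middle terms $\sup_{\hat\phi}|\mathcal{E}_T(\hat\phi)-\mathcal{E}_T(\phi^*)|$ and $\sup_{\hat\phi}|\mathcal{E}_E(\phi^*)-\mathcal{E}_E(\hat\phi)|$ are controlled using the fact that $\phi^*$ is the empirical minimizer, so the whole difference $\mathcal{E}_E(\phi^*)-\mathcal{E}_T(\phi^*)$ is bounded by roughly twice $\sup_{\hat\phi}|\mathcal{E}_E(\hat\phi)-\mathcal{E}_T(\hat\phi)|$, the uniform deviation between empirical and expected risk over the hypothesis class.

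First I would discretize the parameter space: since $\phi\in[-a,a]^m$, there is an $\epsilon'$-net (covering) of this cube of cardinality at most $(2a/\epsilon')^m$. For each fixed $\phi$ in the net, the integrand $|u-u_\phi|^2$ (equivalently the squared PDE residual evaluated at the sampled collocation/data points) is a bounded random variable — boundedness follows from the amplitude bounds on the tanh network Jacobian and Hessian established in Lemmas~\ref{lemma:bound_J_tanhNN} and~\ref{lemma:bound_H_tanhNN} — so Hoeffding's inequality gives $|\mathcal{E}_E(\phi)-\mathcal{E}_T(\phi)|\le c\sqrt{\ln(2/\delta)/(2N)}$ with probability $1-\delta$ for a single $\phi$. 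A union bound over the net replaces $\ln(1/\delta)$ by $\ln(1/\gamma)+m\ln(2a/\epsilon')$. Then I would pass from the net back to the full class: for an arbitrary $\hat\phi$ pick the nearest net point $\phi$ with $|\hat\phi-\phi|\le\epsilon'$, and use Lipschitz continuity $|\mathcal{E}_E(\hat\phi)-\mathcal{E}_E(\phi)|\le\mathcal{L}\epsilon'$ (and likewise for $\mathcal{E}_T$) to absorb the discretization error; choosing $\epsilon'=\epsilon/\mathcal{L}$ makes this contribution negligible and produces the $\ln(2a\mathcal{L}/\epsilon)$ term inside the square root. The final step is to substitute $m=\tilde{\mathcal{O}}(LW^2)$ (the number of parameters of an $L$-layer width-$W$ network, up to the standard $L W^2$-type count used in \cite{de2022error}) and $N=\tau$, and to plug in $\mathcal{L}=\mathcal{O}(\mathrm{poly}(b,p,L,\eta,a^L,W^L))$ from Lemma~\ref{lemma:lipsitch_E}; note the $\ln$ tames the exponential-in-$L$ factors in $\mathcal{L}$ down to a $\mathrm{poly}(L)$ contribution, which is why only $\ln(p/\epsilon)$-type dependence survives in the informal Corollary~\ref{thm:informal main}. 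A final factor of $p$ appears because the residual loss aggregates $p$ coordinatewise terms, each contributing to the effective range constant $c$.

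The main obstacle I anticipate is making the boundedness constant $c$ for Hoeffding's inequality rigorous and scale-transparent: the residual $\mathcal{R}[f_\phi]$ involves products of Jacobians and Hessians of the network, so its sup-norm over $[-a,a]^m\times[-b,b]^p$ must be bounded using Lemmas~\ref{lemma:bound_J_tanhNN}--\ref{lemma:bound_H_tanhNN}, and one must check that the resulting bound, after squaring and entering $\ln(\cdot)$, only contributes polynomially — this is exactly the role the $\ln$ plays, but it needs to be stated carefully so the exponential $a^L$, $W^L$ factors from $\mathcal{L}$ do not leak outside the logarithm. A secondary technical point is justifying that Corollary~1 of \cite{de2022error} applies to our nonlinear operator $\mathcal{N}[f_\phi]=J_\phi^\top J_\phi - \tfrac12\eta\, J_\phi^\top H_\phi J_\phi$; this is precisely what Lemma~\ref{lemma:lipsitch_E} was constructed to supply, so the proof of the corollary reduces to quoting that lemma and the generic covering-number machinery rather than re-deriving it, keeping the argument short.
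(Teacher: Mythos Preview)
Your proposal is essentially the paper's proof: cover $[-a,a]^m$ at radius $\delta=\epsilon/(2\mathcal{L})$, apply Hoeffding at each net point, union-bound over the $(2a\mathcal{L}/\epsilon)^m$ centers, and plug in the Lipschitz constant from Lemma~\ref{lemma:lipsitch_E}. One correction: the factor $p$ in the final bound does \emph{not} enter through the range constant $c$ via ``aggregating $p$ coordinatewise residual terms''; in the paper it comes from the parameter count itself, since the input layer of the network has dimension $p+1$ and hence contributes $(p+1)W$ weights, giving $m\le 2pLW^2$ rather than $m=\tilde{\mathcal{O}}(LW^2)$.
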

According to the Corollary~\ref{thm:ge_pinn}, when we assume the training error $\mathcal{E}_T$ is small, the generalization error $\mathcal{E}_E$ for learning the training dynamics of VQAs can be bounded by a function which scales at $\mathcal{O}(\text{poly}\left(N,W,L,p\right))$. Besides, we also notice that the data size $N$ polynomially depends on the dimension of data $p$ to guarantee a small generalization error, which overcomes the curse of dimensionality and is also found in  \citep{de2022error}.
\begin{proof}
The main proof idea follows Corollary 1 of \citep{de2022error}. First, for arbitrary $\epsilon > 0$, assume  $\mathcal{E}_E(\phi)$ and $\mathcal{E}_T(\phi)$ are $\mathcal{L}$-lipschitz, we have $\{\phi_i\}_{i=1}^{\mathcal{N}}$ to cover the parameter space $\Phi$ with balls of radius $\delta$. Thus, we can bound the first two terms of Eq.~\eqref{eq:error_decomp},
\begin{align}
&\sup_{\hat{\phi}\in \mathbb{R}^m:|\phi-\hat{\phi}|\leq \delta}\left|\mathcal{E}_E(\hat{\phi})-\mathcal{E}_E(\phi^*)\right| +\sup_{\hat{\phi}\in \mathbb{R}^m:|\phi-\hat{\phi}|\leq \delta}\left|\mathcal{E}_T(\hat{\phi})-\mathcal{E}_T(\phi^*)\right| \\
& \leq \sup_{\hat{\phi}\in \mathbb{R}^m}\left|\mathcal{E}_E(\hat{\phi})-\mathcal{E}_E(\phi^*)\right| +\sup_{\hat{\phi}\in \mathbb{R}^m}\left|\mathcal{E}_T(\hat{\phi})-\mathcal{E}_T(\phi^*)\right|,
\end{align}
where 
\begin{equation}\label{eq:err_dcomp_1}
\sup_{\hat{\phi}\in \mathbb{R}^m:|\phi-\hat{\phi}|\leq \delta}\left|\mathcal{E}_E(\phi^*)-\mathcal{E}_E(\hat{\phi})\right| +\sup_{\hat{\phi}\in \mathbb{R}^m:|\phi-\hat{\phi}|\leq \delta}\left|\mathcal{E}_T(\hat{\phi})-\mathcal{E}_T(\phi^*)\right|  \leq \epsilon.
\end{equation}
Besides, as parameter space $\Phi$ is compact and $\delta$-covered by $\{\phi_i\}_{i=1}^{\mathcal{N}}$, thus for any $\phi_i, i\in[\mathcal{N}]$ we also have
\begin{align}\label{eq:err_dcomp_2}
\mathcal{E}_E(\phi^*)\leq \left|\mathcal{E}_E(\phi^*)-\mathcal{E}_E(\phi_i)\right| + \left|\mathcal{E}_T(\phi^*) - \mathcal{E}_T(\phi_i)\right| + \left|\mathcal{E}_E(\phi_i)-\mathcal{E}_T(\phi_i)\right|+\mathcal{E}_T(\phi^*).
\end{align}
As we can define a projection function $f_P$ that maps $\phi$ to its nearest cover center $\phi_i$, $f_P$ partition the parameter space $\Phi$ into $\mathcal{N}$ regions and $\forall \phi \in \Phi, \sum_i \mathcal{P}(f_P(\phi)=\phi_i)=1$. As $\mathcal{E}_E(\phi)=\mathbb{E}\left[\mathcal{E}_T(\phi)\right]$, we first employ the Hoeffding's equation to get 
\begin{equation}
\mathcal{P}(\mathcal{E}_E(\phi_j)-\mathcal{E}_T(\phi_j)\leq \epsilon | j\in[\mathcal{N}])\geq 1- \exp\left(\frac{-\epsilon^2 N}{2c^2}\right).
\end{equation}
Then, let the radius be $\delta=\epsilon/2\mathcal{L}$, then the covering number $\mathcal{N}$ can be bounded by $\left(2a\mathcal{L}/\epsilon\right)^m$. As such, we take a union bound over $\mathcal{N}$ and achieve
\begin{equation}
\mathcal{P}(\exists \phi_j, \mathcal{E}_E(\phi_j)-\mathcal{E}_T(\phi_j) \leq \epsilon) \geq 1- \left(\frac{2a\mathcal{L}}{\epsilon}\right)^m\exp\left(\frac{-\epsilon^2N}{2c^2}\right).
\end{equation}
and 
\begin{equation}
\mathcal{P}(\mathcal{E}_E(\phi^*)-\mathcal{E}_T(\phi^*) \leq \epsilon) \geq \mathcal{P}(\exists \phi_j, f_{P}(\phi^*)=\phi_j, \mathcal{E}_E(\phi_j)-\mathcal{E}_T(\phi_j) \leq \epsilon).
\end{equation} 
Thus, by combining them, we have
\begin{align}
\mathcal{P}(\mathcal{E}_E(\phi^*)-\mathcal{E}_T(\phi^*)\leq \epsilon) \geq 1- \left(\frac{2a\mathcal{L}}{\epsilon}\right)^m\exp\left(\frac{-\epsilon^2N}{2c^2}\right).
\end{align}
Therefore, we have a generalization error bound, with probability at least $1-\gamma$ as follows
\begin{equation}
\mathcal{E}_E(\phi^*) - \mathcal{E}_T(\phi^*) \leq \sqrt{\frac{2c^2}{N}m\left(\ln\left(\frac{2a\mathcal{L}}{\epsilon}\right)+\ln\left(\frac{1}{\gamma}\right)\right)}.
\end{equation}
If PINN is constructed using an $L$-layer tanh neural network with most $W$ width of each layer, it has most $(L-2)W^2+(p+1)W$ weights and $(L-1)W+1$ biases. Consequently, $m \leq 2pLW^2$. Then, using the Lemma \ref{lemma:lipsitch_E}, i.e. $\mathcal{L} \leftarrow \mathcal{O}\left(\text{poly}(b,p,L,a,W)\right)$, we have
\begin{equation}
\mathcal{E}_E(\phi^*) - \mathcal{E}_T(\phi^*) \leq \sqrt{\frac{4c^2}{N}pLW^2\left(\ln\left(\frac{2a\mathcal{O}\left(\text{poly}(b,p,L,\eta,a^L,W^L)\right)}{\epsilon}\right)+\ln\left(\frac{1}{\gamma}\right)\right)}.
\end{equation}
\end{proof}
 
\section{Details of Experiments}\label{sec:appendix details of experiments}
\subsection{Computational resources for all experiments}
Most of the simulations were run on Dual NVIDIA GeForce RTX 4090 GPUs with a 96-core AMD EPYC 9654 Processor and 256 GiB of memory.

\subsection{Variational quantum ansatz in VQE}\label{sec:appendix details of ansatz}

\paragraph{Hardware-Efficient Ansatz (HEA)}
HEAs are a class of variational quantum circuits whose structure is primarily dictated by the connectivity and native gate operations available on a specific quantum computing hardware platform. The HEA typically consists of a repetitive structure of single-qubit rotation gates and fixed entangled gates that can be implemented directly and efficiently on the target hardware, often without requiring complex gate decompositions or extensive qubit routing \cite{leone2024practical}. Concretely, it can be expressed as  
\begin{equation}  
U_{\text{HEA}}(\boldsymbol{\theta}) = \prod_{l=1}^{L} \left( \prod_{i=1}^{n} R_{i,l} (\theta_{i,l}) \prod_{(i,j)\in E} U_{\text{ent}}^{(i,j)} \right),  
\end{equation}  
where $R_{i,l}(\theta_{i,l})$ refers to single-qubit rotation gates at $l$-th layer acting on $i$-th qubit, $U_{\text{ent}}^{(i,j)}$ is entanglement gate applied to pairs of qubits $(i,j)$ that are connected according to a predefined graph $E$ that typically reflects the physical connectivity of the qubits on the quantum hardware, ensuring that the entangling gates are applied only to directly connected qubits. In our experiment, we use $R_y$ and $R_z$ gates for single-qubit rotations and CZ gates for building the $L$-layer HEA with $2nL$ variational parameters.

\paragraph{Hamiltonian Variational Ansatz (HVA)} HVA is a class of parameterized quantum circuits, the structure of which is inspired by the time evolution operator under the given Hamiltonian $H=\sum_k H_k$, often constructed as a sequence of exponential terms in the Hamiltonian \cite{park2024hamiltonian}. By parameterizing the evolution time or related coefficients, the HVA explores the quantum state space in a way that is naturally aligned with the system dynamics, potentially leading to efficient encoding of low-energy states. Generally, it can be written as
\begin{equation}
U_{\text{HVA}}(\bm{\theta}) = \prod_{l=1}^{L} \left( \prod_{k=1}^{K} e^{-i \theta_{k,l} H_k} \right),
\end{equation}
if $H_k$ is Pauli strings, each evolution operator $e^{-i \theta_{k,l} H_k}$ can be implemented using a sequence of $\{\text{H}, \text{S}, \text{S}^\dagger, \text{CNOT}, R_z\}$. For instance, if $H_k = XYZ$, the circuit implementation of $e^{-i X\otimes Y\otimes Z}$as shown in Fig.~\ref{fig:exp xyz}. The number of layers $L$ controls the expressivity of the ansatz. This form directly incorporates the structure of the problem's Hamiltonian into the design of the variational circuit.
\begin{figure}[!h]
  \centering
\includegraphics[width=0.6\columnwidth]{./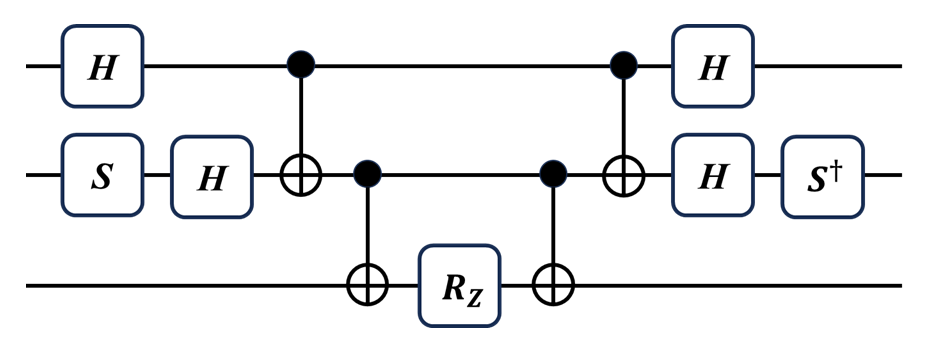}
  \caption{The circuit implementation of $e^{-i X\otimes Y\otimes Z}$.}
  \label{fig:exp xyz}
\end{figure}

\paragraph{Unitary Coupled-Cluster Singles and Doubles (UCCSD) Ansatz}
The UCCSD ansatz is a chemistry-inspired variational quantum circuit widely used in quantum computational chemistry \cite{barkoutsos2018quantum,guo2024experimental,boy2025energy}. The electronic structure Hamiltonian in quantum chemistry is expressed in second quantization as  
\begin{equation}  
H = \sum_{pq} h_{p,q} \hat{a}_p^\dagger \hat{a}_q + \frac{1}{2} \sum_{p,q,r,s} h_{pqrs} \hat{a}_p^\dagger \hat{a}_q^\dagger \hat{a}_r \hat{a}_s,  
\end{equation}  
where $\hat{a}_p^\dagger$ and $\hat{a}_q$ are fermionic creation and annihilation operators, and $h_{pq}$, $h_{pqrs}$ represent one- and two-electron integrals encoding kinetic energy, nuclear attraction, and electron-electron repulsion. The variational wavefunction is given by  
\begin{equation}  
|\Psi(\boldsymbol{\theta})\rangle = e^{T - T^\dagger} | \Phi_0 \rangle,  
\end{equation}  
where $|\Phi_0\rangle$ is the Hartree-Fock state, and $T = T_1 + T_2$ consists of single and double excitation operators:  
\begin{equation}  
T_1 = \sum_{i,m} \theta_{i}^m \hat{a}_m^\dagger \hat{a}_i, \quad  
T_2 = \sum_{i,j,m,n} \theta_{i,j}^{m,n} \hat{a}_n^\dagger \hat{a}_m^\dagger \hat{a}_j \hat{a}_i.  
\end{equation}  
Here, $i,j$ index occupied orbitals, $m,n$ index virtual orbitals, and $\boldsymbol{\theta}$ denotes variational parameters. The Jordan-Wigner transformation maps fermionic operators $\hat{a}$ and $\hat{a}^\dagger$ onto qubit operators, ensuring preservation of anticommutation relations and enabling implementation on quantum hardware.  
In our experiment, we use the BeH$_2$ molecule as an example. The mapped Hamiltonian requires 14 qubits, and the UCCSD ansatz involves 90 variational parameters.

\subsection{Details of benchmarks}
\label{sec: appendix details of benchmarks}
\paragraph{Long-short Term Memory (LSTM)}
The LSTM model employed in our study adopts a standard recurrent architecture, specifically tailored for sequence modeling and parameter optimization tasks. It consists of an LSTM layer with one hidden layer, where the input size corresponds to the number of variational parameters $p$, and the hidden size is set to $50 \times p$ to enhance its representational capacity. The model takes as input a sequence of past optimization states with a predefined sequence length $\tau_{\text{LSTM}}$, allowing it to learn temporal dependencies in parameter evolution. It processes input sequences in a batch-first manner to ensure efficient training. The final hidden state of the LSTM, corresponding to the last time step, is passed through a fully connected linear layer to produce the output, which has the same dimensionality as the input parameters. This structure enables the model to leverage past optimization information effectively to enhance parameter updates. 

\textbf{QuACK.}
For the QuACK model, we adopt the specific implementation of Dynamic Mode Decomposition (DMD) as proposed in \cite{luo2024QuACK}. This approach leverages the Koopman operator learning algorithm to find an appropriate embedding space where the system dynamics can be approximated as linear. By mapping the variational parameter updates into this learned representation, QuACK enables more efficient optimization within the VQA framework. 
In our implementation, we define the number of samples used per training iteration as $\tau_{\text{QuACK}}$, which determines the number of past optimization steps considered for learning the underlying dynamical structure. This parameter plays a crucial role in capturing the temporal evolution of variational parameters while ensuring the stability and generalization ability of the learned model.

\subsection{Details of experimental setup}\label{sec:appx_details_of_expm}

\paragraph{Estimation of Shot Numbers for Measurement}
We now estimate the measurement on a real quantum computer. The estimation strategy follows the approach outlined in \cite{peruzzo5variational}, where the number of the Pauli strings in a Hamiltonian is denoted by $M$, and the target accuracy for the expected value of the measurement is $\varepsilon$. The required number of shots for measuring the expected value of the Hamiltonian is $\mathcal{O}(M / \varepsilon^2)$. Therefore, the required number of shots for one VQE iteration, given $p$ as the number of variational parameters, can be estimated as $2 \times p \times M / \varepsilon^2$. We use $\varepsilon = 1 \times 10^{-3}$ in our specific calculation.

\paragraph{Performance on Different Ansatz} 
In the experimental setup of the 12-qubit TFIM, the 3-layer HEA has a total of $2 \times 12 \times 3$ variational parameters. For the 14 qubits $\text{BeH}_2$ system, the USCCSD ansatz involves 90 variational parameters. In both experiments, the network architecture and training procedure of PALQO follow the standard settings described in Appendix~\ref{sec:implementation_detail}, with a maximum training epoch of $T_{\text{epoch}} = 3400$ and $\tau=2$ training samples per cycle. The maximum number of LSTM training iterations is $T_{\text{epoch}} = 2000$, with $\tau_{\text{LSTM}}=3$ training samples per cycle. Additionally, the number of samples $\tau_{\text{QuACK}}=3$ is used by QuACK.

\textbf{Scalability}
In this experiment, the number of variational parameters in an $n$-qubit TFIM with an $L$-layer HEA is given by $p = 2 \times n \times L$, where $L=\{2,3,4,5,6,7,8\}$. In experiments conducted with $n=4$ to $n=40$ qubits using a fixed 3-layer HEA, the network architecture in PALQO follows the settings in Appendix~\ref{sec:implementation_detail}, with the maximum number of training epochs $T_{\text{epoch}}$ set to \{3000, 3000, 3000, 3500, 3500, 3500, 3500, 4000, 4000, 4000\}. Additionally, the number of samples used in the first cycle is set to $\tau=1$ for the 4-qubit system. For systems with sizes between 4 and 12, $\tau=2$ samples are employed in subsequent cycles, while for larger systems with sizes ranging from 16 to 40, $\tau=3$ samples are used.
In experiments with a fixed 12-qubit system and varying HEA layers from 2 to 8, the maximum number of training epochs $T_{\text{epoch}}$ follows \{3000, 3000, 3500, 3500, 3500, 3500, 4000\}. In this setting, except for the first cycle, the number of training samples used per cycle remains $\tau = 2$.

\section{Additional Numerical Experiments}\label{sec:appendix additional experiments}
In this section, we present additional numerical experiments to further validate the superior performance of PALQO. Specifically, we evaluate its effectiveness in three representative tasks: the XXZ model, the LiH molecule, and a quantum machine learning (QML) classification problem. We also examine the robustness of PALQO in the presence of quantum noise. Moreover, our results indicate that PALQO can be effectively integrated with resource-saving techniques, such as measurement grouping, to further reduce quantum resource consumption during the VQA optimization process.

\subsection{XXZ and LiH}
We present the additional numerical experiments of performance comparisons of PALQO on 12 qubits XXZ with HVA, and 14 qubits LiH with UCCSD ansatz for varying structural parameters are presented in Fig.~\ref{fig:XXZ LiH}. The results demonstrate that PALQO achieves lower $\Delta E$ and higher speed ratio in most cases. In the case of  $J=1, \delta = 0.5$, PALQO exhibits a comparable speedup ratio to the reference methods, primarily due to the smaller energy gap in this setting, which makes the optimization landscape more challenging and hinders PALQO's convergence efficiency.
\begin{figure}[!h]
\begin{center}
\centerline{\includegraphics[width=1\columnwidth]{./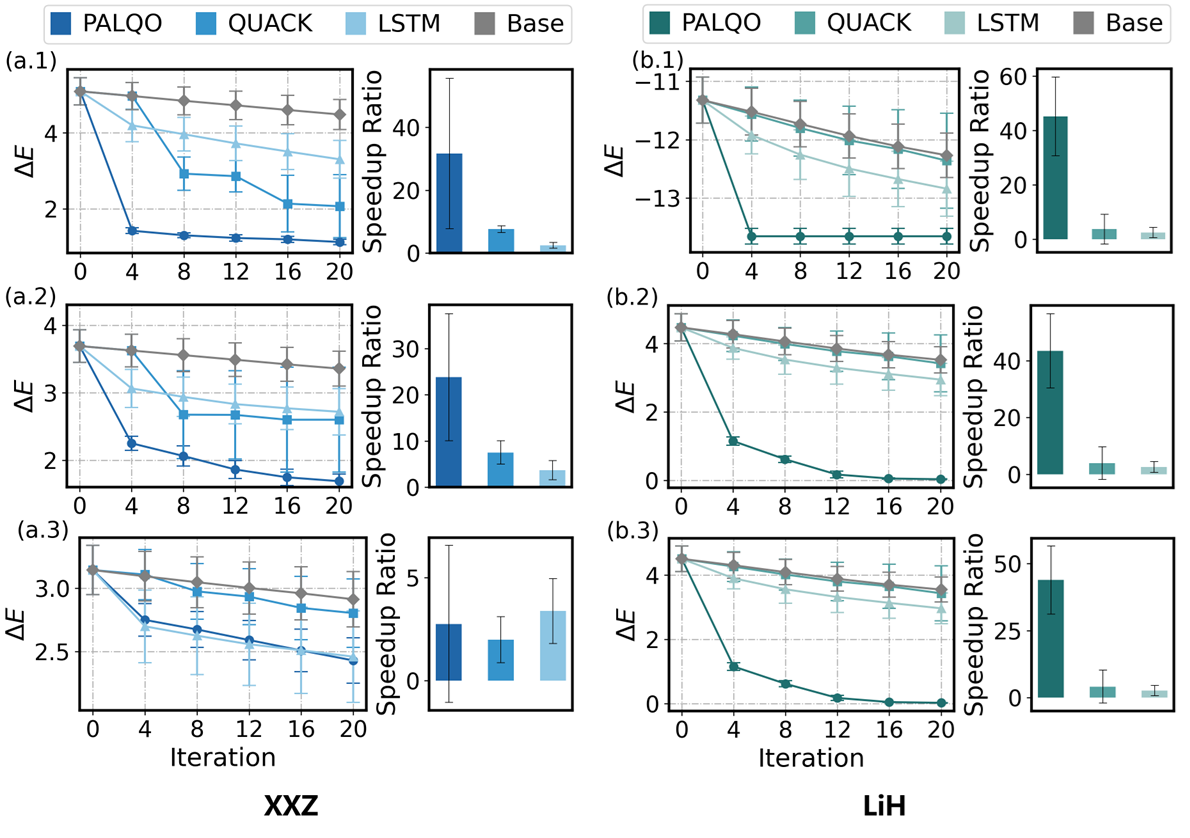}}
\caption{\small Performance comparison between PALQO and the reference models in XXZ with HVA and 12-qubit LiH with UCCSD ansatz. Each subplot comprises a $\Delta E$ curve over iterations performed on a quantum device, along with a bar chart depicting the speedup ratios achieved by PALQO and competing models. The left column illustrates results for XXZ with $J=J'=1,\delta=\{2,1,0.5\}$. The right column displays the model performance on LiH\(2\) with the bond length $b = \{1.4,1.5,1.6\}$. }
\label{fig:XXZ LiH}
\end{center}
\end{figure}

\subsection{Quantum machine learning}
To further assess the applicability of PALQO in other VQAs like quantum neural network (QNN), we conduct experiments on a classification task. Based on Eq.~\eqref{eq:dynamics qnn}, we rebuild PALQO for QNN with the reformulated cost function. We construct the 4-qubit QNN with 3-layer HEA and measurement observable $O = I\otimes I \otimes Z \otimes Z$ as the baseline model, and employ the quantum circuit shown in Fig.~\ref{fig:qnn encoder} as the feature encoder to map classical input data into quantum states. The performance comparison between PALQO and the baseline model on a classification task over the Iris dataset \cite{iris_53} is shown in Fig.~\ref{fig:qnn results}. In Fig.~\ref{fig:qnn results}~(a), it shows that PALQO achieves significantly lower loss values than the baseline throughout the iterations. During the initial optimization phase, PALQO is capable of more rapidly reaching the points with lower loss, which in turn reduces the optimization steps. In Fig.~\ref{fig:qnn results}~(b), as PALQO can more swiftly attain lower loss values, it reaches an average accuracy over 90\% by the 120 steps, significantly outperforming the baseline model, which only achieves 75\%. Therefore, it indicates that the robust applicability of PALQO while exhibiting favorable performance.

\begin{figure}[!h]
\begin{center}
\centerline{\includegraphics[width=.7\columnwidth]{./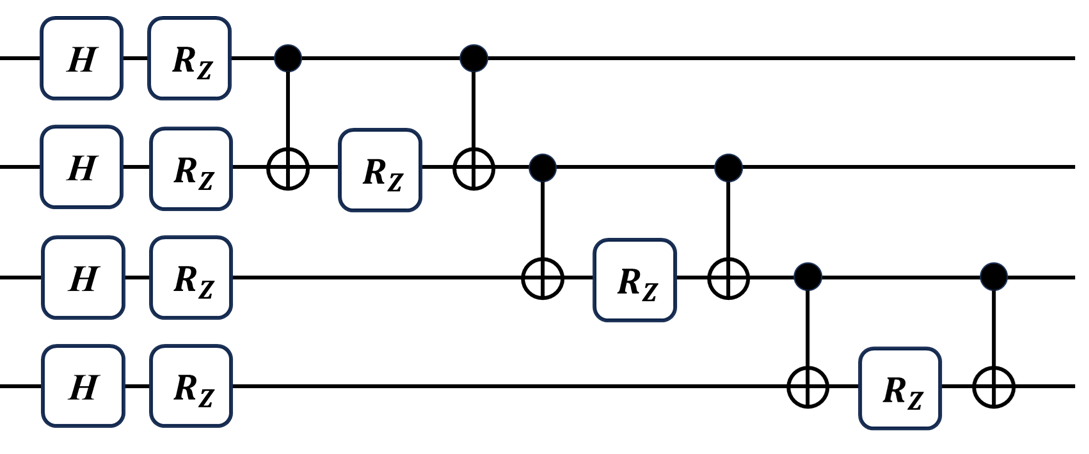}}
\caption{The illustration of a quantum encoder circuit. It employs an instantaneous quantum polynomial (IQP) encoding strategy for QNN \cite{havlivcek2019supervised}, in which data features are embedded into the rotation angles of parameterized quantum gates such as $R_x, R_z$. In our implementation, the Iris dataset features $\bm{x}=(x_0,\cdots,x_7)$  are individually encoded into the rotation angles of 7 corresponding parameterized gates.}
\label{fig:qnn encoder}
\end{center}
\end{figure}

\begin{figure}[!h]
\begin{center}
\centerline{\includegraphics[width=\columnwidth]{./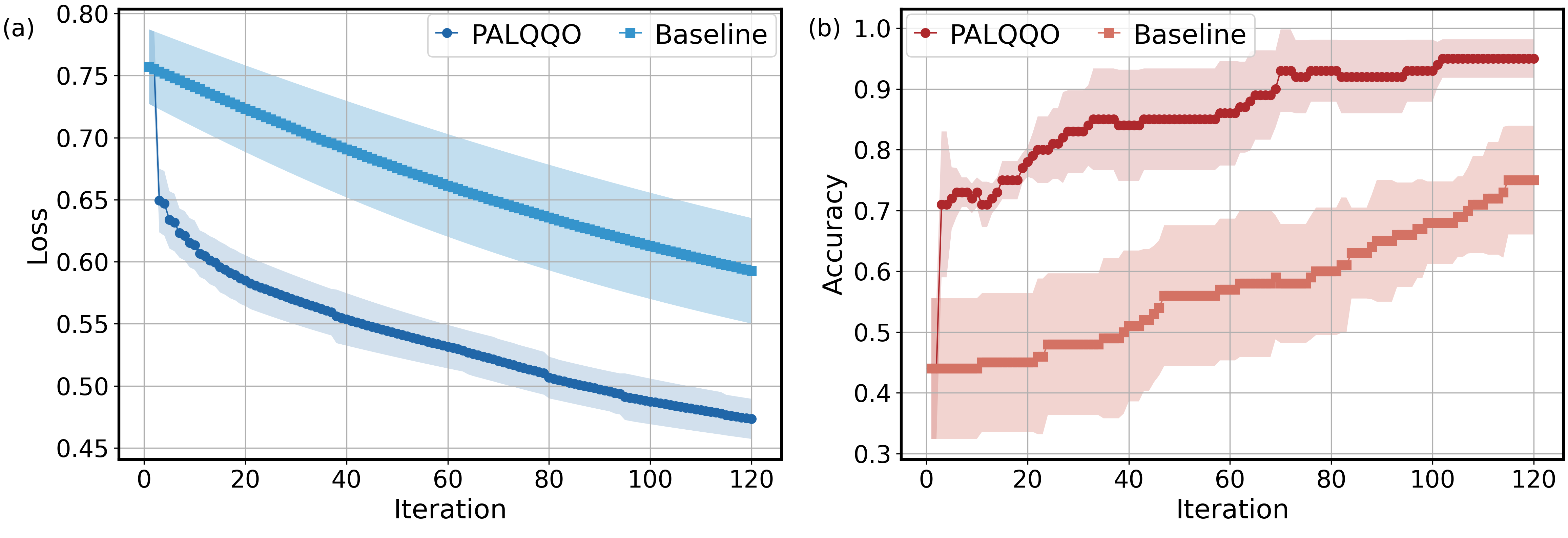}}
\caption{Performance comparison between PALQO and the baseline method on a quantum machine learning classification task using the Iris dataset. (a) The loss curve between PALQO and the baseline model. (b) The accuracy curve of PALQO versus the baseline model over the iterations. Shaded regions refer to the range of the loss and accuracy over multiple runs. }
\label{fig:qnn results}
\end{center}
\end{figure}

\subsection{Performance under noise}
\begin{figure}[!h]
\begin{center}
\centerline{\includegraphics[width=0.7\columnwidth]{./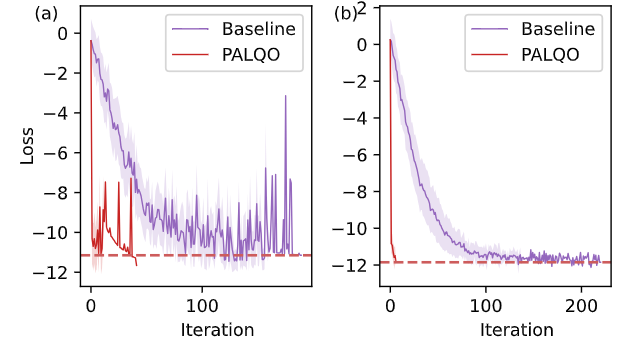}}
\caption{Performance of PALQO under noise conditions. (a) Optimization results with 5\% depolarizing noise. (b) Performance under shot noise with a shot count of 100.}
\label{Fig.4}
\end{center}
\end{figure}
We further assess the robustness of PALQO in the presence of noise, specifically evaluating its performance on a 12-qubit TFIM with a 3-layer HEA. In this experiment, we test ten randomly initialized sets of variational parameters for each noise scenario.  
The results are presented in Fig.~\ref{Fig.4}. Despite the presence of noise, PALQO consistently demonstrates strong performance, highlighting its robustness and practical applicability in realistic quantum environments.

\subsection{Complement to measurement optimization}
Here, we provide the results of the complementary experiments of PALQO and measurement grouping on 20-qubit TFIM, 12-qubit LiH, and 14-qubit BeH\(2\), which demonstrate that PALQO offers a valuable complement to existing strategies for further enhancing the optimization efficiency of VQAs. Measurement grouping strategically reduces the number of distinct measurements by exploiting the commutativity of Hamiltonian terms, thereby enabling the simultaneous measurement of multiple observables. Thus, PALQO can seamlessly incorporate measurement grouping into the overall framework. As shown in Fig.~\ref{fig:complementary}, rather than replacing grouping strategies, our method works in tandem with them, offering a multi-faceted approach to further reduce the quantum resource burden. 
\begin{figure}[!h]
\begin{center}
\centerline{\includegraphics[width=0.5\columnwidth]{./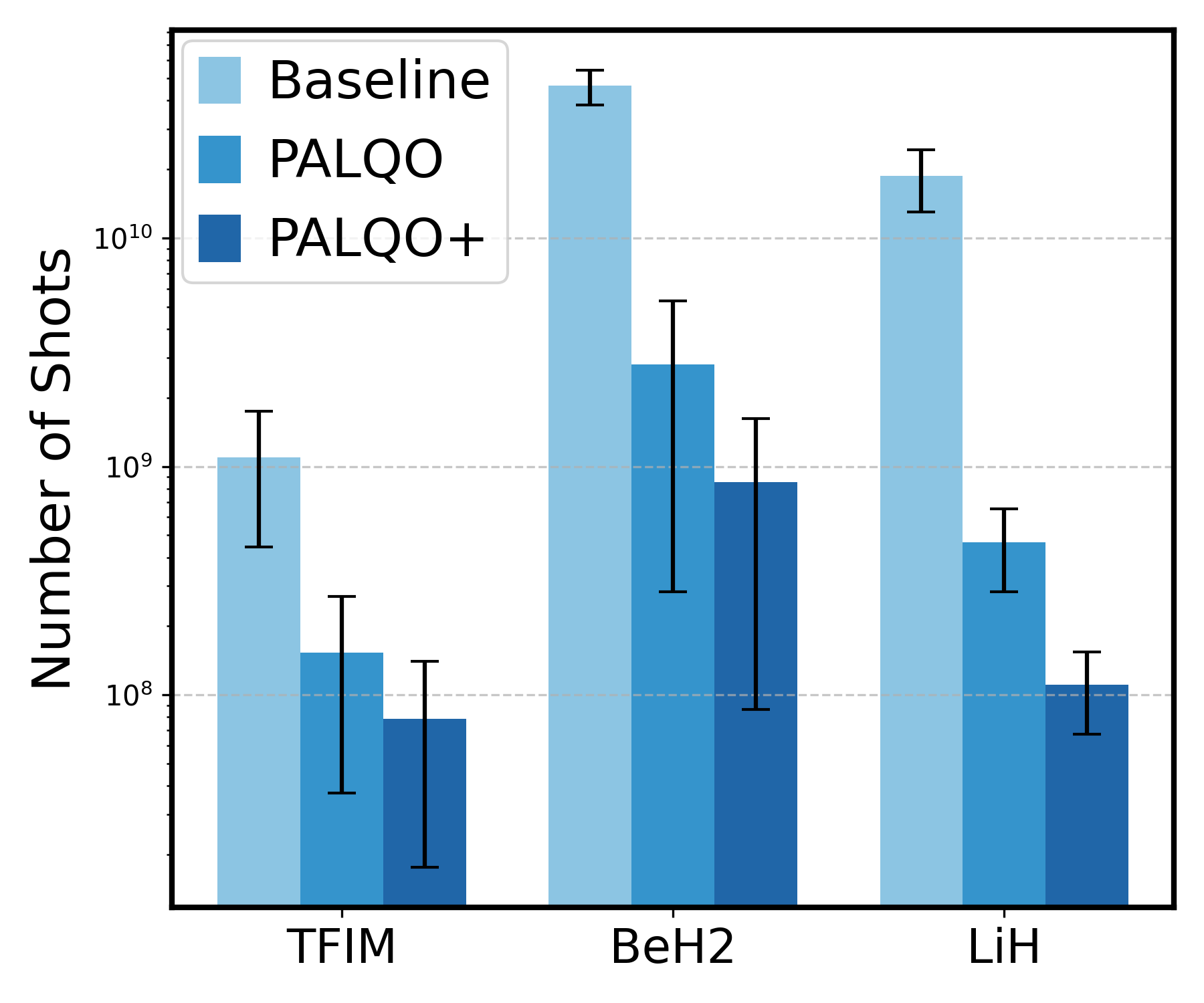}}
\caption{The measurement shots of PALQO, combined with measurement grouping, are evaluated on tasks including the TFIM, LiH, and BeH\(2\). PALQO$+$ refers to the PALQO enhanced by measurement grouping.}
\label{fig:complementary}
\end{center}
\end{figure}


\end{document}